\documentclass[11pt,journal,onecolumn]{IEEEtran}
\usepackage{graphicx}
\usepackage{amsmath,amssymb,amsfonts,bm,algorithm,color,fmtcount,algorithmic,epsfig,subfigure,setspace,epstopdf}
\usepackage[finalnew]{trackchanges}
\usepackage[center]{caption}

\setlength\unitlength{1mm}

\long\def\comment#1{}


\newfont{\bbb}{msbm10 scaled 700}

\newfont{\bb}{msbm10 scaled 1100}

\newcommand{\RR}{\mbox{\bb R}}


\newcommand{\fv}{{\bf f}}
\newcommand{\gv}{{\bf g}}

\newcommand{\uv}{{\bf u}}
\newcommand{\wv}{{\bf w}}

\newcommand{\zerov}{{\bf 0}}
\newcommand{\onev}{{\bf 1}}


\newcommand{\Am}{{\bf A}}

\newcommand{\Dm}{{\bf D}}

\newcommand{\Gm}{{\bf G}}
\newcommand{\Hm}{{\bf H}}
\newcommand{\Id}{{\bf I}}
\newcommand{\Jm}{{\bf J}}

\newcommand{\Lm}{{\bf L}}

\newcommand{\Pm}{{\bf P}}

\newcommand{\Tm}{{\bf T}}
\newcommand{\Um}{{\bf U}}

\newcommand{\Lam}{\bm{\Lambda}}


\newcommand{\Bc}{{\cal B}}

\newcommand{\Gc}{{\cal G}}

\newcommand{\Lc}{{\cal L}}

\newcommand{\Nc}{{\cal N}}
\newcommand{\Oc}{{\cal O}}

\newcommand{\Vc}{{\cal V}}

\newcommand{\Lcb}{\bm{\mathcal{L}}}






\addeditor{SN}
\doublespacing
\title{Compact Support Biorthogonal  Wavelet Filterbanks for Arbitrary Undirected Graphs}
\author{Sunil~K.~Narang*,~\IEEEmembership{Student~Member,~IEEE,}
        and~Antonio~Ortega,~\IEEEmembership{Fellow,~IEEE}
\thanks{Sunil K. Narang is with the Signal \& Image Processing Institute, Ming Hsieh Department of Electrical
Engineering, University of Southern California, Los Angeles, California, USA 90089 email: narang.sunil@gmail.com.}
\thanks{Antonio Ortega is with the Signal \& Image Processing Institute, Ming Hsieh Department of Electrical
Engineering, University of Southern California, Los Angeles, California, USA 90089 email: antonio.ortega@sipi.usc.edu.}
\thanks{This work was supported in part by NSF under grant CCF-1018977.}}
%

\newtheorem{lemma}{Lemma}

\newtheorem{theorem}{Theorem}[section]
\newtheorem{proposition}{Proposition}

\definecolor{lightgray}{gray}{0.8}%
\newtheorem{corollary}[theorem]{Corollary}
\newenvironment{proof}[1][Proof]{\begin{trivlist}

\item[\hskip \labelsep {\bfseries #1}]}{\end{trivlist}}

\begin{document}

\maketitle
\begin{abstract}
In our recent work, we proposed
the design of 
perfect reconstruction orthogonal wavelet filterbanks, called 
{\em graph-QMF},  
for arbitrary undirected weighted graphs. In that
formulation we first designed ``one-dimensional'' two-channel filterbanks on 
bipartite graphs, and then extended them to ``multi-dimensional'' 
separable two-channel filterbanks for 
arbitrary graphs via a bipartite subgraph decomposition. 
We specifically designed wavelet filters based on 
the spectral decomposition of the graph, and stated 
necessary and sufficient 
conditions for a two-channel graph filter-bank 
on bipartite graphs 
to provide aliasing-cancellation, 
perfect reconstruction and orthogonal set of basis (orthogonality). 
While, the exact {\em graph-QMF} designs satisfy all the above conditions, 
they are not exactly $k$-hop localized on the graph.
In this paper, we relax the condition of orthogonality  
to design  a biorthogonal pair of graph-wavelets that can have
compact spatial spread and still satisfy the 
perfect reconstruction conditions. The design is analogous to the 
standard Cohen-Daubechies-Feauveau's (CDF) 
construction of factorizing a maximally-flat 
Daubechies half-band filter. Preliminary 
results demonstrate that the proposed filterbanks can be useful for both standard 
signal processing applications as well as 
for signals defined on arbitrary graphs. \\
{\bf Note:} Code examples from this paper are available at 
http://biron.usc.edu/wiki/index.php/Graph\_Filterbanks
\end{abstract}
\begin{center} \bfseries EDICS Category: DSP-WAVL, DSP-BANK, DSP-MULT, DSP-APPL, MLT \end{center}
\IEEEpeerreviewmaketitle
\section{Introduction}
\label{sec:intro}
\subsection{Motivation}
Graphs provide a flexible model for representing data in many domains
such as networks, computer vision, and high dimensional data-clouds.
The data on these graphs can be visualized as a finite collection of
samples, leading to a {\em graph-signal}, which can be defined as the
information attached to each node (scalar or vector values mapped to
the set of vertices/links) of the graph.  Examples include measured
values by sensor network nodes or traffic measurements on the links of
an Internet graph.  The formulation of datasets as graph-signals has
been subject to a lot of study recently, especially for the purpose of
analysis, compression and storage.  
Major challenges are posed by the size of these datasets, making them 
difficult to visualize, process or analyze. This has led to a recent 
interest in extending wavelet techniques to signals defined on
graphs. These techniques can provide multiresolution representations, 
so that smaller graphs with smooth approximations of the original
graph-signals can be obtained and used for processing. 
Moreover, these signal representations can be used 
to develop local
analysis tools, so that 
a graph-signal can be processed ``locally'' around a node or vertex, using data from a small neighborhood
of nodes around the given node. 
The trade-off between spatial and frequency localization
is fundamental in the study of wavelet transforms for regular signals.
This trade-off is being studied for graph signals as 
well~\cite{Hammond'09,SunilTSP,
  agaskar_icassp}. We can say that a graph signal, for example an elementary basis
function in a wavelet representation, is localized in the  {\em vertex domain} if most of the signal energy is concentrated in a 
{\em $k$-hop localized} neighborhood around a vertex, so that the
degree of localization will depend on how large $k$ is. 
Similarly, a graph signal can be said to be localized in the {\em
  spectral domain}, defined in terms of the eigenvalues and
eigenvectors of the {\em graph Laplacian matrix}, if the projection of
the signal onto these eigenvectors has most of its energy
concentrated in a band of graph-frequencies around a center frequency.

There has been a significant recent interest in the extension of
wavelet transforms to graph signals, including wavelets
on unweighted graphs for analyzing computer network traffic
\cite{Crovella'03}, diffusion wavelets and diffusion wavelet packets
\cite{Coifman'06, Maggioni_biorthogonal,bremer_packets}, the
``top-down'' wavelet construction of \cite{szlam}, graph dependent
basis functions for sensor network graphs~\cite{Ramchandran'06},
lifting based wavelets on graphs \cite{GodwinJ,Silverman,
  narang_lifting_graphs}, multiscale wavelets on balanced trees
\cite{gavish}, spectral graph wavelets \cite{Hammond'09}, and our
recent work on graph wavelet filterbanks \cite{SunilTSP}. 

In this paper, we focus on the problem of designing wavelet transforms
that are {\em invertible, compactly supported on the graph and
 critically sampled (CS).} Critical sampling, i.e., the number of
wavelet coefficients generated by the transform is equal to the number
of vertices in the graph, is important in order to achieve a compact
representation (e.g., for compression). Also, in a critically sampled
transform a subset of vertices will include low frequency
information. This leads to a natural approximation of the original
graph by a smaller graph (containing only those vertices and their
corresponding graph signal). 

Of the above mentioned approaches,
only the lifting based approaches~\cite{GodwinJ,Silverman}, the
wavelets on balanced trees \cite{gavish} and the graph wavelet
filterbanks \cite{SunilTSP} achieve critical sampling, although in all
cases there are some restrictions on the kinds of graphs on which the
transforms can be defined.

In terms of localization, the lifting-based
approaches~\cite{GodwinJ,Silverman} are compactly supported in the
graph, i.e., their basis functions are {\em strictly} $k$-hop
localized the vertex domain, which implies that the output at each
vertex $n$ can be computed exactly from the data at the vertex $n$ and
a $k$-hop neighborhood around it.  Instead, in our filterbank-based
approach \cite{SunilTSP} basis functions had good vertex domain
localization but were not compactly supported. Compact support could
be achieved by approximating their spectral response by a polynomial,
but this comes at the expense of introducing a small reconstruction
error (i.e., there was no longer perfect reconstruction). 

The main contribution in this paper is to extend the filterbank
approach to design graph wavelets of \cite{SunilTSP} so as to achieve
compact support.
Note that the lifting based
techniques~\cite{GodwinJ,Silverman} are also compactly supported but
they are vertex domain designs for which it is difficult to control
the quality (e.g., localization) of their spectral
representation. Instead, in the work presented here filters are
designed 
in the spectral domain while guaranteeing compact vertex domain support, 
so that we can directly 
control explicitly the trade-off between localization in the vertex domain and
the spectral domain.



As in \cite{SunilTSP}, the building blocks our design are 
{\em two channel wavelet filterbanks on  bipartite graphs}, which
provide a decomposition of any graph-signal 
into a lowpass (smooth) graph-signal, 
and a highpass (detail) graph-signal. 
For arbitrary graphs the filterbanks can be extended in two ways: a)
by implementing the proposed wavelet filterbanks on a 
bipartite graph approximation of the original graph, which provides a ``one-dimensional'' analysis, or b)
by decomposing the graph into multiple link-disjoint bipartite
subgraphs, and applying the proposed filterbanks 
iteratively on each the subgraphs (or on some of them), leading to  a
``multi-dimensional'' analysis \cite{SunilTSP}.

In \cite{SunilTSP} we showed that downsampling/upsampling operations
in bipartite graphs lead to a {\em spectral folding} phenomenon, which
is analogous to aliasing in regular signal domain.  We utilized this
property to propose two channel critically sampled wavelet
filterbanks, called {\em graph-QMF}, on arbitrary undirected weighted
graphs.  We specifically designed wavelet filters based on the
spectral decomposition of the graph, and stated necessary and
sufficient conditions for a two-channel graph filterbank on bipartite
graphs to provide aliasing-cancellation, perfect reconstruction and
orthogonal set of basis (orthogonality).  While the exact {\em
  graph-QMF} designs satisfy all the above conditions, they are not
compactly supported 
on the graph. 
%
%
In order to design 
compactly supported 
graph-QMF 
transforms, we performed a Chebychev polynomial approximation of the 
exact filters in the spectral domain, and this incurred error in 
the reconstruction of the signal, and loss of orthogonality.
Here, we propose an alternative to graph-QMF design
where we relax the conditions of orthogonality, and 
design  a biorthogonal pair of graph-wavelets, which we call {\em
  graphBior}.  These new designs lead to a representation  
that is exactly $k$-hop localized, 
and still satisfies the 
perfect reconstruction conditions. This design is analogous to 
the standard Cohen-Daubechies-Feauveau's (CDF) \cite{CDF9}
construction to obtain maximally half-band filters. Even though these filterbanks 
are not orthogonal, 
we show that they can be designed to nearly preserve energy.
In particular, we compute expressions for Riesz bounds of the filterbanks, 
and choose graph-wavelets with the maximum ratio of lower and upper Riesz bounds.

We will show that it is possible to design these filterbanks based on
both the normalized Laplacian and the random-walk Laplacian, leading
to {\em nonzeroDC graphBior } and {\em zeroDC graphBior} designs, respectively. As 
the name suggests, the
{\em zeroDC} design has the advantage of leading to highpass operators with
a zero response for the {\em all constant} signal, which will be
useful in applications in Euclidean space (e.g., when applying graph wavelets to
regular domain signals such as images). Instead, in the {\em nonzeroDC}
design the DC frequency corresponds to a signal where the value
at each node depends on its degree. Finally, because our designs are
biorthogonal, the norms of highpass and lowpass are not equal. For
applications where a normalization is required, 
we propose 
 unity gain compensation (GC) techniques for both types of designs. 
A comparison of our proposed design vis-a-vis existing transforms is shown in Table~\ref{tab:conclusion_tab}.

\addtocounter{footnote}{1}
\footnotetext[\value{footnote}]{The exact Graph-QMF solutions are perfect reconstruction 
and orthogonal, but they are not compact support. 
Localization is achieved with a matrix polynomial approximation 
of the original filters, which incur some
loss of orthogonality and reconstruction error, 
which can be arbitrarily reduced by increasing the degree of approximation.}
\addtocounter{footnote}{-1}
\begin{table}[htb]
{\small
\centering
\begin{tabular}{|p{6 cm}|p{0.5 cm}|p{0.5 cm}|p{0.5 cm}|p{0.9 cm}|p{0.5 cm}|p{0.5 cm}|}
\hline
Method & DC & CS & PR & Comp & OE & GS \\ 
\hline
Wang \& Ramchandran~\cite{Ramchandran'06} & No & No & Yes & Yes & No & No \\ 
\hline
Crovella \& Kolaczyk~\cite{Crovella'03} & Yes & No & No & Yes &  No & No \\ 
\hline
Lifting Scheme~\cite{GodwinJ} & Yes & Yes & Yes & Yes & No & Yes \\ 
\hline
Wavelets on balanced trees~\cite{gavish} & Yes & Yes & Yes & Yes & No & Yes \\ 
\hline
Diffusion Wavelets~\cite{Coifman'06} & Yes & No & Yes & Yes & Yes & No \\ 
\hline
Spectral Wavelets~\cite{Hammond'09} & Yes & No & Yes & Yes & No & No \\ 
\hline
{graph-QMF filterbanks (exact)~\cite{SunilTSP}} & No & Yes  & Yes & No & Yes & No \\ 
\hline
{graph-QMF filterbanks (approx.)~\cite{SunilTSP}} & No & Yes  & No\addtocounter{footnote}{+1}$^{\decimal{footnote}}$ & Yes & No$^{\decimal{footnote}}$ & No \\ 
\hline
{\bf {\em nonzeroDC graphBior filterbanks}} & {\bf No} & {\bf Yes}  & {\bf Yes} & {\bf Yes} & {\bf No} & {\bf No} \\ 
\hline
{\bf {\em zeroDC graphBior filterbanks}} & {\bf Yes} & {\bf Yes}  & {\bf Yes} & {\bf Yes} & {\bf No} & {\bf No} \\ 
\hline
\end{tabular}
\caption{Comparison of graph wavelet designs in terms of key
  properties: zero highpass response for constant graph-signal (DC), 
critical sampling (CS), perfect reconstruction (PR), 
 compact support (Comp), orthogonal expansion (OE), requires graph
 simplification (GS).}
\label{tab:conclusion_tab}
}
\end{table}%

The outline of the rest of the paper is as follows: in Section \ref{sec:prelim}, we introduce some notations, 
a general formulation of two channel wavelet filterbank
on graphs and the graph-QMF filterbanks proposed in \cite{SunilTSP}, which are orthogonal 
and perfect reconstruction. In Section \ref{sec:nonzeroDC_gb}, we describe proposed 
nonzeroDC graphBior filterbanks on graphs, and in Section \ref{sec:zerodc_gb} we design and describe the properties
of zeroDC graphBior filterbanks. In Section~\ref{sec:multi_dimensional}, we describe extension of proposed filterbanks 
to arbitrary graphs via bipartite subgraph decomposition, and multiresolution implementation.
%
In Section~\ref{sec:experiments}, 
we conduct some experiments to demonstrate the properties and applications 
of the proposed filterbanks. Section~\ref{sec:conclusion} concludes
our paper.

\section{Preliminaries}
\label{sec:prelim}
A graph can be denoted as $G = (\Vc,\textit E)$  
with vertices (or nodes) in set $\Vc$ and 
links  as tuples $(i,j)$ in $\textit E$. 
The graphs considered in this work are 
undirected graphs without self-loops and without multiple links between nodes.
The links can only have positive weights.
The size of the graph $N = |\Vc|$ 
is the number of nodes and  \add{the}
geodesic 
distance metric is given as $d_{\Gc}(i,j)$\add{, which represents 
the sum of link weights along the shortest path between nodes $i$ and $j$,
and is considered infinite if $i$ and $j$ are disconnected. } 
Define $\Nc_{j,n}$ as the $j$-hop neighborhood of node $n$, i.e., 
$\Nc_{j,n} = \{m \in \Vc,d_{\Gc}(i,j) \leq j\}$.
Denote the identity matrix as $\Id$, and let $\delta_n$ be an impulse function, i.e., $\delta_n(n) = 1$ and $\delta_n(m) = 0$
for all $m \neq n$.  Define  $<\fv,~\gv> = \fv^\top\gv$ as the inner-product of vector $\fv$ and $\gv$, where 
$(.)^\top$ is the transpose operator.
Define
$\Am = [w_{ij}]$, 
the adjacency matrix of the graph, $d_i$ the degree (sum of link-weights) of node $i$, 
and $\Dm = diag\{{d_i}\}_{i = 1,2,...N}$, the diagonal degree matrix of graph, so 
that $\Lm = \Dm - \Am$ is the {\em unnormalized Laplacian matrix} of the graph.

\subsection{Graph vertex domain}
We define a 
signal $f: \Vc \to \RR$ on a graph as a set of scalars, where each 
scalar is assigned to one of the vertices of the graph. This can be extended to 
vector values at each vertex. 
Further, a graph based transform is defined as
a linear transform $\Tm:\RR^N \to \RR^M$ 
applied in the vertex domain,
such that the operation at each 
node $n$ is a linear combination of the
value of the graph-signal $f(n)$ at the node $n$ and the values $f(m)$ on 
nearby nodes $m \in \Nc_{j,n}$,  
i.e.,
\begin{equation}
y(n) = <\Tm(n,.),~~\fv> = T(n,n) f(n) + \sum_{m \in \Nc_{j,n}} T(n,m) f(m),
 \label{eq:local_linear_tx}
\end{equation}
where $\Tm(n,.)$ is the $n^{th}$ row of the transform $\Tm$.
In analogy to the $1$-D regular case, we would sometimes 
refer to graph-transforms as graph-filters, and $\Tm(n,.)$ for $n = 1,2,...N$ as the {\em impulse response}
of the transform $\Tm$ at the $n^{th}$ node. Note that due to the irregularity of links,  the impulse response 
of a graph filter varies from one vertex to the other.
A desirable feature of graph filters is {\em spatial localization}, which typically means that 
the energy of each impulse response (i.e., each row) of the graph filter is 
concentrated in a local region around a node. 
In this paper, we use the definition proposed in~\cite{agaskar_icassp} to define the spatial spread of any 
signal $\fv$ on a graph $G$ as:
\begin{equation}
 \Delta_{G}^2(\fv) := \inf_i \frac{1}{||\fv||_2^2}\sum_{j \in \Vc}[d_{\Gc}(i,j)]^2[f(j)]^2.
\label{eq:spatial_spread}
\end{equation}
Here, $\{[f(j)]^2/||\fv||^2\}_{j=1,2,\ldots,N}$ can be interpreted 
as a probability mass function (pmf) of signal $\fv$,  
and  $\Delta_{G}^2(\fv)$ 
is the variance of the geodesic distance function $d_{\Gc}(i,.):\Vc \to \RR$  at node $i$, in terms of this spatial pmf.
Thus, $\Delta_{G}^2(\Tm(n,.))$ should be small for all $n = 1,2,...N$ for good spatial localization. 
In our analysis, 
we compute the 
spatial spread of a graph transform $\Tm$ to be the average of the spatial spread (\ref{eq:spatial_spread}) of impulse responses 
over all vertices, i.e., 
\begin{equation}
 \Delta_{G}^2(\Tm) := \frac{1}{N}\sum_{n=1}^N\Delta_{G}^2(\Tm(n,.))
\label{eq:spatial_spread_tx}
\end{equation}

\subsection{Graph spectral domain}
As in our previous design~\cite{SunilTSP}, 
we use the symmetric normalized Laplacian 
matrix $\Lcb =\Dm^{-1/2}\Lm\Dm^{-1/2}$ 
to define spectral properties of the graph.
%
%
%
In this paper, we use the same Laplacian matrix $\Lcb$ 
to design nonzeroDC filterbanks.
Because 
$\Lcb$ is a real symmetric matrix,
it has a complete set of orthonormal eigenvectors, which we
denote by $\{\uv_l\}_{l = 0,1,2,...,N-1}$. These eigenvectors have associated
real, non-negative eigenvalues $\{\lambda_l\}_{l = 0,1,2,...,N-1}$ satisfying
$\Lcb\uv_{l} =\lambda_{l}\uv_{l}$, for $l = 0,1,2,...,N-1 $, and ordered as:
$\lambda_0 \leq \lambda_1 \leq ...\lambda_{N-1}$.
In our analysis, we assume the graph to be connected 
\footnote{For graphs with multiple separate connected components, we analyze each component as a separate graph.}.
Zero appears as a unique minimum
eigenvalue 
of the graph and the maximum eigenvalue is always less than or equal to $2$, with equality 
if and only if the graph is a bipartite graph {\cite[Section 2]{eigenvaluespacings}}. 

We denote 
the {\em spectrum of the graph} by $\sigma(\Lcb) := \{\lambda_0,\lambda_1,...\lambda_{N-1}\}$.
A graph signal $\fv$  is represented in the spectral domain as its projection onto the eigenvectors, 
denoted as : $ \hat f(\lambda_l) = <\fv,~\uv_l>$.
An eigenspace $V_{\lambda}$ is defined as the space spanned by eigenvectors of $\Lcb$ 
associated with eigenvalue $\lambda$, and the {\em eigenspace projection matrix} is defined as:
\begin{equation}
 \Pm_{\lambda} := \sum_{\lambda_l = \lambda} \uv_{l}\uv_l^\top, \nonumber
\end{equation}
where $\uv_l^\top$ is the transpose of eigenvector $\uv_{l}$. 
The eigenspace projection matrices 
are idempotent and $\Pm_{\lambda}$ and $\Pm_{\gamma}$ 
are orthogonal if $\lambda$ 
and $\gamma$ are distinct eigenvalues of the Laplacian matrix, i.e., 
\begin{equation}
\Pm_{\lambda}\Pm_{\gamma} = \delta(\lambda - \gamma)\Pm_{\lambda},
\label{eq:eigenspace_prop1}
\end{equation}
where $\delta(\lambda)$ is the Kronecker delta function. Further the sum of all eigenspace projection matrices 
for any graphs is an identity matrix.  
Closely
related to the symmetric Laplacian matrix $\Lc$ is the random-walk graph Laplacian,
which is defined as $\Lcb_r := \Dm^{-1}\Lm$. We 
use $\Lcb_r$ to design zeroDC filterbanks.
Note that $\Lcb_r$ has the same set of 
eigenvalues as $\Lcb$, and if 
$\uv_l$ is an eigenvector 
of $\Lcb$ associated with $\lambda_l$, 
then $\Dm^{-1}\uv_{l}$ 
is an eigenvector of $\Lcb_r$ 
associated with the eigenvalue $\lambda_l$.
Similar to~(\ref{eq:spatial_spread}), 
the spectral spread of a graph signal c
an be defined as:
\begin{equation}
 \Delta_{\sigma}^2(\fv) := \min_{\mu {\in \mathbb{R}_+}} \left\{\frac{1}{||\fv||_2^2}\sum_{\lambda \in \sigma(\Lcb)}\left[\lambda - \mu\right]^2\left[\hat f(\lambda)\right]^2\right\},
\label{eq:spectral_spread}
\end{equation}
where $\{[\hat f( \lambda)]^2/||\fv||_2^2\}_{\lambda=\lambda_0,\lambda_1,\ldots,\lambda_{N-1}}$  is the pmf of $\fv$ 
across the spectrum of the Laplacian matrix, and $\mu$ is the mean of $\lambda$ with respect to 
this pmf. $\Delta_{\sigma}^2(\fv)$ computes the
variance of $\lambda$ with respect to the spectral pmf function. 
Thus, for a signal to have good spectral localization, the value of 
 $\Delta_{\sigma}^2(\Tm(n,.)) $ should be small for all vertices.
 In our analysis, we compute 
 the spectral response of any transform 
 $\Tm$ as the average spectral response 
 of impulse responses at all vertices, i.e., 
\begin{equation}
 |\hat \Tm(\lambda)|^2 := \frac{1}{N}\sum_{i=1}^N |\hat \Tm(n,.)(\lambda)|^2,
\end{equation}
and use $|\hat \Tm(\lambda)|^2$ as the spectral pmf function
to compute the spectral spread of $\Tm$\footnote{
Note that the definitions of spread 
presented here are heuristic and do not have a 
well-understood theoretical background. Another definition of spectral spread in graphs 
is given in~\cite{agaskar_icassp}.
If the graph is not regular, the choice of which Laplacian matrix ($\Lcb$ or $\tilde \Lcb$) 
to use for computing spectral spreads also affects the results. The purpose of these definitions and the subsequent examples is to show 
that 
a trade-off exists between spatial and spectral localization in graph wavelets.}.

\subsection{Spectral graph filters}
For designing compact support wavelet filters 
on graphs we use the same approach as in~\cite{SunilTSP},
and define 
analysis wavelet filters $\Hm_i$ and $\Gm_i$ 
in terms of spectral kernels $\hat h_i(\lambda)$ and $\hat g_i(\lambda)$ for $i = 0,1$ respectively. 
The corresponding 
transform matrices are represented as:
\begin{equation}
 \begin{array}{l}
 \displaystyle {\Hm}_i = \hat h_i({{\bf \Lc}}) = \sum_{\lambda \in \sigma(G)} \hat h_i(\lambda){\Pm}_{\lambda},   \\
 \displaystyle  {\Gm}_i = \hat g_i({\bf \Lc}) = \sum_{\lambda \in \sigma(G)}\hat g_i(\lambda){\Pm}_{\lambda}.
 \end{array}
\label{eq:spectral_tx}
\end{equation} 
These filters have the following interpretation: 
the output of a spectral filter with kernel $\hat h(\lambda)$ can be expanded as
$\fv_H = \Hm \fv =  \sum_{\lambda \in \sigma(G)} \hat h_i(\lambda)~\Pm_{\lambda}\fv$, 
where $\fv_{\lambda} = \Pm_{\lambda}\fv$ is the component of 
input signal $\fv$ in the $\lambda$-eigenspace. Thus, filter $\Hm$ 
either attenuates or enhances different harmonic 
components of the input signal depending upon the 
magnitude of $\hat h(\lambda)$. 
Therefore, we will also refer to $\hat h(\lambda)$ as the {\em spectral response} of filter $\Hm$. 
%
For a general kernel function, the 
filtering operations corresponding to $\Hm_i$ and $\Gm_i$ 
may not have compact support, and would
require a full spectral 
decomposition of Laplacian matrix. However it has been shown in \cite{Hammond'09}, 
that the spectral response can be approximated as a
polynomial of degree $K$, and 
the corresponding filters can be computed iteratively with $K$ one-hop operations at each node. 
Further, any graph filters with 
a $K$ degree polynomial spectral response are exactly
{\em $K$-hop localized} (compact support)~\cite[Lemma 5.2]{Hammond'09},
and can be efficiently computed without diagonalizing the Laplacian matrix. 
The computational complexity of the filtering operations in the polynomial case, reduces to 
$\Oc(K|E|)$ for degree $K$ and  $|E|$ number of links in the graph. 
Thus, the degree $K$ in case of
polynomial spectral response can be interpreted as the length of the corresponding spectral filters
\footnote{Note that, having a polynomial spectral response for compact support is necessary
only in case of spectral graph filters. There can be non-spectral graph-filters, for example, 
graph wavelets proposed in~\cite{Crovella'03}, that have compact support without being a polynomial in the spectral domain.}.
\subsection{Spectral wavelet filterbanks}
In \cite{SunilTSP}, we described the construction of a
two-channel wavelet filterbank on a bipartite graph $\Bc = (L,H,E)$\footnote{A bipartite graph $G = (L,H,\textit E)$ is a graph whose 
vertices can be divided into two disjoint 
sets $L$ and $H$, such that 
every link connects a vertex 
in $L$ to one in $H$.},  
characterized by filtering operations
$\{\Hm_i,\Gm_i\}_{i = 0,1}$ and a  function  $\beta(n)$, 
which provides a
decomposition of graph-signal $\fv$ into 
a lowpass (approximation) graph-signal $\fv_{low}$
and a highpass (details) graph-signal component $\fv_{high}$.  
\begin{figure}[htb]
\centering
\includegraphics[width=4in]{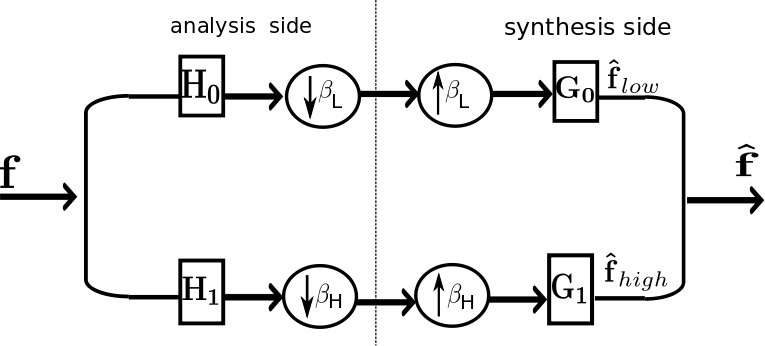}
\caption{\footnotesize Block diagram of a two-channel wavelet filterbank on graph. 
}
\label{fig:graph_filterbank}
\end{figure}
The transforms $\Hm_i,\Gm_i$ for $i=0,1$ of the two channels 
are {\em graph transforms} with spectral kernels $h_i$ and 
$g_i$ respectively, as given in~(\ref{eq:spectral_tx}). 
In the analysis side of the filterbank, the input signal 
is first operated upon by transform $\Hm_0$ in the lowpass channel and 
$\Hm_1$ in the high pass channel. A subsequent downsampling upsampling (DU)
operation, discards and replaces with zeros the output coefficients 
on the set $H$ in the lowpass channel and 
on the set $L$ in the highpass channel. Since $L$ and $H$ are disjoint and complementary 
subsets of vertex set $\Vc$, the retained set of output coefficients
is critically sampled. Algebraically, the DU operation can be represented 
with a function $\beta(n)$, such that $\beta(n) = 1$, if node $n \in L$ and 
$\beta(n) = -1$ if node $n \in H$. 
Thus, the DU operation in the lowpass channel is 
given as $\frac{1}{2}(1+\beta(n))$ and in the highpass channel $\frac{1}{2}(1-\beta(n))$.
It can also be written in the matrix form as $\frac{1}{2}(\Id+\Jm_{\beta})$
for lowpass channel and $\frac{1}{2}(\Id-\Jm_{\beta})$ for highpass channel, 
where $\Jm_{\beta} = diag\{\beta\}$ is a diagonal matrix. The output signal $\fv_{du}$
of the DU operation in the two channels is expressed in Figure~\ref{fig:DU_figure}.
\begin{figure}[htb]
\centering
\includegraphics[width=4in]{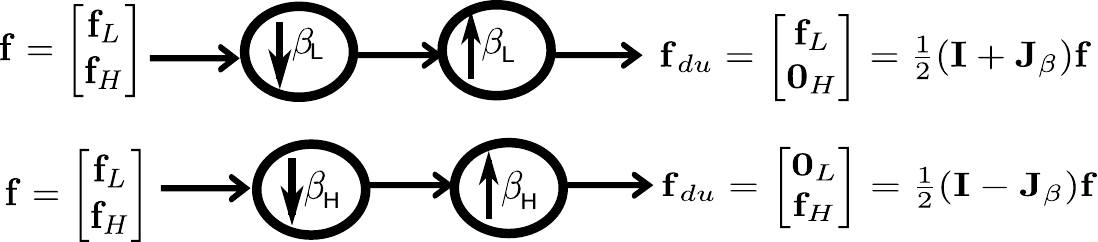}
\caption{\footnotesize DU operations in the proposed two-channel filterbank. 
}
\label{fig:DU_figure}
\end{figure}

%
%
%
%
%
The overall transfer matrix of the two channel filterbank 
can be written as:
\begin{eqnarray}
 \displaystyle \Tm & = &  \frac{1}{2}\Gm_0(\Id + \Jm_{\beta})\Hm_0 + \frac{1}{2}\Gm_1(\Id - \Jm_{\beta})\Hm_1  \nonumber \\
& = &  \underbrace{\frac{1}{2}(\Gm_0\Hm_0 + \Gm_1\Hm_1)}_{\Tm_{eq}} +  \underbrace{\frac{1}{2}(\Gm_0\Jm_{\beta}\Hm_0 - \Gm_1\Jm_{\beta}\Hm_1)}_{\Tm_{alias}}, 
\label{eq:overall_tx}
\end{eqnarray}
where $\Tm_{eq}$ is the transfer function of the filterbank without the $DU$ operation and 
$\Tm_{alias}$ 
arises primarily due to the $DU$ operations. 
Using the above formulation, we derived 
the following results\footnote{see \cite{SunilTSP} for proofs and details.}:

\begin{theorem}[{\bf Spectral folding phenomenon~\cite[Prop. $1$]{SunilTSP}}]
Given a bipartite graph $\Bc = (L,H,\textit E)$ with Laplacian matrix $\Lcb$
and with the binary function $\beta$ defined as above,
%
if $\uv_{\lambda}$ is the eigenvector of a unique eigenvalue $\lambda$ of graph $\Bc$ then 
\begin{equation}
\Jm_{\beta}\uv_{\lambda} = \pm \uv_{2 - \lambda}.
 \label{eq:spectral_folding_ev}
\end{equation}
The ambiguity of $\pm$ sign appears in~(\ref{eq:spectral_folding_ev}) since both 
$\uv_{\lambda}$ and $-\uv_{\lambda}$ can be eigenvectors of $\Bc$ for eigenvalue $\lambda$.
Further, if the eigenvalue $\lambda$ appears with multiplicity greater than $1$, and if
$\Pm_{\lambda}$
is the projection matrix corresponding to the eigenspace $V_{\lambda}$, 
then 
\begin{equation}
\Jm_{\beta}\Pm_{\lambda} = \Pm_{2 - \lambda}\Jm_{\beta}.
 \label{eq:spectral_folding}
\end{equation}
 \label{thm:spectral_folding}
\end{theorem}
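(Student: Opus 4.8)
The plan is to reduce both assertions to a single conjugation identity between $\Jm_{\beta}$ and $\Lcb$, and then read everything off from it. First I would order the vertices of $\Bc$ so that the nodes of $L$ come before those of $H$. Since $\Bc$ is bipartite there are no links inside $L$ or inside $H$, so the adjacency matrix is block-antidiagonal, $\Am = \begin{bmatrix} \zerov & \Bm \\ \Bm^{\top} & \zerov \end{bmatrix}$ for some block $\Bm$, while $\Jm_{\beta} = \begin{bmatrix} \Id & \zerov \\ \zerov & -\Id \end{bmatrix}$ and $\Dm$ is diagonal. A direct block multiplication gives $\Jm_{\beta}\Am\Jm_{\beta} = -\Am$, and since $\Jm_{\beta}$ commutes with the diagonal matrix $\Dm^{-1/2}$ and $\Jm_{\beta}^2 = \Id$, conjugating $\Lcb = \Id - \Dm^{-1/2}\Am\Dm^{-1/2}$ yields
\[
\Jm_{\beta}\Lcb\Jm_{\beta} = \Id + \Dm^{-1/2}\Am\Dm^{-1/2} = 2\Id - \Lcb ,
\]
equivalently $\Lcb\Jm_{\beta} = \Jm_{\beta}(2\Id - \Lcb)$.

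For the first claim, apply this to a unit eigenvector $\uv_{\lambda}$ of a simple eigenvalue $\lambda$: $\Lcb(\Jm_{\beta}\uv_{\lambda}) = \Jm_{\beta}(2\Id - \Lcb)\uv_{\lambda} = (2-\lambda)\Jm_{\beta}\uv_{\lambda}$, so $\Jm_{\beta}\uv_{\lambda}$ lies in the eigenspace of $2-\lambda$. Because $\Jm_{\beta}$ is orthogonal it preserves norm, so $\Jm_{\beta}\uv_{\lambda}$ is again a unit vector; if $2-\lambda$ is also simple this forces $\Jm_{\beta}\uv_{\lambda} = \pm\uv_{2-\lambda}$, the sign ambiguity being exactly the ambiguity in the choice of $\uv_{2-\lambda}$. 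The same computation applied to an arbitrary eigenvector shows $\lambda \in \sigma(\Lcb) \Leftrightarrow 2-\lambda \in \sigma(\Lcb)$, i.e. the spectrum is symmetric about $1$; I will need this fact in the next step.

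For the multiplicity case I would use the Lagrange-interpolation (spectral-projector) representation $\Pm_{\lambda} = \prod_{\gamma \in \sigma(\Lcb),\, \gamma \neq \lambda} \frac{\Lcb - \gamma\Id}{\lambda - \gamma}$, which is a polynomial in $\Lcb$. Conjugating by $\Jm_{\beta}$ and using the identity above term by term,
\[
\Jm_{\beta}\Pm_{\lambda}\Jm_{\beta} = \prod_{\gamma \neq \lambda} \frac{(2\Id - \Lcb) - \gamma\Id}{\lambda - \gamma} = \prod_{\gamma \neq \lambda} \frac{(2-\gamma)\Id - \Lcb}{\lambda - \gamma}.
\]
Substituting $\nu = 2-\gamma$ and using $\lambda - \gamma = \nu - (2-\lambda)$ together with spectral symmetry (so that $\nu$ ranges over exactly $\sigma(\Lcb)\setminus\{2-\lambda\}$), the right-hand side becomes $\prod_{\nu \neq 2-\lambda} \frac{\Lcb - \nu\Id}{(2-\lambda) - \nu} = \Pm_{2-\lambda}$. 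Multiplying on the right by $\Jm_{\beta}$ and using $\Jm_{\beta}^2 = \Id$ gives $\Jm_{\beta}\Pm_{\lambda} = \Pm_{2-\lambda}\Jm_{\beta}$.

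The routine parts are the block multiplication and the bookkeeping in the product. The step that deserves care is the multiplicity case: a naive ``apply $\Jm_{\beta}$ to a basis of $V_{\lambda}$'' argument only shows $\Jm_{\beta}V_{\lambda} \subseteq V_{2-\lambda}$ and does not directly give the operator identity $\Jm_{\beta}\Pm_{\lambda} = \Pm_{2-\lambda}\Jm_{\beta}$; the polynomial representation of $\Pm_{\lambda}$ is what makes the conjugation transparent, and it is precisely there that spectral symmetry (rather than just $\max\sigma(\Lcb) \le 2$) is genuinely used.
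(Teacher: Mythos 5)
Your proof is correct: the conjugation identity $\Jm_{\beta}\Lcb\Jm_{\beta} = 2\Id - \Lcb$ follows exactly as you compute from the block-antidiagonal form of $\Am$ and $\Jm_{\beta}\Am\Jm_{\beta} = -\Am$, and both claims of the theorem follow from it. Note that this paper does not reprove the result but defers to \cite{SunilTSP}; the proof there is, for the eigenvector part, the same argument in different clothing (one writes the eigenvalue equations in block form and checks that flipping the sign of the $H$-components of $\uv_{\lambda}$, i.e.\ forming $\Jm_{\beta}\uv_{\lambda}$, gives an eigenvector for $2-\lambda$). Where you genuinely diverge is the projector identity: the standard route observes that $\Jm_{\beta}$ is orthogonal and maps an orthonormal basis of $V_{\lambda}$ to an orthonormal basis of $V_{2-\lambda}$, so that $\Jm_{\beta}\Pm_{\lambda}\Jm_{\beta} = \sum (\Jm_{\beta}\uv)(\Jm_{\beta}\uv)^{\top} = \Pm_{2-\lambda}$; you instead write $\Pm_{\lambda}$ as the Lagrange interpolation polynomial in $\Lcb$ and conjugate term by term, using spectral symmetry to reindex the product. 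Both are valid; the outer-product argument is a bit more elementary and shows your closing remark is slightly overstated (the ``apply $\Jm_{\beta}$ to a basis'' idea does yield the operator identity once one notes that $\Jm_{\beta}\Pm_{\lambda}\Jm_{\beta}$ is the orthogonal projector onto $\Jm_{\beta}V_{\lambda} = V_{2-\lambda}$), while your polynomial route has the small virtue of making the role of spectral symmetry explicit. One loose end worth tightening in your write-up: in the simple-eigenvalue case you assume $2-\lambda$ is also simple; this does follow, since $\Jm_{\beta}$ is invertible so $\dim V_{2-\lambda} = \dim \Jm_{\beta}V_{\lambda} \ge \dim V_{\lambda}$ and symmetrically the reverse, i.e.\ the similarity $\Jm_{\beta}\Lcb\Jm_{\beta} = 2\Id-\Lcb$ preserves multiplicities, but you should say so rather than only asserting that $2-\lambda \in \sigma(\Lcb)$.
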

According to~(\ref{eq:spectral_folding_ev}) and~(\ref{eq:spectral_folding})
the eigenvector (or eigenspace) of eigenvalue $\lambda$ in a bipartite graph
changes to the eigenvector (or eigenspace) of eigenvalue $2 - \lambda$ after 
multiplying with $\Jm_{\beta}$. 
In other words, the eigenspace folds across the 
imaginary axis at $\lambda = 1$, 
hence we call this a {\em spectral folding phenomenon}\footnote{Note that spectral folding 
phenomenon only occurs if the binary function $\beta(n)$ is defined on one of the natural partitions $L$ or $H$ of 
the bipartite graph $\Bc = (L,H,E)$, and not for any other partition.}.

\begin{theorem}[{\bf Perfect reconstruction property~\cite[Sec. III.B]{SunilTSP}}]
\label{thm:PR}
Given a bipartite graph $\Bc = (L,H,\textit E)$, the filtering operations $\{\Hm_i,\Gm_i\}$ and the binary
function $\beta(n)$ as defined above, a necessary and sufficient condition 
for the {\em perfect reconstruction} in the two channel filterbanks is 
that for all $\lambda$ in $\sigma(\Bc)$,
\begin{eqnarray}
\hat g_0(\lambda)\hat h_0(\lambda) + \hat g_1(\lambda) \hat h_1(\lambda) = 2, \nonumber \\
\hat g_0(\lambda)\hat h_0(2-\lambda) - \hat g_1(\lambda)\hat h_1(2-\lambda) = 0.
 \label{eq:perfect_reconstruct0}
\end{eqnarray}
 \end{theorem}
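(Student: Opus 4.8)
The plan is to reduce the perfect-reconstruction requirement to the single matrix identity $\Tm=\Id$ and then, using the decomposition~(\ref{eq:overall_tx}) of $\Tm$ into an ``equivalent'' part $\Tm_{eq}$ and an ``aliasing'' part $\Tm_{alias}$, to read the two scalar equations off each part separately. First I would substitute the spectral representations~(\ref{eq:spectral_tx}) into $\Tm_{eq}=\frac12(\Gm_0\Hm_0+\Gm_1\Hm_1)$ and use the idempotency and mutual orthogonality of the eigenspace projectors~(\ref{eq:eigenspace_prop1}), together with $\sum_{\lambda}\Pm_{\lambda}=\Id$, to collapse the double sum to $\Tm_{eq}=\sum_{\lambda\in\sigma(\Bc)}\frac12(\hat g_0(\lambda)\hat h_0(\lambda)+\hat g_1(\lambda)\hat h_1(\lambda))\Pm_{\lambda}$. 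For $\Tm_{alias}=\frac12(\Gm_0\Jm_{\beta}\Hm_0-\Gm_1\Jm_{\beta}\Hm_1)$ I would expand each $\Gm_i\Jm_{\beta}\Hm_i=\sum_{\gamma,\lambda}\hat g_i(\gamma)\hat h_i(\lambda)\,\Pm_{\gamma}\Jm_{\beta}\Pm_{\lambda}$ and push $\Jm_{\beta}$ past $\Pm_{\lambda}$ with the spectral-folding identity $\Jm_{\beta}\Pm_{\lambda}=\Pm_{2-\lambda}\Jm_{\beta}$ of Theorem~\ref{thm:spectral_folding}; a second use of~(\ref{eq:eigenspace_prop1}) annihilates every term with $\gamma\neq2-\lambda$, leaving
\[
\Tm_{alias}=\sum_{\lambda\in\sigma(\Bc)}\tfrac12\bigl(\hat g_0(2-\lambda)\hat h_0(\lambda)-\hat g_1(2-\lambda)\hat h_1(\lambda)\bigr)\Pm_{2-\lambda}\Jm_{\beta},
\]
which, since $\sigma(\Bc)$ is symmetric about $1$ for a bipartite graph, may be reindexed with summand $\frac12(\hat g_0(\mu)\hat h_0(2-\mu)-\hat g_1(\mu)\hat h_1(2-\mu))\Pm_{\mu}\Jm_{\beta}$.

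Sufficiency is then immediate: the first equation of~(\ref{eq:perfect_reconstruct0}) forces every coefficient of $\Tm_{eq}$ to be $1$, so $\Tm_{eq}=\Id$, while the second kills every coefficient of $\Tm_{alias}$, so $\Tm=\Tm_{eq}+\Tm_{alias}=\Id$ and the filterbank reconstructs perfectly. For necessity I would apply $\Tm=\Id$ to each eigenvector $\uv_{\lambda}$ and invoke $\Jm_{\beta}\uv_{\lambda}=\pm\uv_{2-\lambda}$, obtaining $a(\lambda)\uv_{\lambda}\pm b(2-\lambda)\uv_{2-\lambda}=\uv_{\lambda}$, where $a(\lambda)$ and $b(\mu)$ denote the bracketed coefficients above; for $\lambda\neq1$ the vectors $\uv_{\lambda}$ and $\uv_{2-\lambda}$ lie in distinct eigenspaces of the symmetric matrix $\Lcb$ and are therefore orthogonal, forcing $a(\lambda)=1$ and $b(2-\lambda)=0$ --- exactly~(\ref{eq:perfect_reconstruct0}) at $\lambda$ and at $2-\lambda$. (With repeated eigenvalues the same argument runs with $\Pm_{\lambda}$ replacing $\uv_{\lambda}$ and the projector form of spectral folding.)

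The step I expect to require the most care is the fixed point $\lambda=1$ of the folding map, where $\uv_{\lambda}$ and $\uv_{2-\lambda}$ coincide so that the linear-independence argument of the previous paragraph degenerates. There I would use that $\Jm_{\beta}$ restricts to an involution on the eigenspace $V_1$, split $V_1=V_1^{+}\oplus V_1^{-}$ into its $\pm1$ eigenspaces, and evaluate $\Tm=\Id$ on vectors of each type, which gives $a(1)+b(1)=1$ and $a(1)-b(1)=1$ and hence $a(1)=1$, $b(1)=0$ --- both equations of~(\ref{eq:perfect_reconstruct0}) at $\lambda=1$. (In the degenerate situation where one of $V_1^{\pm}$ is trivial, necessity at $\lambda=1$ forces only the single combined relation while~(\ref{eq:perfect_reconstruct0}) stays sufficient, so the statement is to be read with that understood.)
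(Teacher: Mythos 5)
Your proposal is correct and follows essentially the same route as the paper's (cited) proof: expand $\Tm_{eq}$ and $\Tm_{alias}$ from~(\ref{eq:overall_tx}) over the eigenspace projectors, use the spectral folding relation of Theorem~\ref{thm:spectral_folding} to collapse the aliasing term, and read off the two conditions of~(\ref{eq:perfect_reconstruct0}) by sufficiency via $\Tm=\Id$ and necessity via evaluation on eigenvectors/projectors. Your extra care at the fixed point $\lambda=1$ of the folding map is a legitimate refinement of the stated necessity claim and does not affect the result.
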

The advantage of representing PR conditions as in~(\ref{eq:perfect_reconstruct0}) is that the filterbank 
can be designed in the spectral domain of the graph by designing spectral kernels which satisfy~(\ref{eq:perfect_reconstruct0}). These kernels 
can be designed as continuous functions of $\lambda \in [0~ 2]$, which obviates the need to evaluate~(\ref{eq:perfect_reconstruct0}) 
only at the spectrum $\sigma(\Bc)$, thus making the design independent 
of the structure of the graph. 
The {\em graph-QMF} solution provided in~\cite{SunilTSP}, satisfies~(\ref{eq:perfect_reconstruct0}), and is also orthogonal. While there exist 
many exact graph-QMF solutions, we show in Section~\ref{sec:half-band kernel} (see last paragraph) that none of the solutions 
has compact support. 
Therefore, we shift our focus to biorthogonal solutions which are PR and have compact support.

\section{nonzeroDC graphBior filterbanks}
\label{sec:nonzeroDC_gb}
Given the parameter $k$, in order to design $k$-hop localized 
filters
to satisfy the perfect reconstruction conditions
%
given in (\ref{eq:perfect_reconstruct0}),
we need to design 
four polynomial spectral kernels of degree $k$, 
namely lowpass analysis kernel $\hat h_0(\lambda)$, highpass analysis kernel $\hat h_1(\lambda)$,
lowpass synthesis kernel $\hat g_0(\lambda)$, 
and highpass synthesis kernel $\hat g_1(\lambda)$. 
If we choose analysis and synthesis highpass kernels to be:
\begin{eqnarray}
\displaystyle \hat h_1(\lambda) & = & \hat g_0(2- \lambda) \nonumber \\
\displaystyle \hat g_1(\lambda) & = & \hat h_0(2- \lambda), 
 \label{eq:perfect_reconstruct2}
\end{eqnarray}
then, (\ref{eq:perfect_reconstruct0}) reduces to a single constraint for all eigenvalues, given as:
\begin{equation}
\hat h_0(\lambda)\hat g_0(\lambda) + \hat h_0(2 - \lambda)\hat g_0(2 -\lambda) = 2.
 \label{eq:biorthogonal_perfect_reconstruct}
\end{equation}
Further, define $\hat p(\lambda) =  \hat h_0(\lambda)\hat g_0(\lambda)$, 
then~(\ref{eq:biorthogonal_perfect_reconstruct}) can be written as:
\begin{eqnarray}
\hat p(\lambda) + \hat p(2 - \lambda) = 2.
 \label{eq:p_lambda}
\end{eqnarray}
In our approach, we first design $\hat h_0(\lambda)$ and $\hat g_0(\lambda)$, 
and then $\hat h_1(\lambda)$ and $\hat g_1(\lambda)$ can be obtained 
using~(\ref{eq:perfect_reconstruct2}). Further, since
$\hat p(\lambda)$ 
is the product of two lowpass kernels, it is also a lowpass kernel. Therefore, the objective is to design $\hat p(\lambda)$
as a polynomial {\em half-band kernel}\footnote{An ideal half band kernel $h(\lambda)$ in the 
spectral domain of a bipartite graph can be defined as a kernel with 
$h(\lambda) = 1$ for $\lambda \leq 1$, and 
$h(\lambda) = 0$ otherwise.}, which 
satisfies~(\ref{eq:biorthogonal_perfect_reconstruct}), and then obtain kernels $\hat h_0(\lambda)$ 
and $\hat h_1(\lambda)$ via spectral factorization.  
This  design is similar to 
the design proposed 
by Cohen-Daubechies-Feauveau~\cite{CDF9}
for finding a maximally 
flat pair of lowpass and highpass filters under the 
given length constraint, and 
then dividing up the residual factors between the two filters 
in a way that makes the basis function nearly orthogonal.
The following result is useful in our analysis:
\begin{proposition}
\label{prop:odd_p_lambda}
 If $\hat h_0(\lambda)$ and $\hat g_0(\lambda)$ are polynomial kernels, 
then any $\hat p(\lambda) = h_0(\lambda)g_0(\lambda)$, which satisfies (\ref{eq:p_lambda}) for all $\lambda \in [0~2]$, is an odd degree polynomial. 
\end{proposition}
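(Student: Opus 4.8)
The plan is to use the single functional equation (\ref{eq:p_lambda}) directly; the product structure $\hat p = \hat h_0\hat g_0$ plays essentially no role except to discard one degenerate case at the very end. The point is that (\ref{eq:p_lambda}) says precisely that $\hat p$, after subtracting its mean value $1$, is antisymmetric about the point $\lambda = 1$, and an antisymmetric polynomial can carry only odd-degree monomials.

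First I would recenter the problem. Set $\hat q(\lambda) := \hat p(\lambda) - 1$, so that (\ref{eq:p_lambda}) becomes $\hat q(\lambda) + \hat q(2-\lambda) = 0$ for all $\lambda \in [0,2]$; since $\hat q$ is a polynomial and this holds on a whole interval, it is an identity of polynomials in $\lambda$. Substituting $\lambda = 1 + t$ and writing $\tilde q(t) := \hat q(1+t)$, the identity reads $\tilde q(t) = -\tilde q(-t)$, i.e. $\tilde q$ is an odd polynomial. Comparing coefficients in $\tilde q(t) = \sum_i c_i t^i$ gives $c_i(-1)^i = -c_i$ for every $i$, so each even-index coefficient vanishes and $\tilde q$ is a sum of odd powers of $t$ only; in particular, if $\tilde q \not\equiv 0$ its degree is odd. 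Since $\hat p(\lambda) = 1 + \tilde q(\lambda - 1)$ and the affine change of variable $\lambda \mapsto 1+t$ preserves degree, $\deg \hat p = \deg \tilde q$, which is the claimed conclusion.

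What remains is to rule out $\tilde q \equiv 0$, i.e. $\hat p \equiv 1$, which is a trivial even-degree ($0$) solution of (\ref{eq:p_lambda}). This is the one step that needs care, and it is where the hypothesis that $\hat h_0$ and $\hat g_0$ are genuine lowpass kernels must be invoked: a lowpass half-band kernel has to attenuate the high-frequency band, so on a bipartite graph $\hat h_0(2) = 0$ (as for the ideal half-band kernel, which vanishes for $\lambda > 1$), whence $\hat p(2) = \hat h_0(2)\hat g_0(2) = 0 \neq 1$ and $\hat p$ cannot be the constant $1$. Hence $\tilde q \not\equiv 0$ and $\deg \hat p$ is a positive odd integer. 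Apart from this exclusion of the degenerate constant solution, every step is elementary — a polynomial identity deduced from agreement on $[0,2]$, the recentering $\lambda \mapsto \lambda - 1$, and the standard fact that an odd polynomial has only odd-degree terms — so the non-triviality of $\hat h_0,\hat g_0$ (equivalently, $\hat p$ non-constant) is really the only hypothesis doing work and should be stated explicitly.
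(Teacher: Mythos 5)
Your argument is correct and is essentially the paper's own proof: both recenter with $\lambda = 1+l$, observe that (\ref{eq:p_lambda}) forces the even-power coefficients of $\hat p(1+l)$ to vanish (with constant term $1$), and conclude the degree is odd after setting aside the constant solution $\hat p \equiv 1$, which the paper simply dismisses as the trivial case. Your extra step ruling out that case via $\hat h_0(2)=0$ invokes a lowpass property not stated in the proposition's hypotheses, but you correctly flag that non-constancy of $\hat p$ is the assumption actually needed, which matches the paper's implicit treatment.
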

\begin{proof}
By changing the variable so that $\lambda = 1 + l$, we can write (\ref{eq:p_lambda}) as:
\begin{equation}
 \hat p(1 + l)+ \hat p(1 - l) = 2,
  \label{eq:p_lambda_cv}
\end{equation}
where $\hat p(1 + l) = \hat h_0(1 + l) \hat g_0(1 + l)$ and $l \in [-1~1]$.  
If $\hat h_0(l)$ and $\hat g_0(l)$ are polynomial in $l$
then the functions 
$\hat p(1 + l)$ and $\hat p(1 - l)$ are also polynomials in $l$, and 
can be expressed as:
  \begin{eqnarray}
  \hat p(1 + l) &=& \sum_{k=0}^K c_k (l)^k, \nonumber \\ 
  \hat p(1 - l) &=&  \sum_{k=0}^K c_k (-l)^k. 
    \label{eq:prop1pf1}
  \end{eqnarray}
Using (\ref{eq:prop1pf1}) in (\ref{eq:p_lambda_cv}), we get:
\begin{equation}
\hat p(1 + l) + \hat p(1 - l) = \sum_{k=0}^K c_k ((l)^k + (-l)^k) =  2c_0 + \sum_{k=1}^{K/2} c_{2k}l^{2k}.
 \label{eq:prop1pf2}
\end{equation}
Thus $\hat p(1 + l) + \hat p(1 - l)$ is an even polynomial function of $l$. In order for 
both~(\ref{eq:p_lambda_cv}) and (\ref{eq:prop1pf2}) to be true 
%
for all $l \in [-1~1]$, 
$c_0 = 1$
and all other even power coefficients $c_{n}$ in the polynomial expansion of $\hat p(1 + l)$ must be $0$.
Therefore, the solution
$\hat p(1 + l)$, expressed as:
\begin{equation}
 \hat p(1 + l) = 1 + \sum_{n = 0}^{K} c_{2n+1}l^{2n+1}, 
 \label{eq:prop1pf5}
\end{equation}
is an odd degree polynomial. Thus, ignoring the trivial case $\hat p(1 + l) = 1$, the highest degree of 
$\hat p(1 + l)$ and $\hat p(1 - l)$  (and hence $\hat p(\lambda)$) is always odd. 
%
%
\end{proof}

\subsection{Designing half-band kernel $\hat p(\lambda)$}
\label{sec:half-band kernel}
The following known 
results help us prove the existence 
of a polynomial $\hat p(\lambda)$ that satisfies (\ref{eq:p_lambda_cv}), and obtain its spectral 
factorization:
\begin{lemma}[Bezout's identity {\cite[prop.~3.13]{Vetterli_book}}] 
Given any two polynomials $a(l)$ and $b(l)$ of continuous variable $l$,
\begin{equation}
a(l) x(l)  + b(l) y(l)  = c(l),
 \label{eq:bezout}
\end{equation}
has a solution $[x(l ),~y( l ) ]$,  if and only if  $gcd( a ( l ) ,  b ( l ) )$  divides $c ( l )$, where 
$gcd( a(l),b(l))$ refers to the greatest common divisor of polynomials $a(l)$ and $b(l)$.   
\label{lem:biorthogonal_lem1}
\end{lemma}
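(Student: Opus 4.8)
The plan is to prove the two implications separately, using only the fact that the ring $\RR[l]$ of polynomials in the real variable $l$ is Euclidean under the degree function, so that polynomial division with remainder — and hence the Euclidean algorithm — is available. Necessity is the easy half: if $[x(l),y(l)]$ solves (\ref{eq:bezout}) and we set $d(l) = \gcd(a(l),b(l))$, then writing $a = d\,a'$ and $b = d\,b'$ gives $c = ax + by = d\,(a'x + b'y)$, so $d(l)$ divides $c(l)$. No use of the Euclidean structure is needed for this direction.

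For sufficiency the crux is an auxiliary claim: $d(l) = \gcd(a(l),b(l))$ itself can be written as $d(l) = a(l)u(l) + b(l)v(l)$ for some $u,v \in \RR[l]$. I would establish this by running the Euclidean algorithm on $a$ and $b$: set $r_{-1} = a$, $r_0 = b$, and iterate $r_{i-1} = q_{i+1}r_i + r_{i+1}$ with $\deg r_{i+1} < \deg r_i$; the strictly decreasing degrees force termination, the last nonzero remainder equals $d$ up to a nonzero scalar, and back-substitution through the chain of equalities expresses $d$ as an $\RR[l]$-linear combination of $a$ and $b$. (Equivalently, one may observe that $S = \{a u + b v : u,v \in \RR[l]\}$ is an ideal of the principal ideal domain $\RR[l]$, hence $S = (d_0)$ for a monic generator $d_0$, and then check that $d_0$ divides both $a$ and $b$ and is divisible by every common divisor, so $d_0 = d$ up to a scalar.) Granted the claim, the hypothesis $d \mid c$ gives $c = d\,w$ for some $w \in \RR[l]$, and then $x(l) = u(l)w(l)$, $y(l) = v(l)w(l)$ solves (\ref{eq:bezout}), completing the argument.

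The only substantive step is this auxiliary claim — that $\gcd(a,b)$ lies in the $\RR[l]$-span of $a$ and $b$ — and it rests entirely on the division algorithm for polynomials over a field; everything else is bookkeeping. Since the statement is also available as \cite[prop.~3.13]{Vetterli_book}, either a one-line appeal to the fact that $\RR[l]$ is a Euclidean (hence Bezout) domain, or the short termination-and-back-substitution sketch above, suffices here.
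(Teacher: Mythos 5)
Your proof is correct: necessity by factoring out $d(l)=\gcd(a(l),b(l))$, and sufficiency via the extended Euclidean algorithm (equivalently, the ideal-theoretic argument in $\RR[l]$) followed by scaling the Bezout coefficients by the cofactor $w(l)$ with $c = d\,w$ — this is the standard textbook argument. The paper itself gives no proof of this lemma, simply citing \cite[prop.~3.13]{Vetterli_book}, so your write-up supplies exactly the argument that citation stands for, and no discrepancy with the paper arises.
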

\begin{theorem}[Complementary Filters {\cite[prop.~3.13]{Vetterli_book}}]
Given a polynomial kernel $\hat h_0(l)$, there exists a complementary polynomial kernel $\hat g_0(l)$ which satisfies the 
perfect reconstruction relation in (\ref{eq:biorthogonal_perfect_reconstruct}), if and only if $\hat h_0(1 + l)$ and 
$\hat h_0(1 - l)$ are coprime. 
 \label{thm:biorthogonal_thm1}
\end{theorem}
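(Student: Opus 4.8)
The plan is to convert the perfect-reconstruction relation into a Bezout-type identity via the same change of variable $\lambda = 1+l$ used in Proposition~\ref{prop:odd_p_lambda}. Writing $a(l) := \hat h_0(1+l)$, note that $\hat h_0(1-l) = a(-l)$ as polynomials in $l$, and that a candidate synthesis kernel corresponds to a polynomial $x(l) := \hat g_0(1+l)$, for which $\hat g_0(1-l) = x(-l)$. Under these substitutions the perfect-reconstruction relation~(\ref{eq:biorthogonal_perfect_reconstruct}) becomes the single functional equation
\begin{equation}
a(l)\,x(l) + a(-l)\,x(-l) = 2 ,
\label{eq:plan_main_eq}
\end{equation}
which we read as an identity of polynomials in $l$ (two polynomials agreeing on the interval $[0~2]$ agree everywhere). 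The theorem then reduces to the claim that~(\ref{eq:plan_main_eq}) admits a polynomial solution $x(l)$ if and only if $a(l)$ and $a(-l)$ are coprime.

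The ``only if'' direction is then immediate: any polynomial $x$ solving~(\ref{eq:plan_main_eq}) is a Bezout certificate $a(l)\,x(l) + a(-l)\,x(-l) = 2$ for the pair $(a(l),\,a(-l))$ with the nonzero constant $2$ on the right, so by Lemma~\ref{lem:biorthogonal_lem1} the polynomial $\gcd(a(l),a(-l))$ must divide $2$ and hence be a nonzero constant, which is exactly coprimality of $\hat h_0(1+l)$ and $\hat h_0(1-l)$. For the ``if'' direction, coprimality together with Lemma~\ref{lem:biorthogonal_lem1} (taking $c(l)\equiv 2$) produces \emph{some} polynomial pair $(u(l),v(l))$ with $a(l)\,u(l) + a(-l)\,v(l) = 2$. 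This $u$ by itself need not satisfy~(\ref{eq:plan_main_eq}) --- equivalently, the pair need not have the reflection symmetry that would let it arise from a single kernel $\hat g_0$. I would repair this by symmetrizing: substituting $l \mapsto -l$ gives $a(-l)\,u(-l) + a(l)\,v(-l) = 2$, and averaging this with the original identity shows that $x(l) := \tfrac12\bigl(u(l) + v(-l)\bigr)$ is a polynomial satisfying $a(l)\,x(l) + a(-l)\,x(-l) = 2$, since $x(-l) = \tfrac12\bigl(u(-l) + v(l)\bigr)$. Undoing the substitution, $\hat g_0(\lambda) := x(\lambda - 1)$ is a well-defined polynomial kernel that satisfies~(\ref{eq:biorthogonal_perfect_reconstruct}).

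The substitutions and the averaging are routine; the one point that needs care --- and the real content of the ``if'' direction --- is the gap between ``a Bezout solution of $a(l)\,u(l) + a(-l)\,v(l) = 2$ exists'' and ``a solution with the built-in $l \leftrightarrow -l$ symmetry exists,'' since only the latter corresponds to an honest complementary kernel $\hat g_0(\lambda)$; the symmetrization above is what closes this gap. It is also worth noting that the construction cannot collapse to the trivial $x \equiv 0$, since that would make the left-hand side of~(\ref{eq:plan_main_eq}) vanish rather than equal $2$.
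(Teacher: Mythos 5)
Your proof is correct and follows essentially the same route as the paper: rewrite the perfect-reconstruction relation as the Bezout identity $a(l)x(l)+a(-l)x(-l)=2$ with $a(l)=\hat h_0(1+l)$, invoke Lemma~\ref{lem:biorthogonal_lem1} to tie solvability to $\gcd\bigl(a(l),a(-l)\bigr)$ dividing the constant $2$, and close the gap between an arbitrary Bezout pair and a reflection-symmetric one by the averaging $x(l)=\tfrac12\bigl(u(l)+v(-l)\bigr)$, which is exactly the paper's choice $\hat g_0(1+l)=\tfrac12\bigl(x(l)+y(-l)\bigr)$. If anything, your write-up states the two directions and the symmetrization step more cleanly than the paper does.
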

\begin{proof}
Let us denote $a(l) = h_0(1 + l)$, $b(l) = h_0(1 - l) = a(-l)$, 
and $c(l) = 2$. Then,~(\ref{eq:biorthogonal_perfect_reconstruct}) can be written 
in the same form as (\ref{eq:bezout}), i.e.,
\begin{equation}
a(l) x(l)  + a(-l) y(l)  = 2,
 \label{eq:bezout1}
\end{equation} 
We first note  that if a polynomial solution $[x(l),~y(l)]$  of~(\ref{eq:bezout1}) exists, 
then
\begin{equation}
a(-l) x(-l)  + a(l) y(-l)  = 2,
 \label{eq:bezout2}
\end{equation}
also has a polynomial solution. Subsequently, combining~(\ref{eq:bezout1}) 
and~(\ref{eq:bezout2})
and choosing $\hat g_0(1 + l) = 1/2(x(l) + y(-l))$, we find that 
\begin{equation}
a(-l) g_0(1+l)  + a(l) g_0(1-l)  = 2,
 \label{eq:bezout3}
\end{equation}
also has a polynomial solution. 
However, based on Lemma \ref{lem:biorthogonal_lem1}, 
(\ref{eq:bezout1}) has 
a polynomial solution 
{\em if and only if}  $gcd( h_0(1 + l),h_0(1 - l))$ 
divides $c(l) = 2$, which is 
a prime number. 
This implies $gcd( h_0(1 + l),h_0(1 - l))$ is either $1$ or $2$ for all $l \in [-1~1]$, 
which is true iff $\hat h_0(1+l)$ and $\hat h_0(1-l)$ do not have any common roots. This implies that $\hat h_0(1+l)$ and $\hat h_0(1-l)$ are coprime.
%
\end{proof}

\begin{corollary}[{\cite[exercise.~3.12]{Vetterli_book}}]
 There is always a complementary filter for the polynomial kernel $(1 + l)^k$, i.e.,
\begin{equation}
(1 + l)^kR(l) + (1 - l)^kR(-l) = 2
 \label{eq:binomial_filter}
\end{equation}
always has a real polynomial solution $R(l)$ for $k \geq 0$.
\label{corr:corr1}
\end{corollary}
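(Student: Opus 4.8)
The plan is to read~(\ref{eq:binomial_filter}) as the special case of Theorem~\ref{thm:biorthogonal_thm1} in which the lowpass analysis kernel is the pure monomial $\hat h_0(\lambda) = \lambda^k$, so that $\hat h_0(1+l) = (1+l)^k$ and $\hat h_0(1-l) = (1-l)^k$, and~(\ref{eq:binomial_filter}) is exactly the perfect-reconstruction relation~(\ref{eq:biorthogonal_perfect_reconstruct}) written in the shifted variable $\lambda = 1+l$. By Theorem~\ref{thm:biorthogonal_thm1}, a real polynomial complementary filter $R(l)$ exists if and only if $(1+l)^k$ and $(1-l)^k$ are coprime, so the entire argument reduces to verifying this coprimality.

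For the coprimality step I would simply observe that $(1+l)^k$ has $l=-1$ as its only root (with multiplicity $k$) while $(1-l)^k$ has $l=+1$ as its only root; since $-1 \neq +1$, the two polynomials have no common zero, hence $\gcd\big((1+l)^k,(1-l)^k\big)$ is a nonzero constant, which trivially divides the constant polynomial $c(l)=2$. Invoking Lemma~\ref{lem:biorthogonal_lem1} (Bezout's identity) over $\RR[l]$ then yields real polynomials $x(l),y(l)$ with $(1+l)^k x(l)+(1-l)^k y(l)=2$, realness of the cofactors being automatic from the Euclidean algorithm applied to polynomials with real coefficients. To put this in the symmetric form required by~(\ref{eq:binomial_filter}) I would reuse the averaging trick from the proof of Theorem~\ref{thm:biorthogonal_thm1}: substitute $l\mapsto -l$, add the two relations, and set $R(l):=\tfrac12\big(x(l)+y(-l)\big)$, which is a real polynomial and satisfies $(1+l)^k R(l)+(1-l)^k R(-l)=2$. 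The trivial case $k=0$ is handled by $R\equiv 1$.

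The step carrying the actual mathematical content — the \emph{only} place where the structure of these particular kernels matters — is the coprimality observation; everything downstream (existence of Bezout cofactors, preservation of real coefficients, symmetrization) is mechanical once that is noted. Since the statement is merely an existence claim, nothing further is strictly needed, but I would also record the classical maximally-flat witness $R(l) = 2^{\,1-k}\sum_{n=0}^{k-1}\binom{k-1+n}{n}\big(\tfrac{1-l}{2}\big)^{n}$, which comes from the standard half-band factorization and can be checked directly against~(\ref{eq:binomial_filter}); this is the choice that makes the eventual factorization of $\hat p(\lambda)$ into $\hat h_0(\lambda)\hat g_0(\lambda)$ maximally flat at $\lambda = 0$ and $\lambda = 2$, in line with the CDF construction the paper is emulating.
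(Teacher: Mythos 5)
Your argument is correct and follows essentially the same route as the paper: observe that $(1+l)^k$ and $(1-l)^k$ have no common roots, so their gcd (a nonzero constant) divides $c(l)=2$, and then invoke Bezout's identity (Lemma~\ref{lem:biorthogonal_lem1}, equivalently the coprimality condition of Theorem~\ref{thm:biorthogonal_thm1}) to obtain a real polynomial solution. You go slightly further than the paper by making explicit the symmetrization $R(l)=\tfrac12\bigl(x(l)+y(-l)\bigr)$ needed to pass from an arbitrary Bezout pair to the symmetric form $(1+l)^kR(l)+(1-l)^kR(-l)=2$ (a step the paper's corollary proof glosses over, though it appears in its proof of Theorem~\ref{thm:biorthogonal_thm1}), and by exhibiting the classical maximally-flat closed-form witness, both of which are correct.
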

\begin{proof}
 Let us denote $a(l) = (1 + l)^k$, $b(l) = (1 - l)^k$, $x(l) = R(l)$, $y(l) = R(- l)$ 
and $c(l) = 2$. Then,~(\ref{eq:binomial_filter}) can be written in the same form as (\ref{eq:bezout}). Since $a(l)$ and $b(l)$, in this case 
are coprime, therefore $gcd( a(l),b(l)) =1$ divides $c(l) = 2$. Hence, a polynomial $R(l)$, which satisfies~(\ref{eq:binomial_filter}) always exists. 
\end{proof}


For a perfect reconstruction biorthogonal filterbank, we need to design a polynomial half-band kernel $\hat p(\lambda)$ that 
satisfies~(\ref{eq:biorthogonal_perfect_reconstruct}) for all $\lambda \in [0~2]$, or  equivalently $\hat p(l)$ that satisfies~(\ref{eq:p_lambda_cv}) for all 
$l \in [-1~1]$.
Following Daubechies' approach \cite{CDF9}, 
we propose 
a {\em maximally-flat} design, in which  we assign 
$K$ roots to $\hat p(\lambda)$ at the lowest eigenvalue 
(i.e., at $\lambda = 0$). Subsequently, we select $\hat p(\lambda)$ to be the shortest length
polynomial, which has $K$ roots at $\lambda = 0$ and satisfies (\ref{eq:p_lambda_cv}). 
This implies that 
$\hat p(1 + l)$ has
$K$ roots
at $l = -1$,  
and can be expanded as:
\begin{equation}
 \hat p(1+l)  =  (1 + l)^{K} \underbrace{\sum_{m = 0}^{k}r_ml^m.}_{R(l)} 
 \label{eq:p_lambda_expansion1}
\end{equation}
where $R(l)$ is the residual $k$ degree polynomial of $l$. By Corollary \ref{corr:corr1}, there always exist such a polynomial 
$R(l)$.
On the other hand, Proposition \ref{prop:odd_p_lambda} says
that any $\hat p(1 + l)$ that satisfies~(\ref{eq:p_lambda_cv}) has to be an odd-degree polynomial.
Hence,
$\hat p(1 + l)$ can also be expanded as:
\begin{equation}
 \hat p(1+l) = 1 + \sum_{n = 0}^{M} c_{2n+1}l^{2n+1}. 
 \label{eq:p_lambda_expansion2}
\end{equation}
for a given $M$.
%
Comparing  (\ref{eq:p_lambda_expansion1}) and (\ref{eq:p_lambda_expansion2}), we get:
\begin{equation}
\displaystyle (1 + l)^{K}\sum_{m = 0}^{k}r_ml^m =  1 + \sum_{n = 0}^{M} c_{2n+1}l^{2n+1}.
 \label{eq:p_lambda_expnasion_comp}
\end{equation}
Comparing the constant terms in the left and right side of (\ref{eq:p_lambda_expnasion_comp}), we get $r_0 = 1$.
Further, comparing the highest powers on both sides of (\ref{eq:p_lambda_expnasion_comp}) we get: 
\begin{equation}
 M = \frac{K+k-1}{2}  
 \label{eq:highest_power_compare}
\end{equation}
Further, the right side in (\ref{eq:p_lambda_expnasion_comp}) has $M$ constraints 
$c_{2n} = 0$ for $n = \{1,2,...K\}$, and the left side in (\ref{eq:p_lambda_expnasion_comp}) 
has $k$ unknowns $r_m$ for $m = \{1,2,...k\}$. 
In order to get a unique $\hat p(1+l)$ that satisfies (\ref{eq:p_lambda_cv}), 
we must have equal number of unknowns and constraints, i.e,
\begin{equation}
 M = k =  \frac{K + k-1}{2}~~\Rightarrow~~M = K - 1.   
 \label{eq:biorthogonal51}
\end{equation}
Thus, (\ref{eq:p_lambda_expnasion_comp}) can be written as: 
\begin{equation}
\displaystyle  (1 + l)^{K}( 1+ \sum_{m = 1}^{K-1}r_ml^m) = 1 + \sum_{n = 0}^{K-1} c_{2n+1}l^{2n+1},
 \label{eq:p_lambda_expansion_comp_final}
\end{equation}
and $K-1$ unknowns can be found uniquely, by 
solving a linear system  of $K-1$ equations. Note that given $K$, the length of $\hat p(l)$ (i.e, highest degree) 
is $K+M = 2K-1$. 
As an example, we design $\hat p(\lambda)$ with $K = 2$ zeros at $l = 0$. 
In this case $\hat p(1+l)$ can be written as:
\begin{equation}
\displaystyle \hat p(1+l) =  (1 + l)^2 (1 + r_1l) = 1 + (r_1 + 2)l + (1 + 2r_1) l^2 + r_1 l^3 \nonumber \\
 \label{eq:p_lambda_ex1}
\end{equation}
Since $\hat p(1+l)$ is an odd polynomial, the term corresponding to $l^2$ is zero, i.e.,  $1 + 2r_1 = 0$ or $r_1 = -1/2$. Therefore, 
$\hat p(1+l)$ is given as:
\begin{eqnarray}
\hat p(1 + l) &=&  (1 + l)^2( 1 - \frac{1}{2}l),
 \label{eq:p_lambda_ex2}
\end{eqnarray}
which implies that:
\begin{eqnarray}
p(\lambda) &=&  \frac{1}{2}\lambda^2( 3 - \lambda).
 \label{eq:p_lambda_ex3}
\end{eqnarray}
In Figure~\ref{fig:p_lambda_plot}, we plot $\hat p(\lambda)$ for various values of $K$, and it can be 
seen that by increasing $K$, we get a $\hat p(\lambda)$ representing a better approximation to the ideal halfband filter. 
\begin{figure}[htb]
\centering
\includegraphics[width=5in]{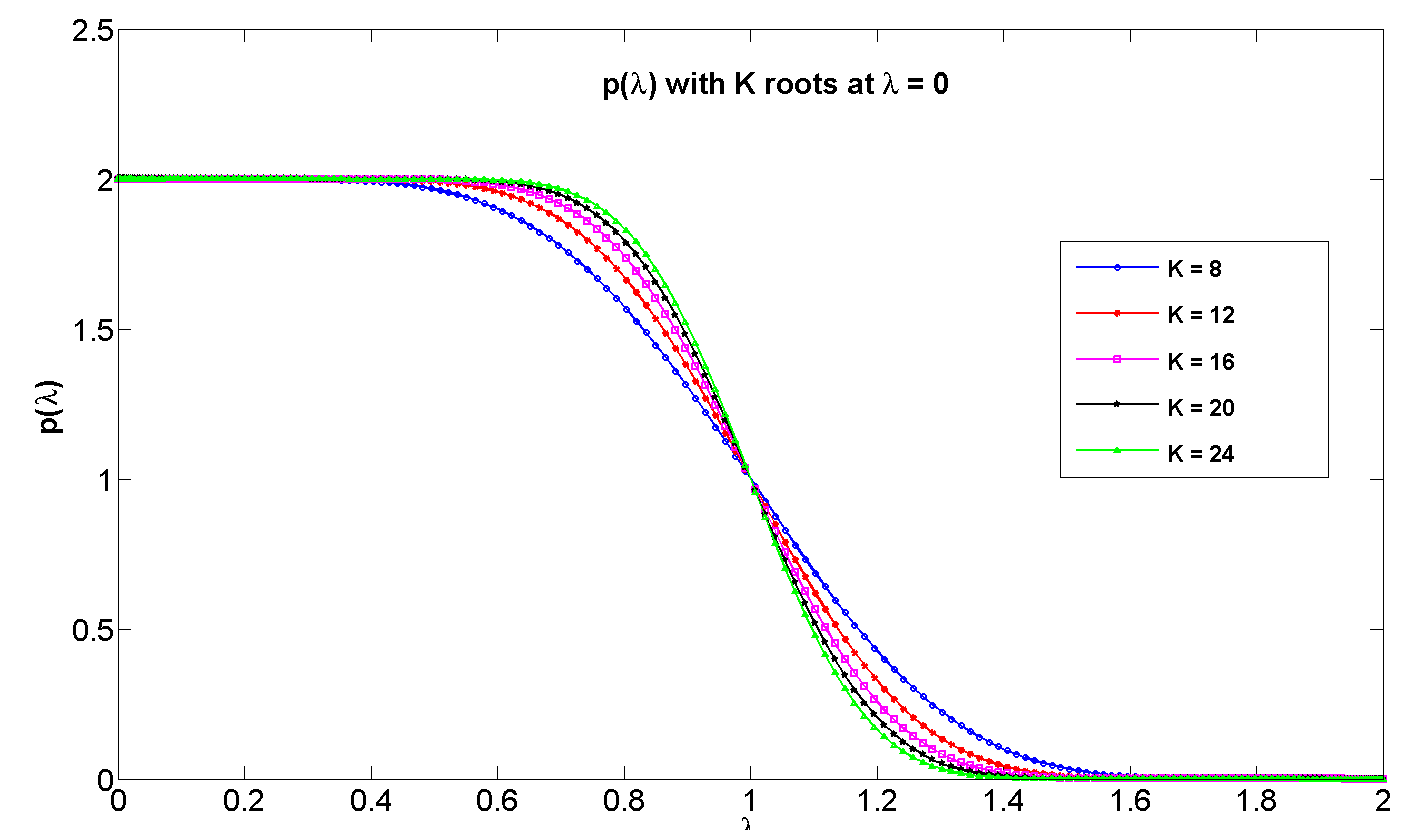}
\caption{ The spectral distribution of $\hat p(\lambda)$ with $K$ zeros at $\lambda = 0$} 
\label{fig:p_lambda_plot}
\end{figure}

Note that the graph-QMF designs in~\cite{SunilTSP} were based on selecting 
$\hat g_0(\lambda) = h_0(\lambda)$
hence $p_{QMF}(\lambda) = h_0^2(\lambda)$.  Thus, if 
$\hat h_0(\lambda)$ is a polynomial kernel 
then $p_{QMF}(\lambda)$ is 
the square of a polynomial, and therefore should have an even degree.
However, as proven in Proposition~\ref{prop:odd_p_lambda}, 
any polynomial $\hat p(\lambda)$ 
which satisfies~(\ref{eq:biorthogonal_perfect_reconstruct}) 
is an odd-degree polynomial. 
{\bf Therefore, $\hat h_0(\lambda)$ 
in the graph-QMF designs, cannot be an exact polynomial.} %
\subsection{Spectral factorization of half-band kernel $\hat p(\lambda)$}
Once we obtain $\hat p(\lambda)$ by using the above mentioned design, we 
need to factor it into filter kernels $\hat h_0(\lambda)$ and $\hat g_0(\lambda)$. 
Since $\hat p(\lambda)$ is a real 
polynomial of odd degree, it has at least  
one real root and all the complex roots occur in conjugate pairs. 
Since 
we want the two kernels to be polynomials with real coefficients, 
each complex conjugate root pair of $\hat p(\lambda)$ should be assigned together to either 
$\hat h_0(\lambda)$ or $\hat g_0( \lambda)$. While any such factorization 
would lead to perfect reconstruction biorthogonal filterbanks, 
of particular interest is the design of filterbanks that 
are as close to orthogonal as possible. 
For this, we define a criterion based on energy preservation.
In particular, we compute the 
Riesz bounds of analysis wavelet 
transform $\Tm_a$, which are the 
tightest lower and upper bounds, 
$A>0$ and $B < \infty$, 
of $||\Tm_a \fv||_{2}$, for any 
graph-signal $\fv$ with $||\fv||_2 = 1$. 
For near-orthogonality, 
we require $A \approx B \approx 1$. 
The bounds  $A$ and $B$ can be computed 
as the minimum and maximum singular 
values of the overall analysis side 
transform $\Tm_a$ of 
the two channel filterbank. The transform 
matrix $\Tm_a$ consists of some rows of lowpass transform $\Hm_0$ 
(those corresponding to the $L$ set) and some rows of 
highpass transform $\Hm_1$ (those corresponding to the $H$ set) and
can be written as: 
\begin{equation}
\Tm_a = \frac{1}{2}(\Id + \Jm_{\beta}) \Hm_0 + \frac{1}{2}(\Id - \Jm_{\beta}) \Hm_1 
 \label{eq:analysis_tx}
\end{equation}
The singular values of $\Tm_a$ are also the square roots of eigenvalues of $\Tm_a^\top \Tm_a$, which 
can be expanded as (see~\cite{SunilTSP} for details):
%
%
%
%
%
\begin{eqnarray}
\Tm_{a}^\top\Tm_a & = & 1/2\sum_{\lambda \in \sigma(\Bc)}\underbrace{(h_0^2(\lambda)+h_1^2(\lambda))}_{C(\lambda)}\Pm_{\lambda}  \nonumber \\
& + & 1/2 \sum_{\lambda \in \sigma(\Bc)}\underbrace{(h_1(\lambda)h_1(2 - \lambda)-h_0(\lambda)h_0(2 - \lambda))}_{D(\lambda)}\Jm_{\beta}\Pm_{\lambda}, 
\label{eq:biorthogonal_tx_orthogonal}
\end{eqnarray}
where $\Jm_{\beta}$ is the diagonal matrix of binary function $\beta$.
In~(\ref{eq:biorthogonal_tx_orthogonal}), the term $D(\lambda)$ consists of product terms 
$\hat h_0(\lambda) \hat h_0(2 - \lambda)$ and $\hat h_1(\lambda) \hat h_1(2 - \lambda)$, 
which are small for $\lambda$ away from the transition band around 
$1$ (since these are the products of a low pass and a high pass kernel). Further, in the transition band
when $\lambda$ is close to $1$, the value of $D(\lambda) \approx \hat h_0^2(\lambda)-\hat h_1^2(\lambda)$ is very small 
compared to $C(\lambda) \approx \hat h_0^2(\lambda) + \hat h_1^2(\lambda)$.
%
%
Therefore, we can ignore $D(\lambda)$ in comparison to 
$C(\lambda)$, and (\ref{eq:biorthogonal_tx_orthogonal}) can be approximately reduced to:
{\small 
\begin{equation}
\Tm_{a}^\top\Tm_a \approx  1/2\sum_{\lambda \in \sigma(\Bc)}\underbrace{(h_0^2(\lambda)+h_1^2(\lambda))}_{C(\lambda)}\Pm_{\lambda}  
\label{eq:tx_orthogonal3}
\end{equation}
}
Thus, $\Tm_{a}^\top\Tm_a$ is a spectral transform with eigenvalues $1/2(\hat h_0^2(\lambda)+\hat h_1^2(\lambda))$ for $\lambda \in \sigma(\Bc)$, 
and the Riesz Bounds can be given as:
{\small 
\begin{eqnarray}
A & = & \sqrt{\inf_{\lambda} \frac{1}{2}(\hat h_0^2(\lambda)+ \hat h_1^2(\lambda))} \nonumber \\
B & = & \sqrt{\sup_{\lambda} \frac{1}{2}(\hat h_0^2(\lambda)+\hat h_1^2(\lambda))} 
\label{eq:Riesz_bound}
\end{eqnarray}
}  
We define $\Theta$, as the measure of 
orthogonality, given as:
\begin{equation}
 \Theta = 1 - \frac{|B-A|}{|B+A|}.
 \label{eq:Theta_def}
\end{equation}
For orthogonal filterbanks $\Theta = 1$. We choose filters with least dissimilar lengths, and for near orthogonal 
designs, compute $\Theta$ for all such possible factorizations (there are ${2K-1 \choose K}$ possible choices), and 
choose the factorization with the maximum absolute value of $\Theta$. 
Note that the computation of $\Theta$, and hence the choice of the best solution depends on the exact 
distribution of eigenvalues $\lambda$ in the interval $[0~2]$, which in turn depends on the underlying graph. 
For a graph independent design, 
%
we approximate $A^2$ and $B^2$ as the 
lowest and highest values, respectively,
of $1/2(h_0^2(\lambda)+h_1^2(\lambda))$ at 
$100$ uniformly sampled points from the continuous 
region $[0~2]$, respectively, and use these approximations to compute $\Theta$.

\subsection{Unity gain compensation}
\label{sec:GC}
In order to 
avoid unnecessary growth of 
dynamic range in the 
output, it is desirable to  
normalize the filterbanks, 
such that the {\em impulse responses} of 
lowpass and highpass filters have equal (ideally unity) gain. 
In the graph-QMF case~\cite{SunilTSP}, the 
orthogonality condition ensures that
all filters have unity gain. However, this is not true for the
proposed graphBior filterbanks. 
Similar {\em gain compensations} 
have also been proposed for biorthogonal DWT filterbanks 
(for example, see~\cite{JPEG2000Compensate}). 
The JPEG2000 standard, for example, requires 
the lowpass filter to have unity response for the
DC frequency ($\omega = 0$), 
and the highpass filter to have unity response 
for the Nyquist frequency ($\omega = \pi$).  In this paper, 
we follow similar specifications, as given in JPEG2000, to normalize the gains of 
graphBior filters. In the bipartite graph case, $\omega = 0$ 
and $\omega = \pi$ correspond to 
the lowest ($\lambda = 0$) and the highest ($\lambda = 2$) magnitude eigenvalues, 
respectively.
Thus, given a filter $\Hm$ with spectral kernel $h(\lambda)$, the gain factor 
of $\Hm$ is equal to $|1 / h(0)|$, if $\Hm$ is a lowpass filter, and is equal to 
$|1 / h(2)|$, if $\Hm$ is a highpass filter.  In the filterbank implementation,
a GC block is applied at the analysis side 
in each 
channel after filtering and 
downsampling, and an inverse GC 
block is applied at the synthesis 
side prior to filtering and upsampling. As a result, the filterbank remains perfect reconstruction. 

\subsection{Nomenclature and design of graphBior filterbanks}
\label{sec:biorthgonal_spectral_fb}
The proposed biorthogonal filterbanks are specified by four parameters ($k_0,~k_1,~l_0,~l_1$), where
$k_0$ is the number of roots of low pass analysis kernel $\hat h_0(\lambda)$ at $\lambda = 0$, 
$k_1$ is the number of roots of low pass synthesis kernel $\hat g_0(\lambda)$ at $\lambda = 0$, 
$l_0$ is the highest degree of low pass analysis kernel $\hat h_0(\lambda)$, 
and $l_1$ is the highest degree of low pass synthesis kernel $\hat g_0(\lambda)$. 
The other two filters, namely $\hat h_1(\lambda)$ and $\hat g_1(\lambda)$, can be computed as in~(\ref{eq:perfect_reconstruct2}).
Given these specifications, we design $\hat p(\lambda) = \hat h_0(\lambda) \hat g_0(\lambda)$ 
as a maximally flat half band polynomial kernel with $K = k_0 + k_1$  roots 
at $\lambda = 0$. As a result, $\hat p(\lambda)$ turns out to be a $2K-1$ degree polynomial, 
and we factorize it into $\hat h_0(\lambda)$ 
and $\hat g_0(\lambda)$, with least dissimilar lengths (i.e., we choose $l_0 = K$ and $l_1 = K-1$). 
We use  $\Theta$ as the criterion to compare various possible factorizations, and choose 
the one with the maximum value of $\Theta$. This leads to a unique design of biothogonal filterbanks. 
We term our proposed filterbanks as {\em graphBior$(k_0,k_1)$}. We designed graphBior filterbanks for various 
values of $(k_0,k_1)$, and we observed that designs with $k_0 = k_1$ stand out, 
as they are close to orthogonal and have near-flat pass-band responses. 
The lowpass and highpass analysis kernels are plotted in Figure~\ref{fig:spectral_response}, and their coefficients are shown in 
Table~\ref{tab:graph_Bior_coeffs}. 
\begin{figure}[htb]
\begin{center}
\subfigure[]{
   \includegraphics[width = 2.5in] {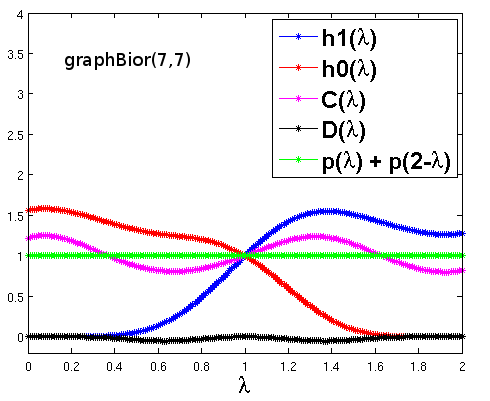}
   \label{fig:graphBior7}
 }
\subfigure[]{
   \includegraphics[width = 2.5in] {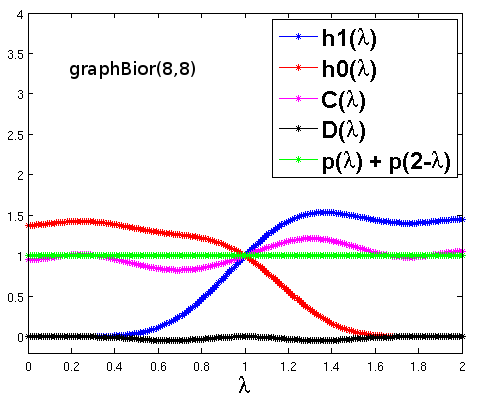}
   \label{fig:graphBior8}
 }
\subfigure[]{
   \includegraphics[width = 2.5in] {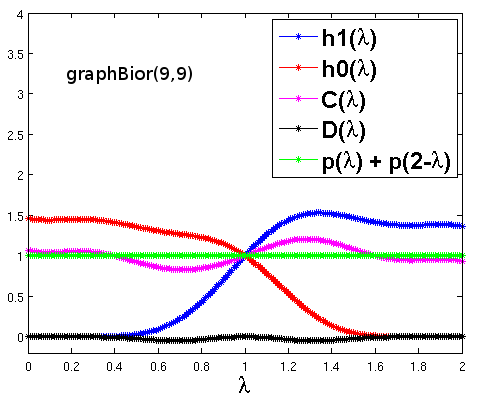}
   \label{fig:graphBior9}
 }
\subfigure[]{
   \includegraphics[width = 2.5in] {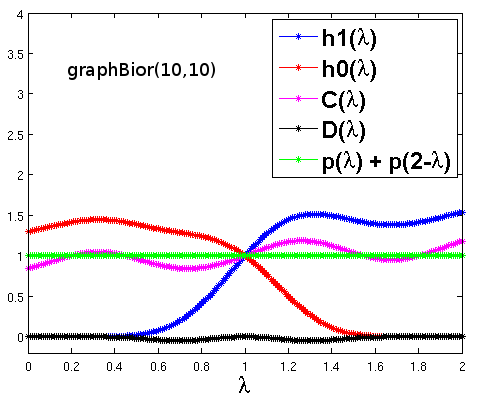}
   \label{fig:graphBior10}
 }
\subfigure[]{
   \includegraphics[width = 2.5in] {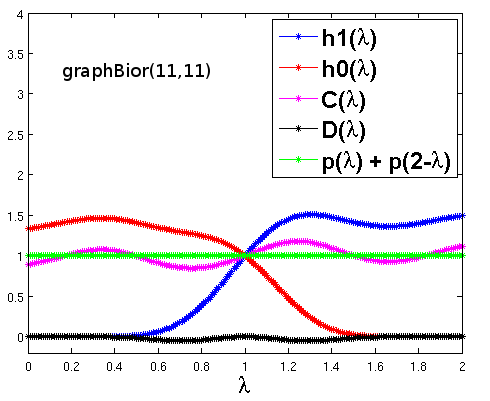}
   \label{fig:graphBior11}
 }
\subfigure[]{
   \includegraphics[width = 2.5in] {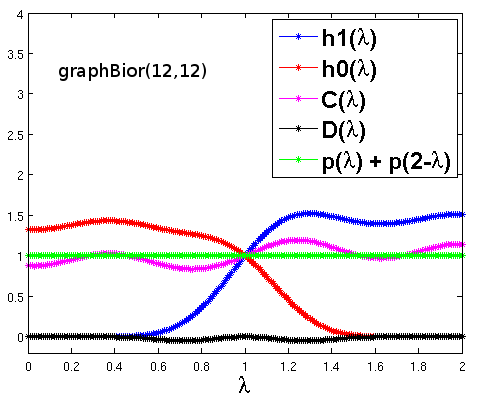}
   \label{fig:graphBior12}
 }
\caption{ Spectral responses of graphBior$(k_0,k_1)$ filters on a bipartite graph. In each plot, $\hat h_0(\lambda)$ and 
$\hat h_1(\lambda)$ are lowpass and highpass analysis kernels, $C(\lambda)$ and $D(\lambda)$ constitute the spectral response of the overall 
analysis 
filter $\Tm_a$, as in~(\ref{eq:biorthogonal_tx_orthogonal}). For near-orthogonality $D(\lambda) \approx 0$ and $C(\lambda) \approx 1$. Finally, 
$(p(\lambda) + p(2-\lambda))/2$ represents perfect reconstruction property as in~(\ref{eq:p_lambda_cv}), 
and should be constant equal to $1$, for perfect reconstruction.}
\label{fig:spectral_response}
\end{center}
\end{figure}

\begin{table}[htbp]
\begin{center}
\begin{tabular}{|c|p{9cm}|}
\hline
\textbf{graphBior($k_0$,$k_1$)} & \multicolumn{1}{c|}{\textbf{filter coefficients }} \\ \hline
$k_0$ = 6, $k_1$ = 6 & $\hat h_1$ = [-0.3864    4.0351  -17.0630   36.5763  -39.8098   17.6477         0         0         0         0         0         0] \\
                                      & $\hat h_0$ = [ 0.4352   -4.9802   23.2396  -55.4662   67.2657  -29.0402  -13.0400    7.5253    9.5267   -4.8746   -2.0616    1.2633    1.2071] \\ \hline
$k_0$ = 7, $k_1$ = 7 & $\hat h_1$ = [0.3115   -3.9523   21.0540  -60.3094   98.0605  -85.9222   31.7578         0         0         0         0         0         0          0] \\
                                      & $\hat h_0$ = [ -0.4975    6.8084  -39.6151  126.2423 -234.3683  241.5031  -97.6557  -46.2635   62.1232  -19.3648   -2.0766    6.5886   -4.5632    0.5775    1.5614] \\ \hline
$k_0$ = 8, $k_1$ = 8 & $\hat h_1$ = [-0.3232    4.7284  -29.7443  104.3985 -221.0705  282.7915 -202.6283   62.8477         0         0         0         0         0         0         0         0] \\
                                      & $\hat h_0$ = [  0.4470   -6.9872   47.5460 -183.6940  440.0924 -670.0905  643.3979 -396.0713  209.9824 -154.0976   92.8617  -30.8228   16.6112  -12.7664    3.2403   -0.0284    1.3793] \\ \hline
\end{tabular}
\end{center}
\caption{ Polynomial expansion coefficients (highest degree first) of graphBior $(k_0,k_1)$ filters (approximated to $4$ decimal places) on a bipartite graph. Refer to the Matlab code for more accurate coefficients. }
\label{tab:graph_Bior_coeffs}
\end{table}

\section{ zeroDC graphBior Filterbanks}
\label{sec:zerodc_gb}
In many application such as images, videos and wireless sensor networks etc., 
where the underlying graph resides in a physical space, 
an all constant 
signal (a DC signal) has a physical interpretation, and the 
wavelet filters are designed to be orthogonal 
to the dc signal. In our proposed nonzeroDC graphBior design, 
wavelet transforms (highpass transforms) are designed to be 
orthogonal to the eigenvector corresponding to $0$ eigenvalue of 
the normalized Laplacian matrix $\Lcb$, 
which is $\Dm^{1/2}\onev$ where $\Dm$ is the degree matrix. 
If the graph is almost regular (i.e., has almost same degree at all nodes), 
then the vector $\Dm^{1/2}\onev$ is almost constant. 
%
%
%
To obtain a zero DC response for other non-regular graphs
we propose {\em zeroDC graphBior filterbanks} designs.
Construction-wise,  the only difference between  
{\em zeroDC graphBior filterbanks} and {\em nonzeroDC graphBior filterbanks}
is that the former are designed using random-walk Laplacian matrix $\Lcb_r$
while the latter are designed using normalized Laplacian matrix $\Lcb$. 
The two Laplacian matrices are {\em similar}, hence 
their eigenvalues are identical. Therefore, no change is needed in the design 
if spectral kernels designed above. 
%
In order to compute zeroDC filterbanks, we simply 
replace $\Lcb$ by $\Lcb_r$ in~(\ref{eq:spectral_tx}), i.e.,
\begin{eqnarray}
 \Hm_{ri} &=& \hat h_i(\Lcb_r) \nonumber \\
 \Gm_{ri} &=& \hat g_i(\Lcb_r),
 \label{eq:asym_filters}
\end{eqnarray}
and replacing symmetric filters $\Hm_i$ and $\Gm_i$ by $\Hm_{ri}$ and $\Gm_{ri}$, respectively,  
in the two-channel nonzeroDC filterbank implementation shown in Figure~\ref{fig:graph_filterbank}. 
%
The following results
describe the properties of proposed zeroDC filterbanks:

\begin{proposition}[{\bf zero DC response}]
For any connected graph, if the spectral kernel $\hat h(\lambda)$ 
is such that $\hat h(0) = 0$, then the transform $\Hm_{r} = \hat h(\Lcb_r)$ has a zero DC response, i.e., $\Hm_r\onev = \zerov$.
Further, if $\hat h(\lambda)$ is a polynomial kernel then  the transforms $\Hm_r=\hat h(\Lcb_r)$ and $\Hm = \hat h(\Lcb)$ 
are related as:
 \begin{eqnarray}
 \displaystyle {\Hm}_{r} = \Dm^{-1/2}\Hm\Dm^{1/2}.
\label{eq:biorthogonal_spectral_tx1}
\end{eqnarray} 
\end{proposition}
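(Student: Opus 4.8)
The plan is to derive both claims from the single similarity identity $\Lcb_r = \Dm^{-1/2}\Lcb\Dm^{1/2}$. First I would establish this by direct substitution of the definitions $\Lcb = \Dm^{-1/2}\Lm\Dm^{-1/2}$ and $\Lcb_r = \Dm^{-1}\Lm$: inserting $\Dm^{1/2}\Dm^{-1/2} = \Id$ gives $\Dm^{-1/2}\Lcb\Dm^{1/2} = \Dm^{-1/2}\big(\Dm^{-1/2}\Lm\Dm^{-1/2}\big)\Dm^{1/2} = \Dm^{-1}\Lm = \Lcb_r$. This is precisely the similarity between $\Lcb_r$ and the symmetric matrix $\Lcb$ already noted in Section~\ref{sec:prelim}; in particular $\Lcb_r$ inherits a full set of eigenvectors with real eigenvalues $\sigma(G)$, and its eigenspace projection matrices satisfy $\Pm^r_\lambda = \Dm^{-1/2}\Pm_\lambda\Dm^{1/2}$.

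For the relation $\Hm_r = \Dm^{-1/2}\Hm\Dm^{1/2}$, I would use that similarity passes through polynomials: writing $\hat h(\lambda) = \sum_k a_k\lambda^k$, the interior factors telescope, $\Lcb_r^{\,k} = \big(\Dm^{-1/2}\Lcb\Dm^{1/2}\big)^k = \Dm^{-1/2}\Lcb^{\,k}\Dm^{1/2}$, so $\Hm_r = \hat h(\Lcb_r) = \sum_k a_k\,\Dm^{-1/2}\Lcb^{\,k}\Dm^{1/2} = \Dm^{-1/2}\big(\sum_k a_k\Lcb^{\,k}\big)\Dm^{1/2} = \Dm^{-1/2}\Hm\Dm^{1/2}$. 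Equivalently, substituting $\Pm^r_\lambda = \Dm^{-1/2}\Pm_\lambda\Dm^{1/2}$ into the spectral form $\Hm_r = \sum_{\lambda\in\sigma(G)}\hat h(\lambda)\Pm^r_\lambda$ analogous to~(\ref{eq:spectral_tx}) gives the same conclusion, and this version in fact holds for any kernel, not only polynomial ones.

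For the zero-DC property I would argue directly on $\Lcb_r$. Since $\Lm\onev = (\Dm - \Am)\onev = \zerov$ (the $i$-th entry being $d_i - \sum_j w_{ij} = 0$), we get $\Lcb_r\onev = \Dm^{-1}\Lm\onev = \zerov$, i.e.\ $\onev$ lies in the $0$-eigenspace of $\Lcb_r$ (connectedness makes this eigenvalue simple, but that is not needed here). Hence $\Lcb_r^{\,k}\onev = \zerov$ for every $k\ge 1$, and for a polynomial kernel with $\hat h(0) = 0$, i.e.\ vanishing constant term $a_0$, we obtain $\Hm_r\onev = \sum_{k\ge 1} a_k\Lcb_r^{\,k}\onev = \zerov$. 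For a general kernel the same follows from the spectral form: $\Hm_r\onev = \sum_{\lambda}\hat h(\lambda)\Pm^r_\lambda\onev = \hat h(0)\,\Pm^r_0\onev = \hat h(0)\,\onev = \zerov$, since $\onev$ sits entirely in the $\lambda = 0$ eigenspace so only that term survives.

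The computation itself is routine; the only point needing mild care is what $\hat h(\Lcb_r)$ means when $\hat h$ is not a polynomial. I would handle this by falling back on the (non-orthogonal) eigendecomposition of $\Lcb_r$ supplied by the similarity with $\Lcb$, which legitimizes the functional-calculus expression $\hat h(\Lcb_r) = \sum_{\lambda\in\sigma(G)}\hat h(\lambda)\Pm^r_\lambda$ used above; with that in hand the remaining steps are bookkeeping.
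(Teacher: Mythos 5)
Your proof is correct and follows essentially the same route as the paper: both rest on the similarity $\Lcb_r = \Dm^{-1/2}\Lcb\Dm^{1/2}$, push it through the functional calculus to get $\Hm_r = \Dm^{-1/2}\Hm\Dm^{1/2}$, and then use that $\onev$ (equivalently $\Dm^{1/2}\onev$ for $\Lcb$) spans the zero eigenspace so that $\Hm_r\onev = \hat h(0)\onev = \zerov$. Your telescoping-powers and direct $\Lcb_r\onev=\zerov$ variants are only cosmetic differences from the paper's eigendecomposition argument, and your remark that the similarity extends to non-polynomial kernels via the spectral form is likewise implicit in the paper's proof.
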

\begin{proof}
The random walk Laplacian matrix $\Lcb_r$ can be diagonalized as:
\begin{equation}
 \Lcb_r = \Dm^{-1/2}\Um \Lam (\Dm^{-1/2}\Um)^{-1} =  \Dm^{-1/2}\Um \Lam\Um^\top\Dm^{1/2}. \nonumber
\end{equation}
Therefore, any function of $\Lcb_r$ can be written as:
\begin{equation}
 \Hm_r = \hat h(\Lcb_r) = \Dm^{-1/2}\Um \hat h(\Lam) \Um^\top\Dm^{1/2} =  \Dm^{-1/2}\hat h(\Lcb)\Dm^{1/2} = \Dm^{-1/2} \Hm \Dm^{1/2}. \nonumber
\end{equation}
Further, if $\uv_l$ is an eigenvector of normalized Laplacian matrix $\Lcb$ 
corresponding to eigenvalue $\lambda_l$, 
then by definition $\Hm\uv_l = \hat h(\lambda_l)\uv_l$. Since $\uv_0 = \Dm^{1/2}\onev$
is the eigenvector of $\Lcb$ with eigenvalue $0$, this implies:
\begin{equation}
 \Hm_r\onev = \Dm^{-1/2} \Hm \Dm^{1/2}\onev = \lambda_0 \onev = \zerov. \nonumber
\end{equation}
Thus $\Hm_r$ has zero DC response. 
\end{proof}
\begin{proposition}[{\bf perfect reconstruction property}]
The zeroDC filterbanks designed using graphBior spectral kernels are also perfect reconstruction. 
\end{proposition}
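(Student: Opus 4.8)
The plan is to observe that the whole zeroDC filterbank is nothing but the corresponding nonzeroDC filterbank conjugated by the diagonal matrix $\Dm^{1/2}$, and that such a similarity transformation cannot move the end-to-end transfer matrix away from the identity. Perfect reconstruction has already been established for the nonzeroDC graphBior design: its kernels satisfy~(\ref{eq:perfect_reconstruct2}) and, since $\hat p(\lambda)=\hat h_0(\lambda)\hat g_0(\lambda)$ is a half-band kernel obeying~(\ref{eq:biorthogonal_perfect_reconstruct}), the pair $\{\hat h_i,\hat g_i\}$ meets the conditions of Theorem~\ref{thm:PR}, so that $\Tm_{alias}=\zerov$, $\Tm_{eq}=\Id$, and the overall transfer matrix is $\Tm=\Id$. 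Given this, the result will follow immediately.

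First I would write the overall transfer matrix of the zeroDC filterbank exactly as in~(\ref{eq:overall_tx}) but with $\Hm_i,\Gm_i$ replaced by $\Hm_{ri},\Gm_{ri}$:
\[
 \Tm_r = \frac{1}{2}\Gm_{r0}(\Id+\Jm_{\beta})\Hm_{r0} + \frac{1}{2}\Gm_{r1}(\Id-\Jm_{\beta})\Hm_{r1}.
\]
Because the graphBior kernels are polynomials, the preceding proposition (zero DC response) gives $\Hm_{ri}=\Dm^{-1/2}\Hm_i\Dm^{1/2}$ and, identically, $\Gm_{ri}=\Dm^{-1/2}\Gm_i\Dm^{1/2}$. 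Substituting these, I would then use that $\Dm^{1/2}$ and $\Jm_{\beta}=\diag\{\beta\}$ are both diagonal and hence commute, so that $\Dm^{1/2}(\Id\pm\Jm_{\beta})\Dm^{-1/2}=\Id\pm\Jm_{\beta}$; all the interior factors $\Dm^{1/2}\Dm^{-1/2}$ then cancel and I obtain $\Tm_r=\Dm^{-1/2}\Tm\Dm^{1/2}$ with $\Tm$ the nonzeroDC transfer matrix. Using $\Tm=\Id$ from the previous paragraph yields $\Tm_r=\Dm^{-1/2}\Id\,\Dm^{1/2}=\Id$, i.e., the zeroDC filterbank reconstructs perfectly.

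The only step needing any care is the commutation: one must check that the downsampling--upsampling operation, in matrix form, is genuinely the diagonal operator $\frac{1}{2}(\Id\pm\Jm_{\beta})$, so that conjugation by the diagonal matrix $\Dm^{1/2}$ passes through it untouched; after that everything is bookkeeping, and in particular no re-derivation of the spectral-folding or aliasing-cancellation argument is needed, since PR is inherited wholesale from the nonzeroDC case through the similarity transform. I would close by noting, as in Section~\ref{sec:GC}, that inserting the unity-gain-compensation block in each analysis channel together with its inverse in the matching synthesis channel leaves this conclusion intact.
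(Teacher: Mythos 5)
Your proposal is correct and follows essentially the same route as the paper's own proof: substitute the similarity relation $\Hm_{ri}=\Dm^{-1/2}\Hm_i\Dm^{1/2}$ (and likewise for $\Gm_{ri}$) into the zeroDC transfer matrix, commute the diagonal matrices $\Dm^{\pm 1/2}$ and $\Jm_{\beta}$, and conclude that the zeroDC filterbank equals $\Dm^{-1/2}(\Tm_{eq}+\Tm_{alias})\Dm^{1/2}$, so PR is inherited from the nonzeroDC design (the paper states PR as $\Tm_{eq}+\Tm_{alias}=c\Id$ rather than $\Id$, but a scalar multiple of the identity passes through the conjugation just the same).
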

\begin{proof}
Similar to (\ref{eq:overall_tx}), 
the overall transfer function of the zeroDC filterbank can be written as:
\begin{eqnarray}
 \displaystyle \hat \fv  &=&   \frac{1}{2}\Gm_{r0}(\Id + \Jm_{\beta})\Hm_{r0}\fv + \frac{1}{2}\Gm_{r1}(\Id - \Jm_{\beta})\Hm_{r1}\fv \nonumber \\
& = & \frac{1}{2}(\Gm_{r0}\Hm_{r0} + \Gm_{r1}\Hm_{r1})\fv +  \frac{1}{2}(\Gm_{r0}\Jm_{\beta}\Hm_{r0} - \Gm_{r1}\Jm_{\beta}\Hm_{r1})\fv.
\label{eq:overall_tx_raw_asym}
\end{eqnarray}
Using the similarity relation given in~(\ref{eq:biorthogonal_spectral_tx1}), we can simplify~(\ref{eq:overall_tx_raw_asym}) as:
\begin{eqnarray}
 \displaystyle \hat \fv  &=& \frac{1}{2}(\Dm^{-1/2}\Gm_{0}\Dm^{1/2}\Dm^{-1/2}\Hm_{0}\Dm^{1/2} + \Dm^{-1/2}\Gm_{1}\Dm^{1/2}\Dm^{-1/2}\Hm_{1}\Dm^{1/2})\fv \nonumber \\
&+&  \frac{1}{2}(\Dm^{-1/2}\Gm_{0}\Dm^{1/2}\Jm_{\beta}\Dm^{-1/2}\Hm_{0}\Dm^{1/2} - \Dm^{-1/2}\Gm_{1}\Dm^{1/2}\Jm_{\beta}\Dm^{-1/2}\Hm_{1}\Dm^{1/2})\fv.
\label{eq:overall_tx_asym_raw}
\end{eqnarray}
In~(\ref{eq:overall_tx_asym_raw}), the matrices $\Dm^{1/2},\Jm_{\beta}$, and $\Dm^{-1/2}$ are diagonal matrices and hence commute with each other. Therefore, 
\begin{equation}
\Dm^{1/2}\Jm_{\beta}\Dm^{-1/2} =  \Jm_{\beta}\Dm^{1/2}\Dm^{-1/2} = \Jm_{\beta}
\end{equation}
Thus,~(\ref{eq:overall_tx_asym_raw}), can be simplified as:
\begin{eqnarray}
 \displaystyle \hat \fv  &=& \frac{1}{2}(\Dm^{-1/2}\Gm_{0}\Hm_{0}\Dm^{1/2} + \Dm^{-1/2}\Gm_{1}\Hm_{1}\Dm^{1/2})\fv \nonumber \\
&+&  \frac{1}{2}(\Dm^{-1/2}\Gm_{0}\Jm_{\beta}\Hm_{0}\Dm^{1/2} - \Dm^{-1/2}\Gm_{1}\Jm_{\beta}\Hm_{1}\Dm^{1/2})\fv \nonumber \\
& =& \Dm^{-1/2} \Tm_{eq} \Dm^{1/2}\fv + \Dm^{-1/2} \Tm_{alias} \Dm^{1/2}\fv \nonumber \\
& = & \Dm^{-1/2} (\Tm_{eq}  + \Tm_{alias}) \Dm^{1/2}\fv,
\label{eq:overall_tx_asym}
\end{eqnarray}
where $\Tm_{eq}$ and $\Tm_{alias}$ correspond to the overall transfer function of nonzeroDC filterbanks, as defined in
(\ref{eq:overall_tx}). {\em Therefore, the zeroDC filterbank implementation is equivalent to pre-multiplying the input by 
$\Dm^{-1/2}$ and post-multiplying the output by $\Dm^{1/2}$, and  if the nonzeroDC filterbank
is PR (i.e., 
$\Tm_{eq}  + \Tm_{alias} = c\Id$ ) then the corresponding zeroDC filterbank is also PR.}
\end{proof}
\begin{proposition}[{\bf Riesz bounds}]
The zeroDC filterbanks form a Riesz basis with lower bound $A\sqrt{d_{min}/d_{max}}$ and upper bound $B\sqrt{d_{max}/d_{min}}$, 
where $A$ and $B$ are the lower and upper bounds of the Riesz basis formed by corresponding nonzeroDC graphBior filterbanks.
%
\end{proposition}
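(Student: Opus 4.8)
The plan is to deduce the Riesz bounds of the zeroDC filterbank directly from those of the nonzeroDC filterbank, exploiting the similarity relation established in the previous proposition. First I would write the analysis-side transform of the zeroDC filterbank, in analogy with~(\ref{eq:analysis_tx}), as $\Tm_{ra} = \frac{1}{2}(\Id + \Jm_{\beta})\Hm_{r0} + \frac{1}{2}(\Id - \Jm_{\beta})\Hm_{r1}$, and substitute $\Hm_{ri} = \Dm^{-1/2}\Hm_i\Dm^{1/2}$ from~(\ref{eq:biorthogonal_spectral_tx1}), which applies since the graphBior kernels are polynomials. Because $\Dm^{\pm 1/2}$ and $\Jm_{\beta}$ are all diagonal they commute, so the diagonal factors $\frac{1}{2}(\Id \pm \Jm_{\beta})$ pass through $\Dm^{-1/2}$, giving
\begin{equation}
\Tm_{ra} = \Dm^{-1/2}\,\Tm_a\,\Dm^{1/2},
\end{equation}
with $\Tm_a$ the nonzeroDC analysis transform of~(\ref{eq:analysis_tx}). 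Since the nonzeroDC filterbank is PR, $\Tm_a$ is invertible with smallest singular value $A>0$; conjugating by the invertible diagonal matrix $\Dm^{1/2}$ keeps $\Tm_{ra}$ invertible, so its rows form a basis and finite positive Riesz bounds exist.

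Next I would estimate $\|\Tm_{ra}\fv\|_2$ for a unit-norm graph-signal $\fv$. Putting $\gv = \Dm^{1/2}\fv$, we have $\|\Tm_{ra}\fv\|_2 = \|\Dm^{-1/2}\Tm_a\gv\|_2$. The singular values of $\Dm^{-1/2}$ lie in $[1/\sqrt{d_{max}},\,1/\sqrt{d_{min}}]$, and the Riesz bounds of the nonzeroDC filterbank give $A\|\gv\|_2 \le \|\Tm_a\gv\|_2 \le B\|\gv\|_2$, hence
\begin{equation}
\frac{A}{\sqrt{d_{max}}}\,\|\gv\|_2 \;\le\; \|\Tm_{ra}\fv\|_2 \;\le\; \frac{B}{\sqrt{d_{min}}}\,\|\gv\|_2 .
\end{equation}
Finally $\|\gv\|_2^2 = \fv^\top \Dm \fv$ with $\|\fv\|_2 = 1$ gives $\sqrt{d_{min}} \le \|\gv\|_2 \le \sqrt{d_{max}}$; feeding the lower bound on $\|\gv\|_2$ into the left inequality and the upper bound into the right inequality yields
\begin{equation}
A\sqrt{\frac{d_{min}}{d_{max}}} \;\le\; \|\Tm_{ra}\fv\|_2 \;\le\; B\sqrt{\frac{d_{max}}{d_{min}}} ,
\end{equation}
which are exactly the claimed lower and upper Riesz bounds.

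The argument is essentially routine; the main thing to be careful about is that the stated constants are \emph{not} claimed to be tight. The worst-case factors $\sqrt{d_{min}/d_{max}}$ and $\sqrt{d_{max}/d_{min}}$ arise by compounding two independent sources of distortion, namely $\Dm^{1/2}$ acting on the input and $\Dm^{-1/2}$ acting on the output, which cannot in general be made extremal by the same $\fv$. For the Riesz-basis property this is harmless, since all that is needed is a valid uniform pair of bounds with the lower one strictly positive. If a sharper estimate were wanted one would instead diagonalize $\Tm_{ra}^\top\Tm_{ra} = \Dm^{1/2}\Tm_a^\top\Dm^{-1}\Tm_a\Dm^{1/2}$ and bound its eigenvalues directly, but that computation is not needed here.
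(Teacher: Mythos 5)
Your proof is correct and follows essentially the same route as the paper: you establish the similarity $\Tm_{ra} = \Dm^{-1/2}\Tm_a\Dm^{1/2}$, invoke the nonzeroDC Riesz bounds on $\Dm^{1/2}\fv$, and absorb the diagonal degree distortions into the worst-case factors $\sqrt{d_{min}/d_{max}}$ and $\sqrt{d_{max}/d_{min}}$; your singular-value phrasing of the bounds on $\Dm^{\pm 1/2}$ is just a compact restatement of the paper's coordinate-wise inequalities. Your closing remark that the bounds are worst-case rather than tight matches the paper's own caveat.
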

\begin{proof}
Referring to Figure~\ref{fig:graph_filterbank}, 
the wavelet coefficient vector $\wv$ produced in the zeroDC filterbanks can be written as:
\begin{eqnarray}
\wv_r = \Tm_{ra}\fv &= & \frac{1}{2}(\Id - \Jm_{\beta})\Hm_{r0}\fv + \frac{1}{2}(\Id + \Jm_{\beta})\Hm_{r1}\fv \nonumber \\
& = & \frac{1}{2}(\Hm_{r1} + \Hm_{r0})\fv + \frac{1}{2}\Jm_{\beta}(\Hm_{r1} - \Hm_{r0})\fv \nonumber \\
& = & \frac{1}{2}\Dm^{-1/2}(\Hm_{1} + \Hm_{0})\Dm^{1/2}\fv + \frac{1}{2}\Dm^{-1/2}\Jm_{\beta}(\Hm_{1} - \Hm_{0})\Dm^{1/2}\fv \nonumber \\
& = & \Dm^{-1/2}\Tm_a\Dm^{1/2}\fv
 \label{eq:analysis_tx_asym}
\end{eqnarray}
This implies that the $n^{th}$ output can we written as:
\begin{equation}
 w_r[n] = \sum_{m=1}^N \sqrt{\frac{d_m}{d_n}}T_a(n,m)f[m]
 \label{eq:nth_outp_asym}
\end{equation}

Note that if the graph is almost regular, i.e., $\frac{d_m}{d_n} \approx 1$, then $w_r[n] \approx \sum_{m=1}^N T_a(n,m)f[m] = w[n]$, where 
$w[n]$ is the $n^{th}$ output of the corresponding nonzeroDC filterbank. 
In order to obtain a worst-case bound, if we define $\fv_{\Dm} = \Dm^{1/2}\fv$, and $\wv_{\Dm} = \Dm^{1/2}\wv$, then~(\ref{eq:analysis_tx_asym}) 
can be written as $\wv_{\Dm} = \Tm_a\fv_{\Dm}$. Thus, if the corresponding nonzeroDC filterbank 
is biorthogonal with Riesz bounds $A$ and $B$, 
then $A||\fv_{\Dm}|| \leq ||\wv_{\Dm}|| \leq B||\fv_{\Dm}||$ (the $2$-norm). However,
\begin{eqnarray}
 d_{min} \sum_{i=1}^N w^2(i) \leq ||\wv_{\Dm}||^2 &=& \sum_{i=1}^N d_i w^2(i) \leq d_{max} \sum_{i=1}^N w^2(i) \nonumber \\
d_{min} \sum_{i=1}^N f^2(i) \leq ||\fv_{\Dm}||^2 &=& \sum_{i=1}^N d_i f^2(i) \leq d_{max} \sum_{i=1}^N f^2(i),
\label{eq:orthogonal_asym}
\end{eqnarray}
where $d_{min}$ is the minimum degree in the graph ($1$ if there is an isolated node), and $d_{max}$ is the maximum degree.
Using~(\ref{eq:orthogonal_asym}), we obtain: 
\begin{eqnarray}
 d_{min}||\wv||^2 \leq ||\wv_{\Dm}||^2 &\leq& B^2||\fv_{\Dm}||^2 \leq B^2d_{max} ||\fv||^2 \nonumber \\
d_{min}A^2||\fv||^2 \leq A^2||\fv_{\Dm}||^2 &\leq& ||\wv_{\Dm}||^2 \leq d_{max} ||\wv||^2,
\end{eqnarray}
and 
\begin{equation}
\left(A\frac{d_{min}}{d_{max}}\right)||\fv||^2  \leq ||\wv||^2 \leq \left(B\frac{d_{max}}{d_{min}}\right)||\fv||^2 
\label{eq:frame_asym}
\end{equation}
Thus, the zero graphBior filterbanks defines a Riesz basis in the graph-signal space,  with lower bound $A_r = A\sqrt{d_{min}/d_{max}}$ and upper-bound 
$B_r = B\sqrt{d_{max}/d_{min}}$. 
\end{proof}
Note that for regular graphs $d_{min} = d_{max}$, hence $\{A_r,B_r\} = \{A,B\}$. However, for irregular graphs 
the measure of orthogonality $\Delta_r = A_r/B_r = (d_{min}/d_{max}) \Theta$ tend to be smaller than $\Theta$, which implies that the basis functions in 
zeroDC filterbanks are more coherent than the basis functions in nonzeroDC filterbanks. This is 
also confirmed empirically in Table~\ref{tab:SNR_compare}.

The decision of whether to use zeroDC graphBior or 
nonzeroDC graphBior filterbanks, depends 
upon the interpretation of an all constant signal $\onev$ 
and its degree normalized form $\Dm^{1/2}\onev$ 
in the context of the problem. 
For example, in graphs arising from
physical domains
(sensor networks, 
transport networks, images and videos etc.), 
the graph signals 
are often  nearly constant (or piecewise constant). 
In these cases, the all-constant signal 
should be preserved as the lowpass 
signal, and therefore the 
zeroDC filterbanks should be preferred over the nonzeroDC filterbanks. 
On the other hand, it is shown in~\cite{Mihail'02}
that for highly irregular graphs (such as online social networks, Internet etc.)
the spectral analysis gets influenced by the presence of 
high degree nodes, and thus misses the structure 
around low degree nodes. Therefore, the degree 
normalized nonzeroDC filterbanks should be used for these cases.
All the examples presented in Section~\ref{sec:experiments} belong 
to the former category (i.e., they arise in physical domains). Therefore,
the zeroDC filterbanks are found to perform better than the nonzeroDC filterbanks.

\section{Multi-dimensional and multi-resolution implementations}
\label{sec:multi_dimensional}
So far we have described how to implement graphBior filterbanks 
on bipartite graphs. This is because bipartite graphs 
provide perfect reconstruction conditions in terms of simple 
conditions on spectral responses in these filterbanks.
However,
not all graphs are bipartite. 
For arbitrary graphs, we proposed 
in~\cite{SunilTSP,ICASSP12Sunil} 
to decompose the graph $G$ 
into $K$ link-disjoint bipartite
subgraphs, each defined on 
the entire set of vertices and 
their union 
covering almost all of the 
links in the graph. 
Consequently, we 
implemented filtering/downsampling 
operation in $K$ stages,
restricting the operations in 
each stage to only 
one bipartite graph. 
An example of $2$-dimensional 
bipartite subgraph 
decomposition 
is shown
in Figure~\ref{fig:2D_example_new}, in which 
the graph $G$
is divided into 
$4$ clusters $LL,LH,HL$ and $HH$. 
The first bipartite graph $B_1$ corresponds to 
partitions $L1 = LL \cup LH$ and $H1 = HL \cup HH$, 
and all the links connecting nodes in the two partitions. 
Subsequently, these links are removed 
from $G$ and the second bipartite subgraph $B_2$ corresponds to 
partitions $L2 = HL \cup LL$ and $H2 = LH \cup HH$, and  all the links between $L2$ and $H2$
from the remaining set of links. The remaining links are either discarded 
or used to further compute third and fourth bipartite subgraphs etc. 
\begin{figure}[htb]
\centering
\includegraphics[width=5in]{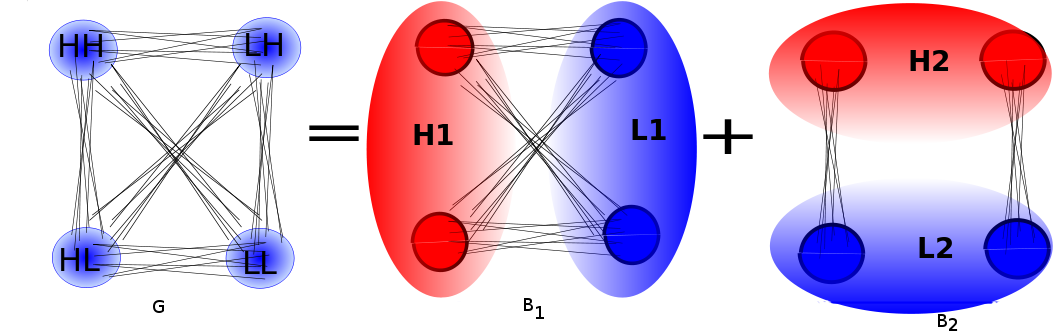}
\caption{{ Two dimensional decomposition of a graph.
}}
\label{fig:2D_example_new}
\end{figure}
The block diagram of a 
$2$ ``dimensional''
graphBior filterbank  is 
shown in Figure~\ref{fig:filterbank_imp2}, 
where a dimension is 
interpreted as filtering 
and downsampling on a single 
bipartite subgraph.
\begin{figure}[htb]
\centering
\includegraphics[width=5in]{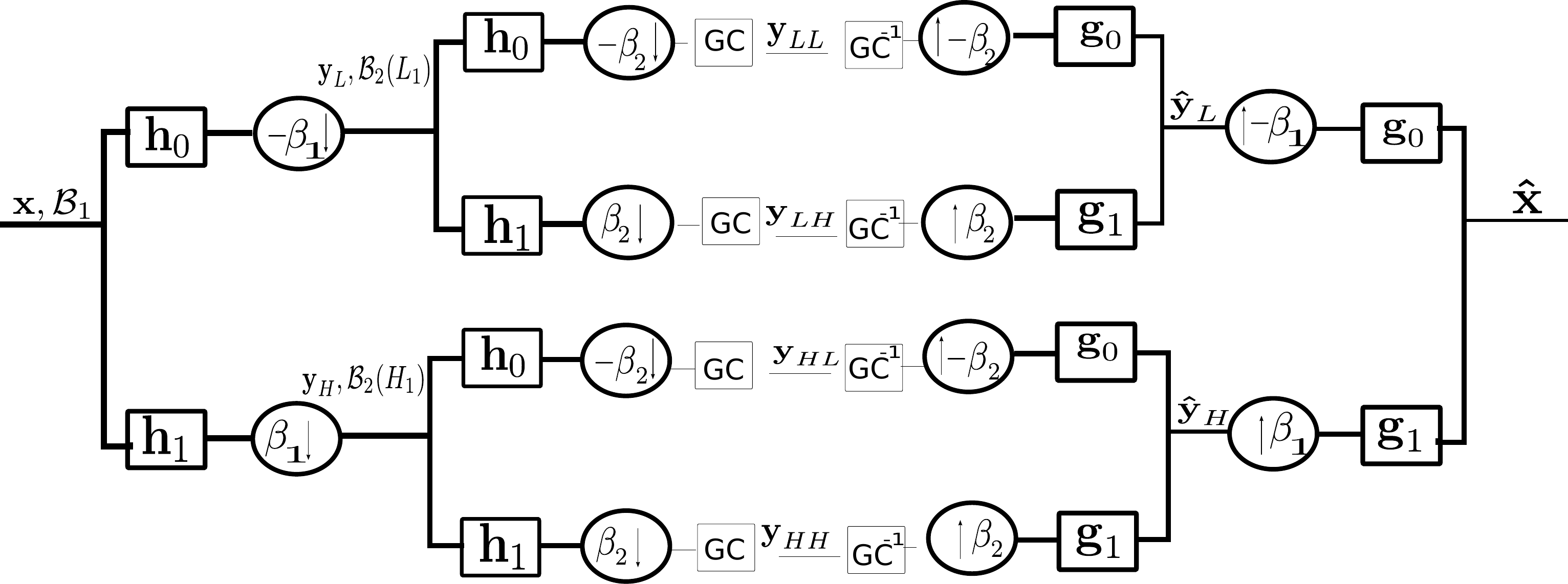}
\caption{ Separable two-dimensional filterbank on graphs.  
The graph is first decomposed into two bipartite subgraph as shown 
in Figure~\ref{fig:2D_example_new}. The binary function $\beta_1$ is such that 
$\beta_1(H_1) = 1$ and $\beta_1(L_1) = -1$. Similarly the binary function 
$\beta_2$ is such that $\beta_1(H_2) = 1$ and $\beta_1(L_2) = -1$.
For each bipartite graph the  graph transform pair $\{h_{0},h_1\}$ forms the 
analysis low-pass and analysis high-pass 
graphBior filters respectively and $\{g_{0},g_1\}$ 
are corresponding synthesis filters. GC: gain-compensation block, GC$^{-1}$: inverse GC block.
}
\label{fig:filterbank_imp2}
\end{figure}
Note that this design is analogous to separable filterbank 
implementation on regular multidimensional signals. 
For example in the case of separable
transforms for 2D signals, filtering in one dimension (e.g., row-wise) is followed by filtering of
the outputs along the second dimension (column-wise). Moreover, the separable graphBior 
filterbanks are PR for any arbitrary partitions $LL,LH,HL$ and $HH$ induced on the graph. 
The choice of a specific bipartite subgraph decomposition depends on 
various factors. For highly structured graphs 
such as graph representation of regular signals, the bipartite 
subgraphs which preserve the structure are more useful (see, for example, Section~\ref{sec:edge_aware_app}). 
For arbitrary graphs, there can be various criteria. One criterion is to compute 
a graph decomposition that generates minimum 
number of bipartite subgraphs
whose union covers all the links 
in the graph. 
An example of decomposition scheme based on such criterion 
is {\em Harary's algorithm} proposed in \cite{SunilTSP}, which 
provides a $\lceil log_2 K\rceil$ bipartite 
subgraph decomposition of a $K$-colorable graph\footnote{A $K$-colorable 
graph can be divided into $K$ clusters such that there are no links connected nodes in the same clusters. $\lceil~ . ~\rceil$ is the ceiling operator.}.
Another criterion introduced in~\cite{ICASSP12Sunil}, proposes subgraph decompositions 
so that the neighborhood sets of each node on 
different bipartite subgraphs are maximally disjoint. 
This leads to uncorrelated 
filtering operations on different graphs. 
However, whether the above mentioned decomposition schemes are optimal in some sense, or more generally 
whether there are other ways to extend graphBior filterbanks to arbitrary graphs, is part of 
our on-going research.


The multiresolution decomposition (MR) property in graphs implies successive coarser approximations 
of the graph and graph signal. For example, in a $1$-dimensional implementation, the output samples 
in the set $L$ are treated as signal for the next resolution level, and the vertices in $L$  
are reconnected to form a downsampled graph that
preserves properties of the original graph such as the
intrinsic geometric structure (e.g., some notion of distance
between vertices), connectivity, graph spectral distribution,
and sparsity. 
The graph coarsening problem has received a great deal
of attention from graph theorists, and, in particular, from the
numerical linear algebra community (see \cite{ron,gp_archive} and the reference therein). 
Further, Pesenson (e.g., \cite{pesenson_paley}) has leveraged the analogy
between the graph Fourier transform and the classical Fourier
transform to extend the concept of bandlimited sampling to
signals defined on graphs. Namely, certain classes of signals
can be downsampled on particular subgraphs and then stably
reconstructed from the reduced set of samples. In our designs, any of 
the above mentioned coarsening scheme can be used to compute the graph at the next level.

 \section{Experiments}
\label{sec:experiments}

\subsection{Performance comparisons of two-channel filterbanks on graphs}
In order to compare various graphBior designs proposed in this paper and previously proposed graphQMF designs, we simulate
$M$ instances of random graphs. In all the experiments the random graphs are bipartite graphs with 
$300$ nodes in each partition and probability of connection $2log(N)/N$. The isolated vertices in the graph 
are removed in each realization. 

In order to show the trade-off between vertex domain and spectral domain localizations,  we plot in 
Figure~\ref{fig:spatial_spectral_plots}, the spatial spread~(\ref{eq:spatial_spread}) 
and spectral spread~(\ref{eq:spectral_spread}) of various 
two-channel spectral filterbanks on $M=10$ instances of 
random bipartite graphs. We first observe that that the graph-QMF based on ideal 
half-band kernels (magenta diamonds in the plot)
have very small spectral spread but very large spatial spread, as compared to other designs. This is due to the brick-wall 
spectral response of these filterbanks. The same graphQMF filterbanks when designed using smooth Meyer kernel based
half-band filters (black squares in the plot), have lower spatial spread (though still higher than most of the graphBior filterbanks) 
but higher spectral spread. However, both of these designs do not have a compact support. 
On the other hand, the proposed graphBior filterbanks exploit the spatial/spectral tradeoff better and have compact support support.
The filters with 
smaller filterlengths are spatially more localized but spectrally less localized 
than the filters with higher filterlengths. The filter length of graphBior designs is chosen to be the maximum 
of the two filter lengths (i.e, $K$).  
Among graphBior designs, 
the zeroDC filterbanks (red triangles in the plot) perform slightly 
worse than the nonzeroDC filterbanks, which is due to the 
extra normalizations introduced in the formers to make 
their DC response zero. 
\begin{figure}[htb]
 \begin{center}
\subfigure[]{
   \includegraphics[width = 3in] {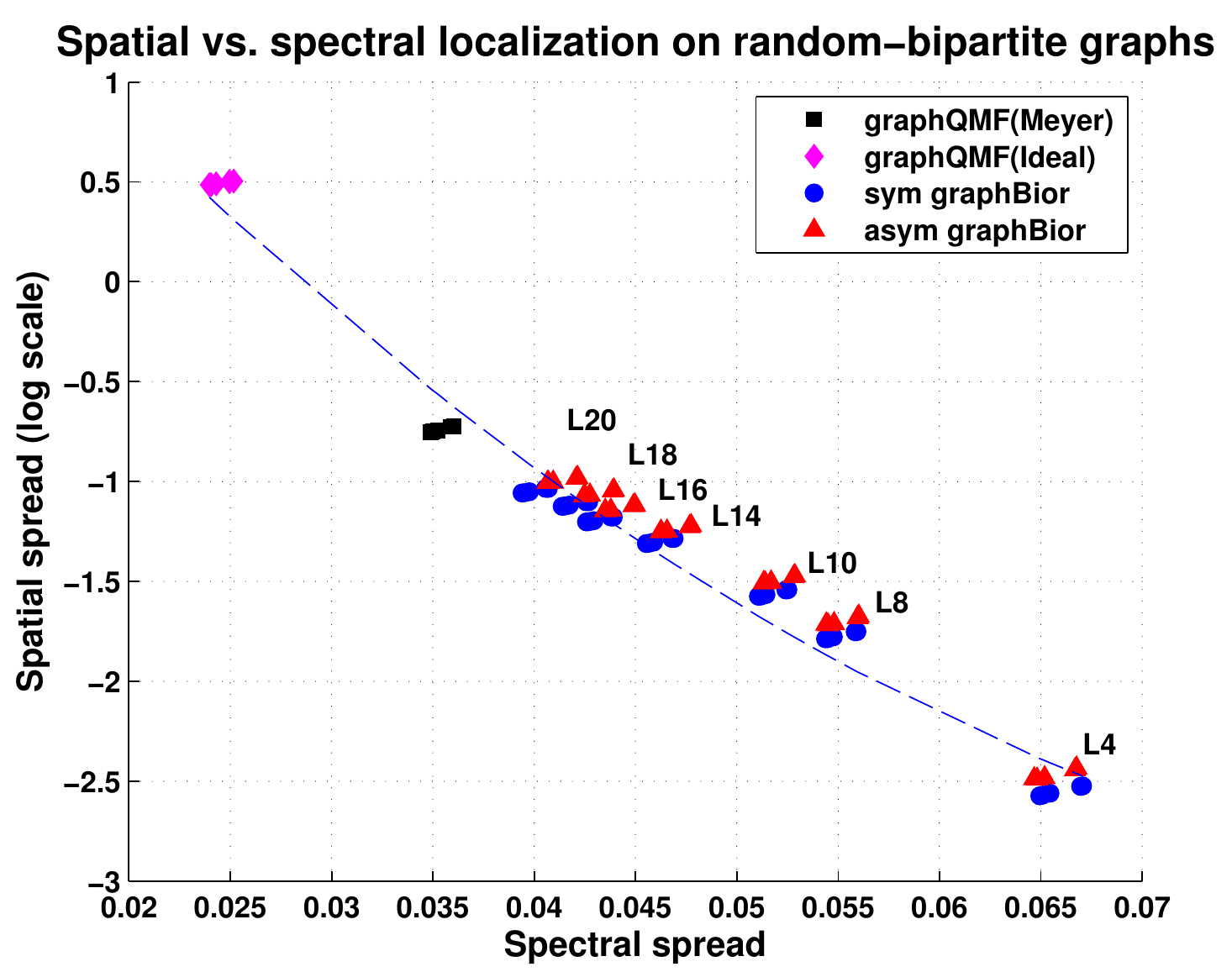}
   \label{fig:spatial_spread_HP}
 }
\subfigure[]{
   \includegraphics[width = 3in] {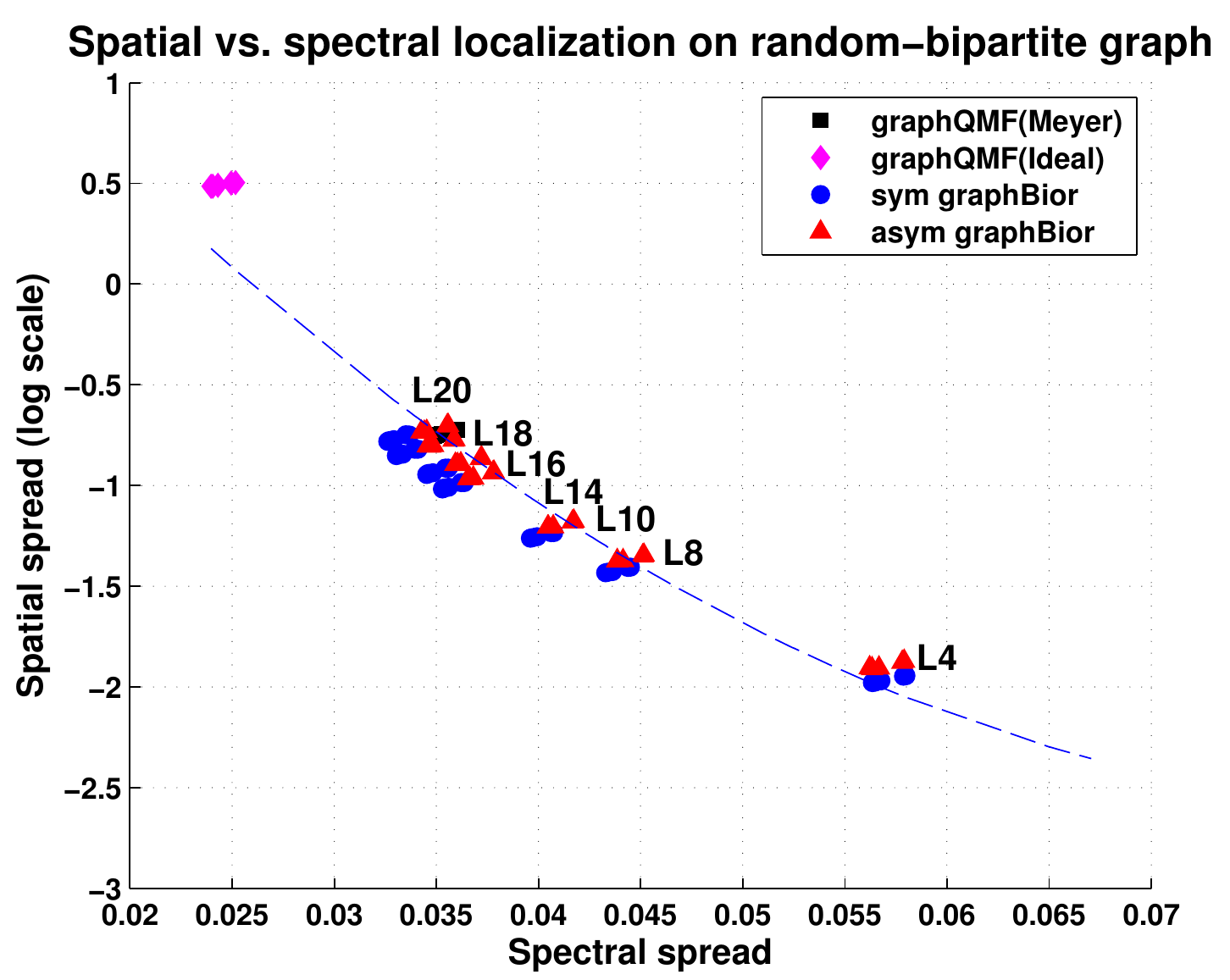}
   \label{fig:spatial_spread_LP}
 }
\caption{The spatial vs. spectral spread a) of the highpass filters, and b) of the lowpass filters. The spatial/spectral coordinates 
for graphQMF filterbanks are same in both plots, since the 
lowpass and highpass filters are symmetric around $\lambda = 1$. However, the lowpass and highpass filterbanks in graphBior 
designs are neither symmetric nor equal length. Therefore, the spatial/spectral spreads of the two channels are different.
The dashed line is a quadratic polynomial fit of the data points in the least square sense. 
}
\label{fig:spatial_spectral_plots}
\end{center}
\end{figure}


The exact graphQMF filterbanks provide PR but are not compact support. In~\cite{SunilTSP}, 
we 
proposed polynomial approximation of the exact graphQMF kernels 
which are compact support but results in some reconstruction error. 
A comparison between proposed graphBior filterbanks and the graph-QMF filterbanks, 
in terms of 
perfect reconstruction error (SNR) and orthogonality ($\Theta$) is shown 
in Table~\ref{tab:SNR_compare}. 
The reconstruction SNR and orthogonality 
$\Theta$ are computed as an average over $20$ instances of 
randomly generated graph-signals on $M = 10$ random  bipartite 
graphs. 
It can be seen from Table~\ref{tab:SNR_compare} 
that all graphBior designs provide perfect reconstruction ($SNR > 100dB$). 
The graph-QMF filters 
in comparison are closer to orthogonal (i.e., $\Theta$ almost $1$), 
but have considerably lower reconstruction SNR. We now consider 
some applications of our proposed filterbanks.
\begin{table}[htbp]
\begin{center}
\begin{tabular}{|p{1 cm}|p{1 cm}|p{1 cm}|p{1 cm}|p{1 cm}|p{1 cm}|p{1 cm}|}
\hline
\multicolumn{1}{|p{1 cm}|}{\textbf{L}} & \multicolumn{1}{r}{\textbf{Graph}} & \multicolumn{1}{l|}{\bf QMF} & \multicolumn{1}{r}{\textbf{nonzeroDC}} & \multicolumn{1}{l|}{\bf graphBior} & \multicolumn{1}{p{1 cm}}{\textbf{zeroDC }} & \multicolumn{1}{l|}{\bf graphBior} \\ \hline
\multicolumn{1}{|l|}{} & \multicolumn{1}{l|}{\textbf{SNR (dB)}} & \multicolumn{1}{l|}{\textbf{$\Theta$}}  & \multicolumn{1}{l|}{\textbf{SNR (dB)}} & \multicolumn{1}{l|}{\textbf{$\Theta$}}  & \multicolumn{1}{l|}{\textbf{SNR (dB)}} & \multicolumn{1}{l|}{\textbf{$\Theta$}} \\ \hline
4 & 32.20 & 0.98 &  286.84 & 0.88 & 286.54 & 0.70  \\ \hline
8 & 32.25 & 0.98 & 282.89 & 0.87 &  282.71 & 0.66  \\ \hline
10 & 42.17 & 1.00&  270.05 & 0.81 & 270.00 & 0.65  \\ \hline
14 & 48.09 & 1.00 & 230.83 &0.85 &  230.73 & 0.64 \\ \hline
16 & 44.78 & 0.99 & 222.08 & 0.94 &  222.05 & 0.64  \\ \hline
18 & 45.23 & 0.99 & 190.53 & 0.92 &  190.43 & 0.63  \\ \hline
20 & 54.61 & 1.00 & 170.78 & 0.94 &  170.68 & 0.63  \\ \hline
\end{tabular}
\end{center}
\caption{Comparison between graph-QMF filterbanks (polynomial approximations) and graphBior filterbanks on random bipartite graphs.}
\label{tab:SNR_compare}
\end{table}

\subsection{Graph based image processing}
\label{sec:edge_aware_app}
In this section, we describe an
application of proposed  
graphBior filterbanks 
for image-analysis.
This is an extension of our 
previous work in~\cite{SunilTSP,SSP'12}, where 
we proposed a graph 
based edge-aware representation 
of image-signals. 
While standard separable extensions
of wavelet filterbanks to higher 
dimensional signals, such as $2$-D images, 
provide useful multi-resolution analysis, 
they do not capture the intrinsic 
geometry of the images. 
For example, these extensions  
can capture only limited 
(mostly horizontal and vertical
) directional information. Images can also be viewed as graphs, by treating pixels as nodes, pixel intensities as graph-signals, 
and by connecting pixels with their neighbors in various ways. The advantage of formulating images 
as graphs is that different graphs 
can represent the same image, 
which offers flexibility of 
choosing the graphs
that have 
useful properties.
In~\cite{SunilTSP}, we proposed an 
$8$-connected graph representation 
of images, in which each pixel is connected to $8$ of its nearest 
neighbors ($4$ diagonal, $2$ vertical and $2$ horizontal)
as shown in Figure~\ref{fig:2D_decomp}.
\begin{figure}[htb]
\centering
\includegraphics[width=5in]{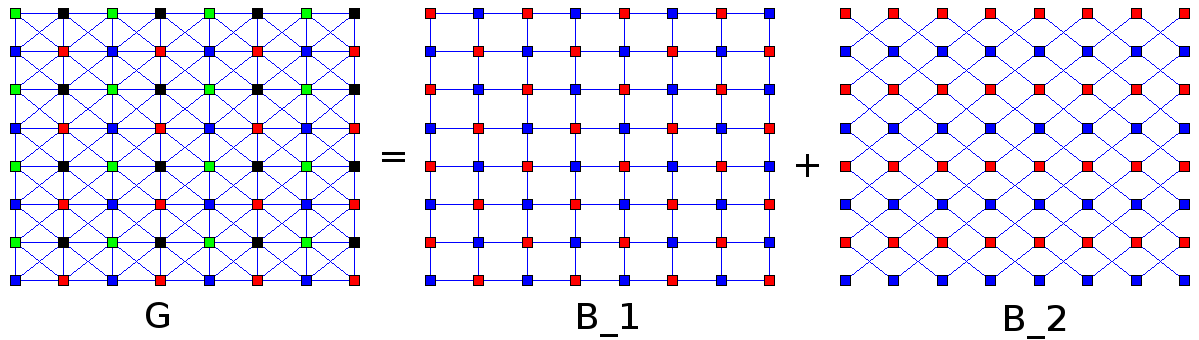}
\caption{{ Two dimensional decomposition of $8$-connected image-graph}}
\label{fig:2D_decomp}
\end{figure}
The graph is not bipartite, but can be decomposed into two bipartite 
subgraphs, one containing links in the horizontal 
and vertical direction and 
other in the diagonal directions. 
The proposed 
graphBior filterbanks can then be applied 
as two ``dimensional''  
filterbanks as in Figure~\ref{fig:filterbank_imp2}. The advantage 
of using graphBior wavelet filterbanks as against standard separable filterbanks, 
is that the former provide more filtering directions (diagonal and rectangular)
than the latter (only rectangular), at the same order of 
computational complexity.  
For multi-resolution analysis, 
the downsampled set of nodes in the $LL$ channel, are again 
connected to $8$ of their neighbors, 
to create a downsampled graph
and the graphBior filterbanks are 
implemented iteratively on the 
downsampled graphs. Thus the downsampling ratio at each level 
is same in both graphBior filterbanks and standard separable filterbanks. 

In~\cite{SSP'12}, we proposed an {\em edge-aware} 
implementation for piece-wise smooth images,  
in which the bipartite subgraphs 
obtained in 
Figure~\ref{fig:2D_decomp},
can be simplified 
by removing the 
links between pixels 
%
across 
which the pixel intensity 
changes drastically. 
These links can be found 
using any standard 
edge-detection algorithm 
(we use Canny edge detection in our 
experiments and remove connected 
components
less than $50$ pixels 
before computing the graph). 
The advantage 
of edge-aware graph representations 
is 
that it avoids filtering 
across edges, 
which leads 
to a very
significant reduction 
in the number of 
large coefficients 
near edge
(and thus corresponding 
reductions in rate). Note that in a compression application, this would
require generating an edge map at the 
encoder and then sending it to
the decoder. However, 
recent work~\cite{shen2010edge,wooshik} using 
transforms 
based on similar edge-map information have shown that 
even with the
extra overhead of sending the edge map we can achieve reductions
in overall transmitted rate.
%
%
%

In order to demonstrate the 
advantage of graph based implementation 
of proposed filterbanks, we choose {\em coins.png}
image as shown in Figure~\ref{fig:coins} with many round 
shaped coins. We implement graphBior 
filterbanks of length $10$, (i.e., $graphBior(5,5)$), 
and compare them 
against standard separable CDF $9/7$ 
wavelet filterbanks in a non-linear 
approximation of images, using $4$ 
resolution levels. 
Figure~\ref{fig:reconstruction_coins} shows the 
reconstruction of coins.png using all lowpass coefficients and 
top $4$\% of the highpass coefficients in terms of magnitude. 
\begin{figure}[htb]
 \begin{center}
\subfigure[]{
   \includegraphics[width = 2in] {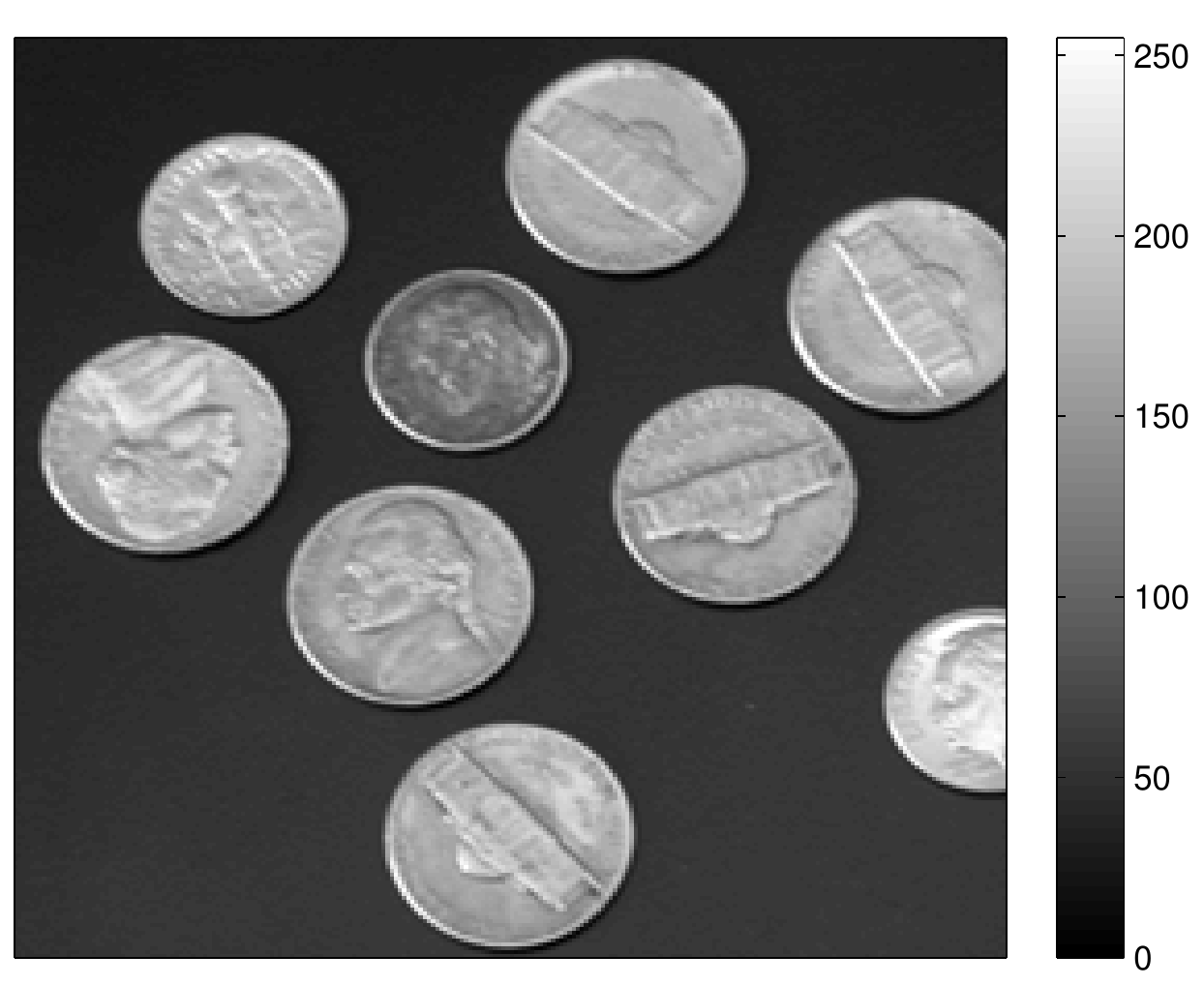}
   \label{fig:coins}
 }
\subfigure[]{
   \includegraphics[width = 2in] {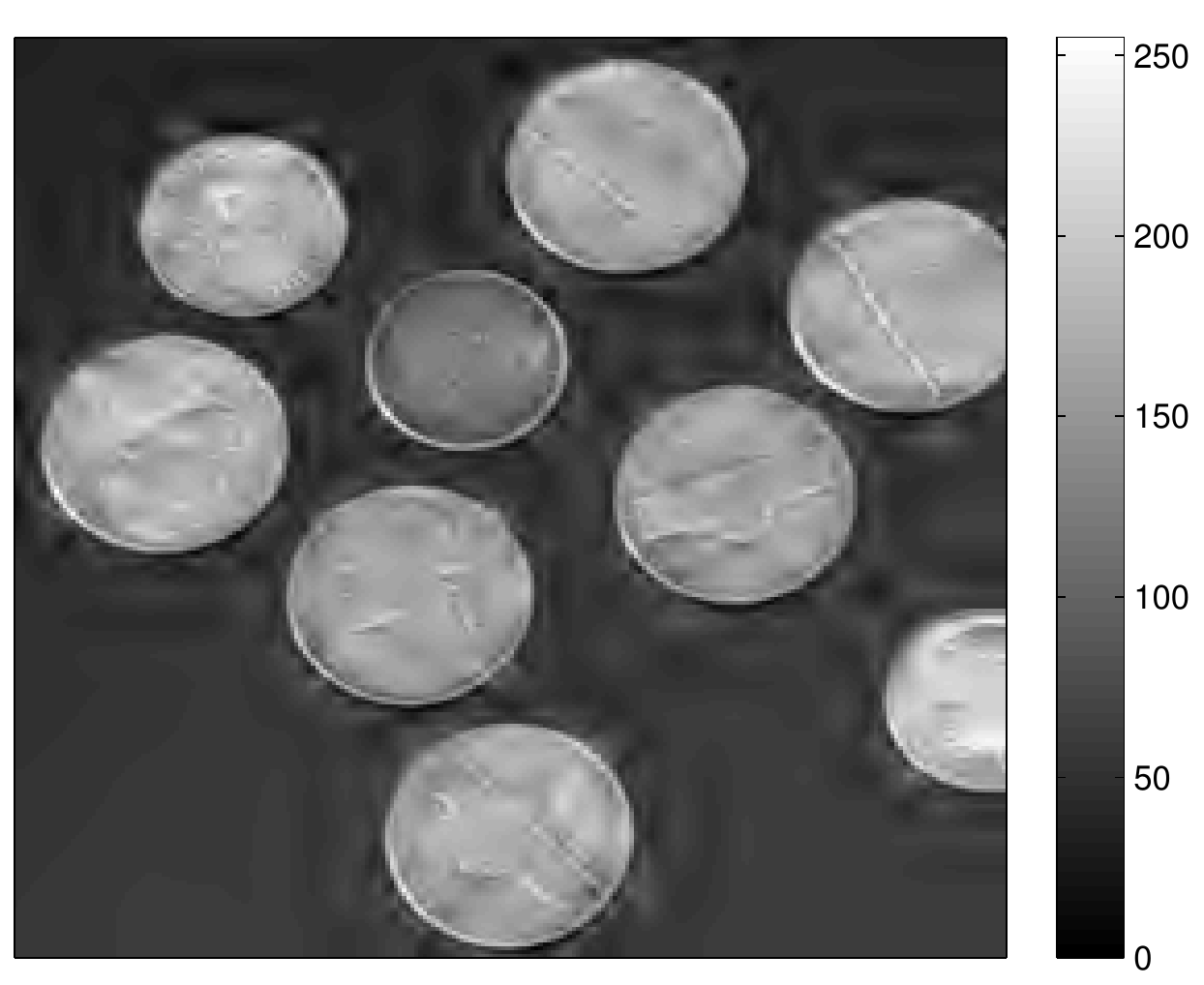}
   \label{fig:CDF}
 }
 \subfigure[]{
   \includegraphics[width = 2in] {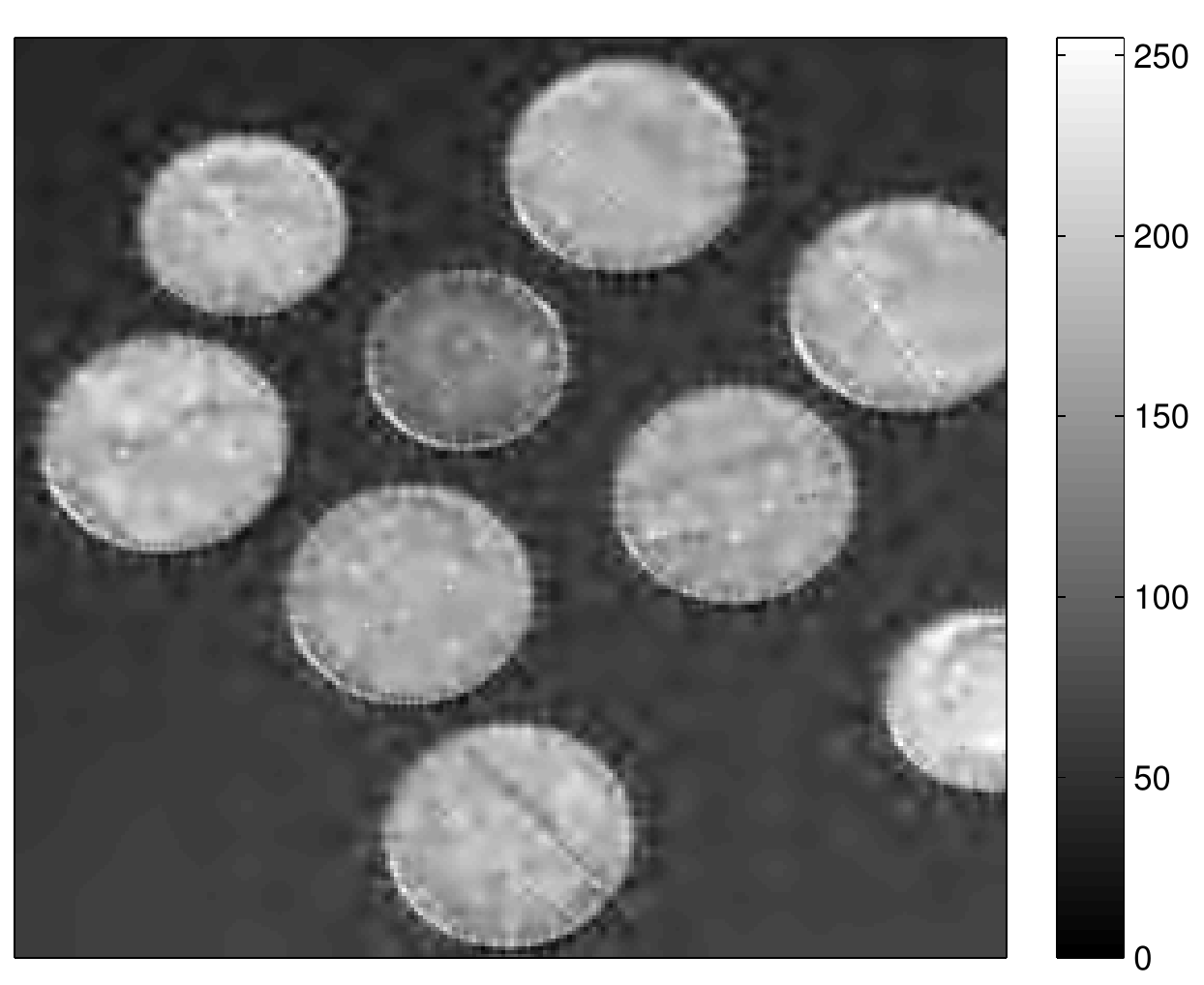}
   \label{fig:zeroDC graphBior}
 }
 \subfigure[]{
   \includegraphics[width = 2in] {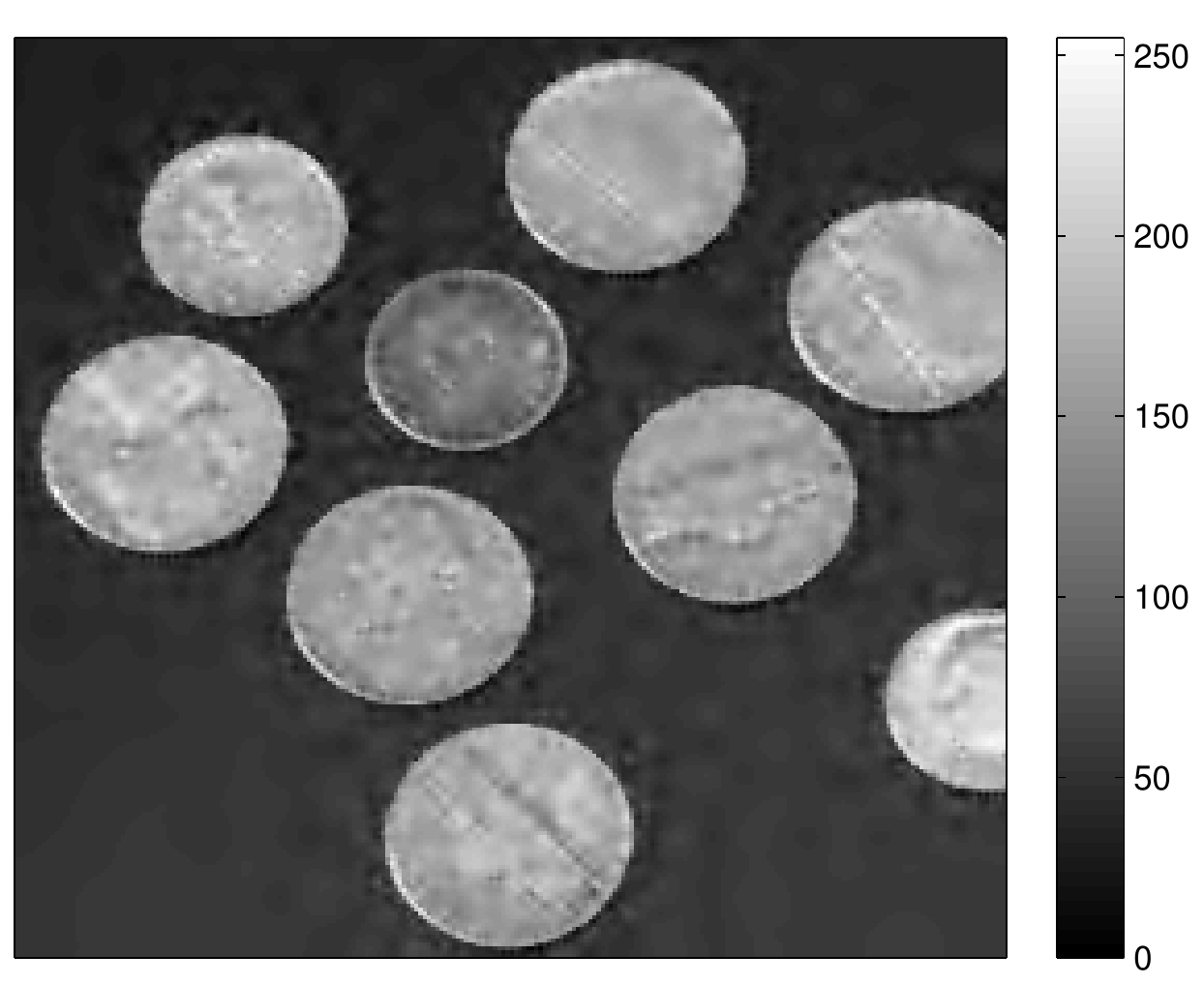}
   \label{fig:edge-aware graphBior}
 }
  \subfigure[]{
   \includegraphics[width = 2in] {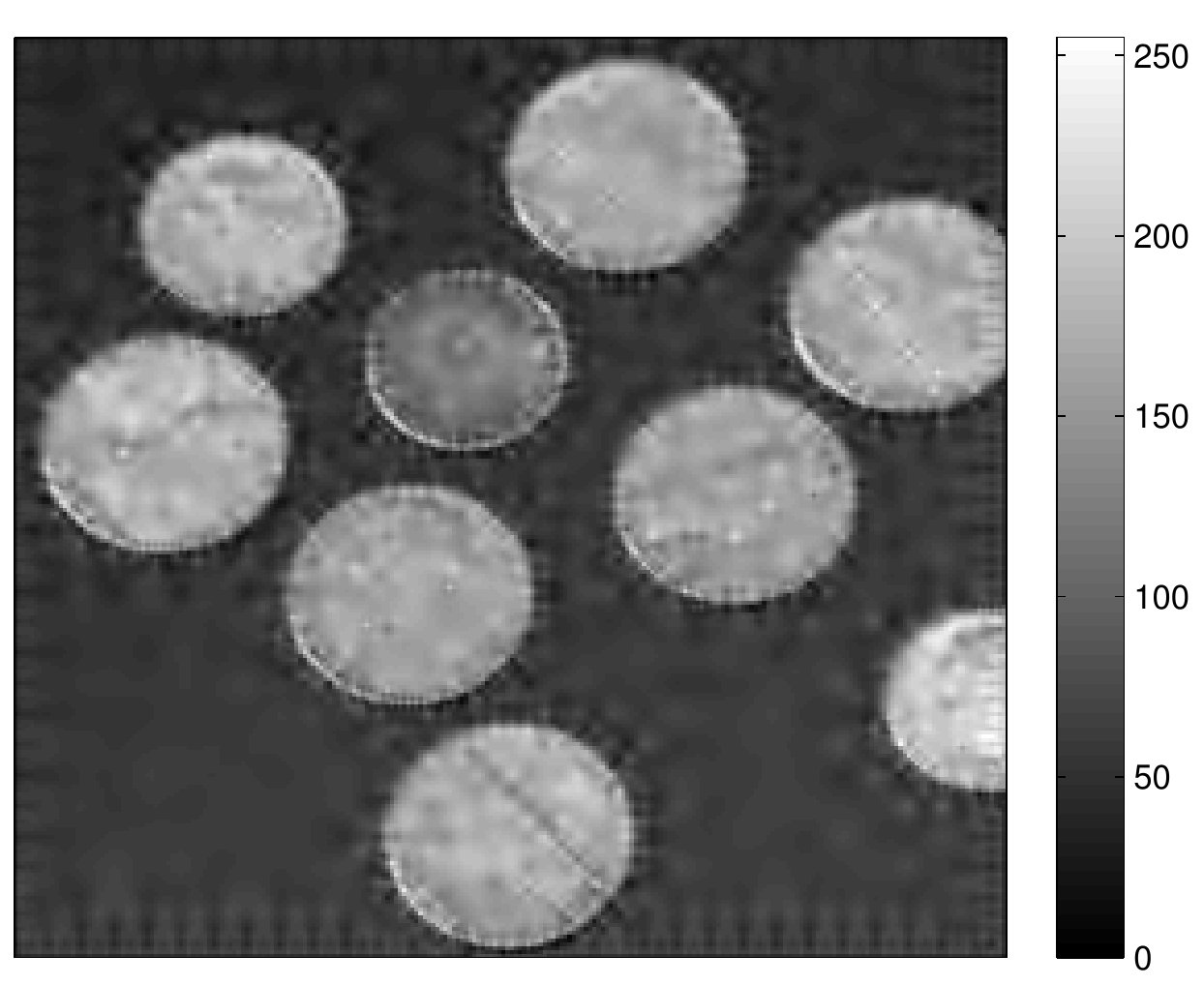}
   \label{fig:nonzeroDC graphBior}
 }
 \subfigure[]{
   \includegraphics[width = 2in] {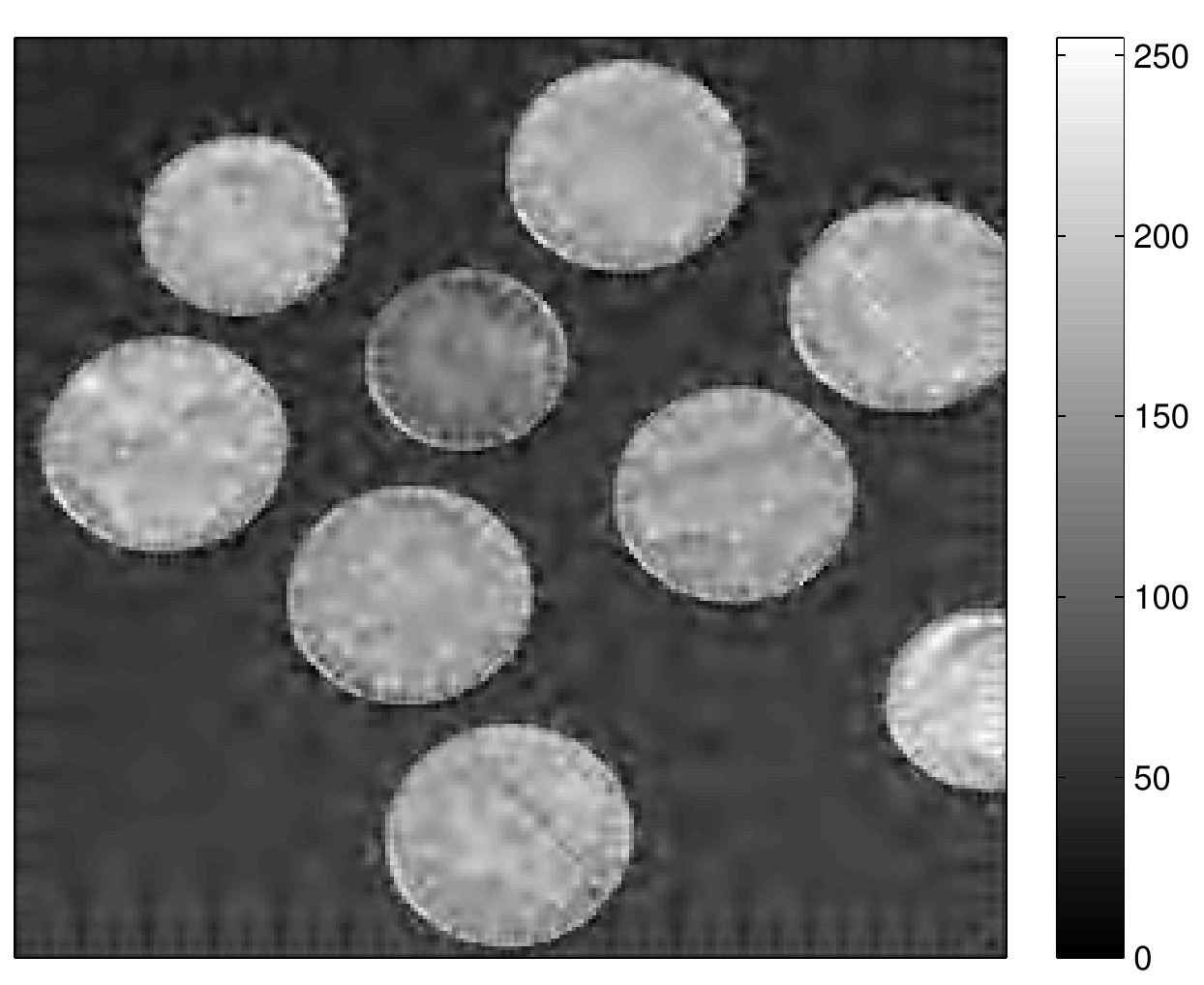}
   \label{fig:edge-aware nonzeroDC graphBior}
 }
\caption{ Reconstruction of ``Coin.png'' ($512 \times 512$) from all lowpass coefficients and $3\%$ highpass coefficients after a $4$-level decomposition. (a) Original image, 
(b) standard CDF $9/7$ filters, (c) zeroDC filterbanks on regular $8$-connected image graph 
(d) zeroDC filterbanks on edge-aware image graph (e) nonzeroDC filterbanks on regular $8$ connected image graph and (f) nonzeroDC filterbanks on edge-aware image graph.}
\label{fig:reconstruction_coins}
\end{center}
\end{figure}
Since the standard 
separable wavelet filterbanks 
filter only in horizontal 
and vertical directions, they produce 
lots of 
large magnitude wavelet 
coefficients (and hence blurring artifacts) 
near the edges 
(see Figure~\ref{fig:CDF}). The 
zeroDC graphBior filterbank implementation 
on the regular $8$-connected 
image graph (Figure~\ref{fig:zeroDC graphBior}) 
does slightly better
since it also provide 
filtering in diagonal directions. 
%
However, the best performance in terms of 
reconstruction quality is observed 
for proposed 
edge-aware zeroDC graphBior 
filterbanks,
especially 
in preserving the edge structure
(see Figure~\ref{fig:edge-aware graphBior}).
This is due to the fact that the underlying graphs in 
this approach are disconnected at the edges, and hence the
filtering operations do not cross the edges. 
Theoretically, the nonzeroDC 
filterbanks should perform 
almost the same as zeroDC filterbanks 
for regular degree graphs. The $8$-connected image-graphs 
are almost regular except at the boundaries, and edges, and 
we observe in Figures~\ref{fig:nonzeroDC graphBior} 
and~\ref{fig:edge-aware nonzeroDC graphBior}, that 
significant ringing artifacts are produced near 
these places, when using 
nonzeroDC filterbanks. The problem of boundary 
artifacts also arises when using standard filterbanks 
on images, which is usally solved by providing 
signal extensions at the boundaries. Whether such signal 
extensions can be proposed for graph 
representation of images, 
is an open issue. 
%
%
%
%
%
%
Figure~\ref{fig:metric_compare}, 
shows PSNR and SSIM~\cite{SSIM} values plotted against 
fraction of detail 
coefficients used in the reconstruction of {\em coins.png}
image, and it can be seen from both the plots that zeroDC graphBior 
filterbanks perform better (up to 2dB better in PSNR) than 
the standard CDF $9/7$ filterbanks. Thus, the results show that the proposed 
graphBior filterbanks provide advantages over the standard wavelet transforms, with the same order
of computational complexity.

\begin{figure}[htb]
 \begin{center}
\subfigure[]{
   \includegraphics[width = 3.4in] {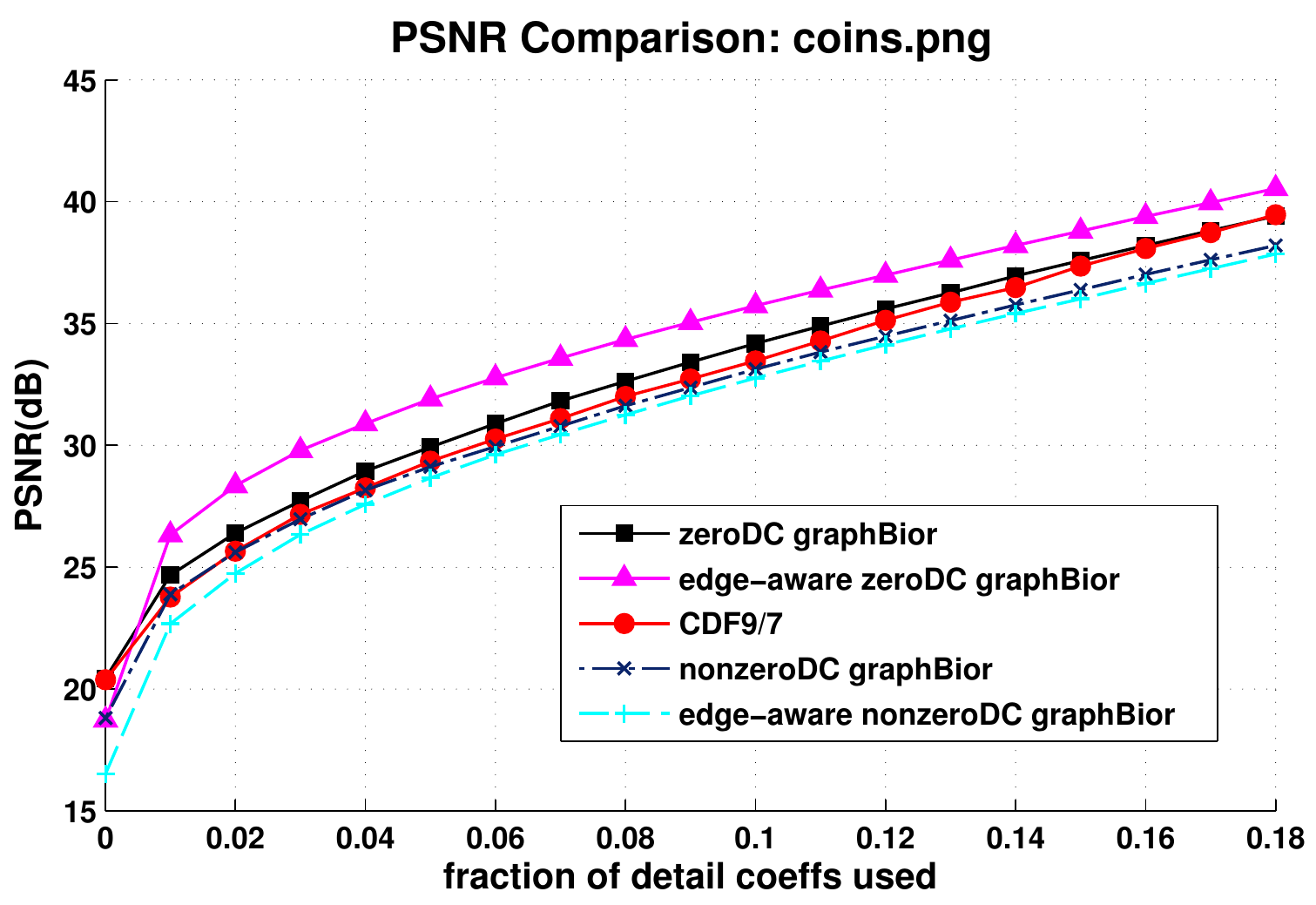}
   \label{fig:PSNR_compare}
 }
\subfigure[]{
   \includegraphics[width = 3.4in] {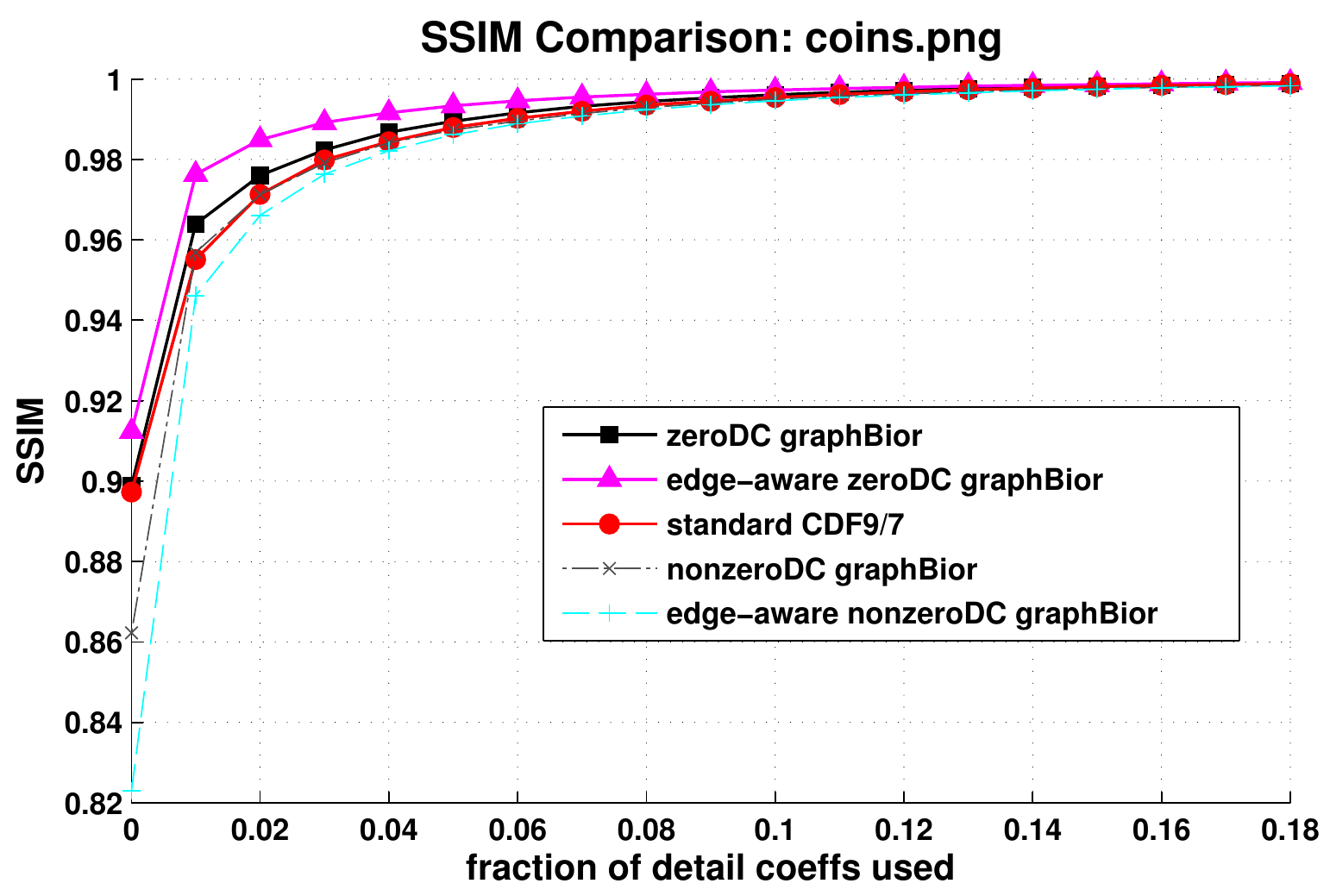}
   \label{fig:SSIM_compare}
 }
\caption{Reconstruction of coins.png image from all low-pass coefficient and a fraction of wavelet coefficients 
(sorted in the order of magnitudes). The fraction value is plotted on the x-axis. (a) PSNR of the reconstructed images,
(b) SSIM of the reconstructed image.
}
\label{fig:metric_compare}
\end{center}
\end{figure}

\subsection{Compression and Learning on arbitrary graphs}
The proposed filterbanks are 
useful in analyzing and compressing 
signals defined on arbitrary graphs. 
As a proof of concept,
we implement proposed graphBior 
filterbanks on the 
Minnesota traffic graph 
used in~\cite{SunilTSP}.
The graph is shown in 
Figure~\ref{fig:minnesota_inp}(a), and the graph signal to be analyzed is 
shown in 
Figure~\ref{fig:minnesota_inp}(b), where the color of a node 
represents the signal value at that node. The graph
is perfectly $3$-colorable and hence, it can be decomposed using Harary's decomposition~\cite{SunilTSP}
into $\lceil log_2(3) \rceil = 2$ bipartite subgraphs,
which are shown in 
Figure~\ref{fig:minnesota_inp}(c-d),
and a $2$-dimensional graphBior 
filterbank given in~Figure \ref{fig:filterbank_imp2} with filterlength $= 10$
is implemented on the graph.  
\begin{figure}[htb]
\begin{center}
\subfigure[]{
   \includegraphics[width = 2.1in] {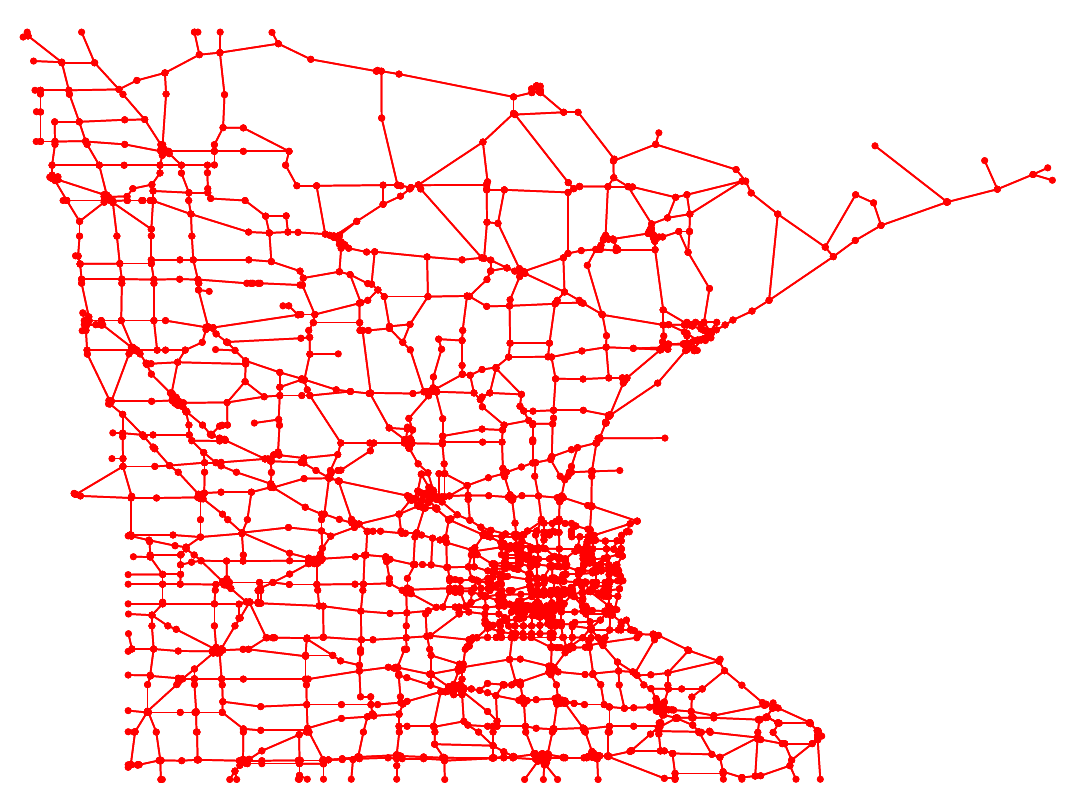}
   \label{fig:graph}
 }
\subfigure[]{
   \includegraphics[width = 2.1in] {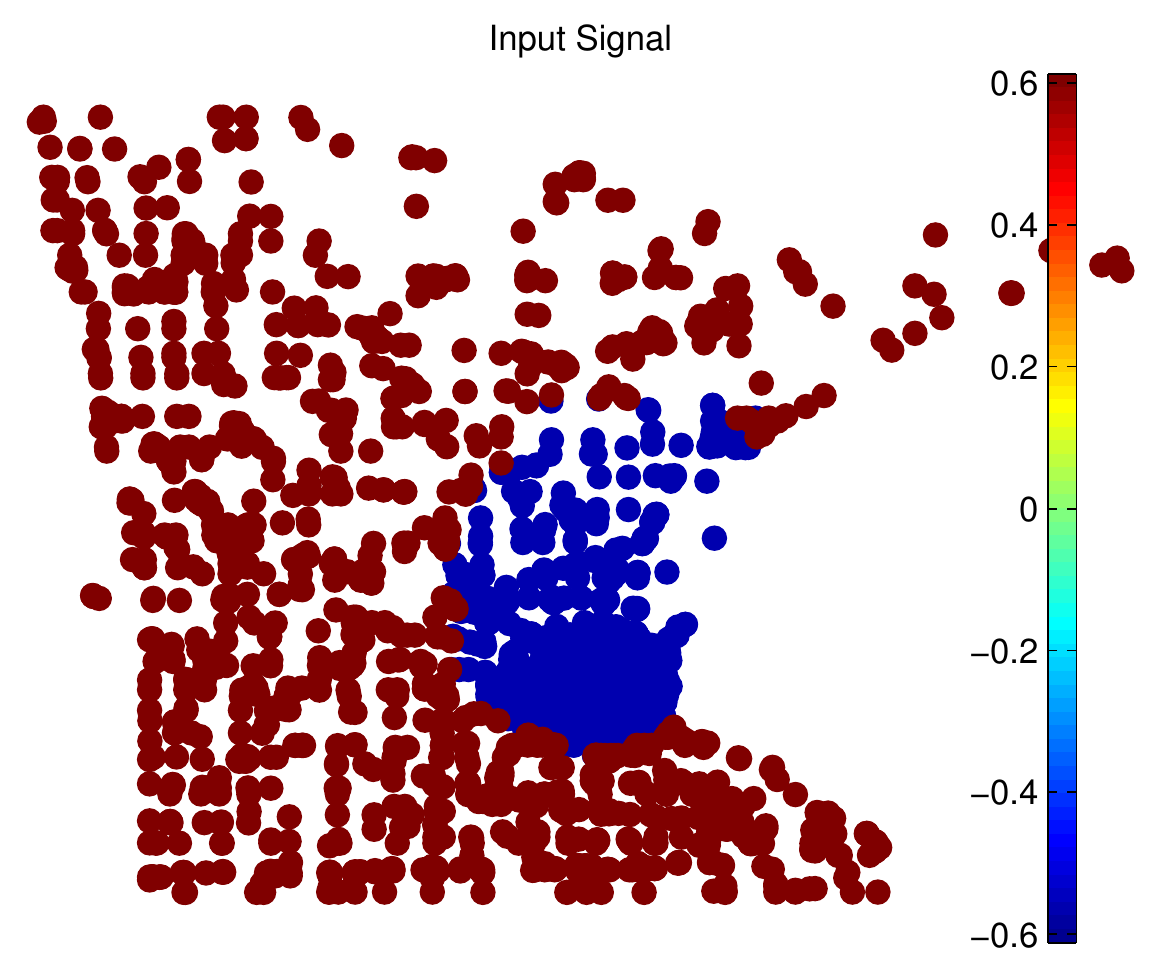}
   \label{fig:signal}
 }\\
\subfigure[]{
   \includegraphics[width = 2.1in] {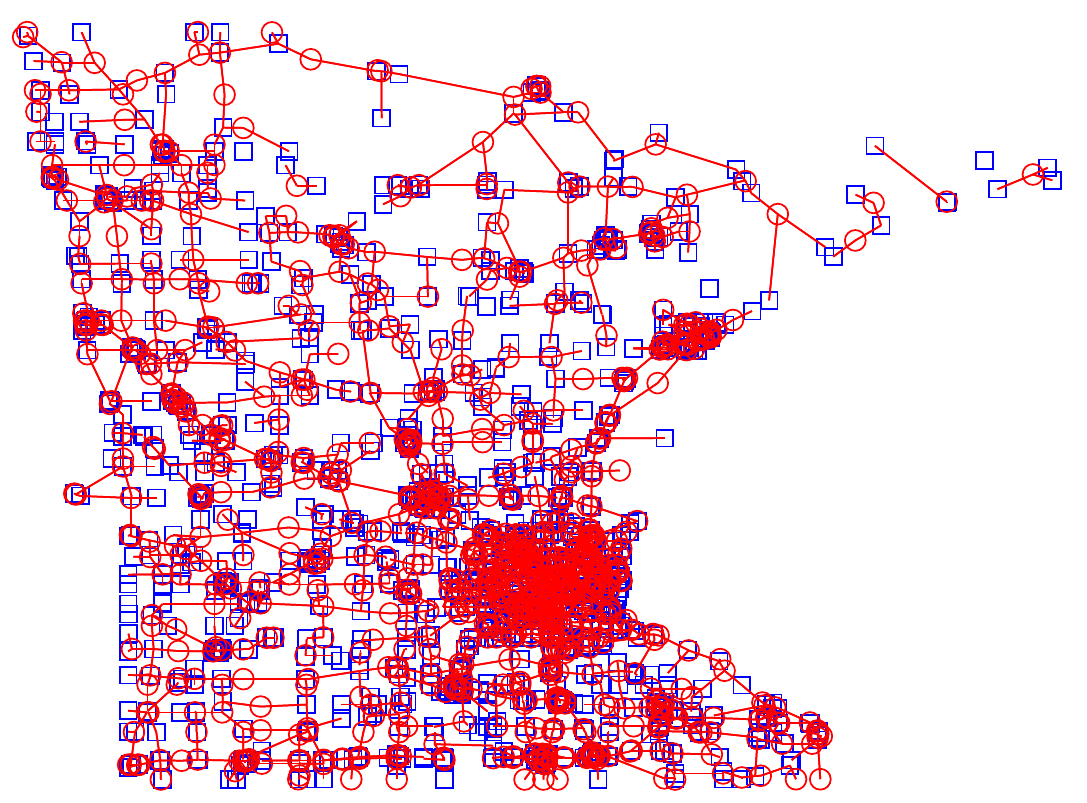}
   \label{fig:bpt1}
 }
\subfigure[]{
   \includegraphics[width = 2.1in] {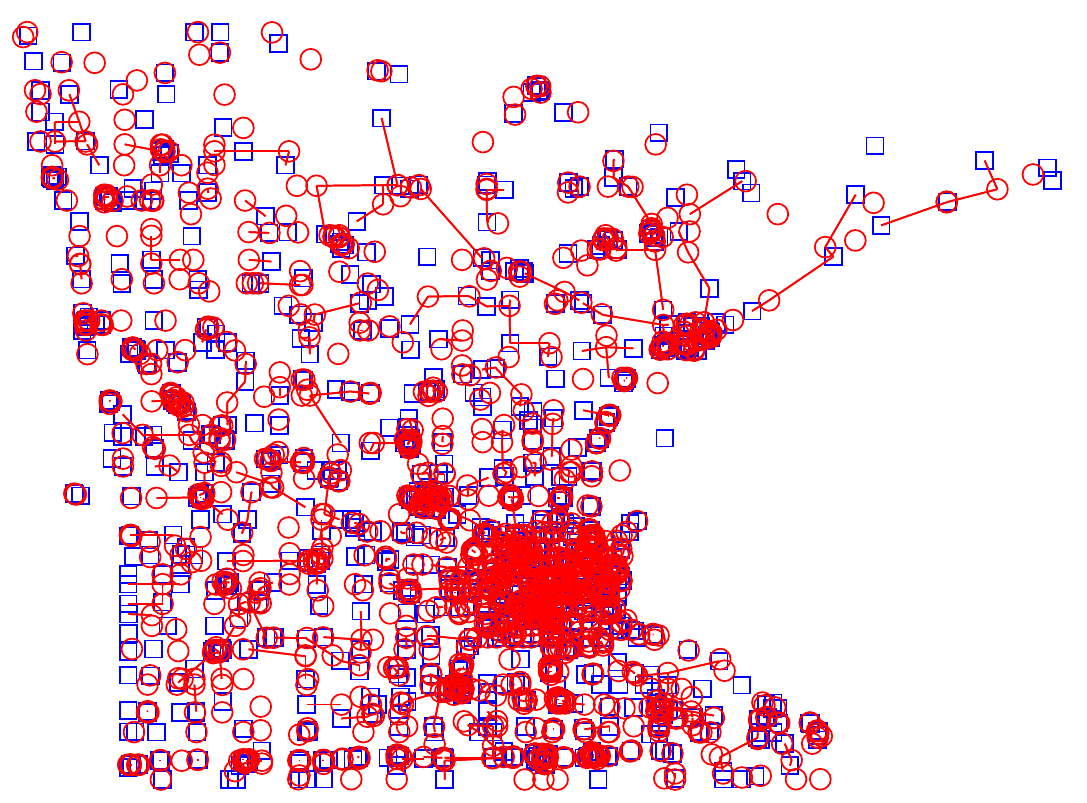}
   \label{fig:bpt2}
 }
\caption{(a) The Minnesota traffic graph $G$, and (b)  the
graph-signal to be analyzed. The colors of the nodes represent the sample values. 
(c)(d) bipartite decomposition of $G$ into two bipartite subgraphs using Harary's decomposition.}
\label{fig:minnesota_inp}
\end{center}
\end{figure}

The output in the $4$ channels can be interpreted as follows: 
the $LL$ channel providing a smooth 
approximation of the original signal on a subset of nodes, 
and the remaining channels providing details required 
for perfect reconstruction. 
Moreover, the total number of outputs in 
all channels is equal to the total 
number of input samples, hence 
the transform is critically sampled. The $HL$ channel does not sample any 
output and is empty. 
The output coefficients of both zeroDC and nonzeroDC 
filterbanks in $LL,LH$ and $HH$ channels 
are shown in Figure~\ref{fig:minnesota_wav}. 
Note that 
the graph-signal is 
piece-wise constant, 
hence the 
proposed zeroDC filterbanks (bottom row in Figure~\ref{fig:minnesota_wav})
provide a 
sparser approximation than the 
nonzeroDC filterbanks (top row in Figure~\ref{fig:minnesota_wav}).
\begin{figure}[htb]
\begin{center}
 \subfigure[]{
   \includegraphics[width = 2in] {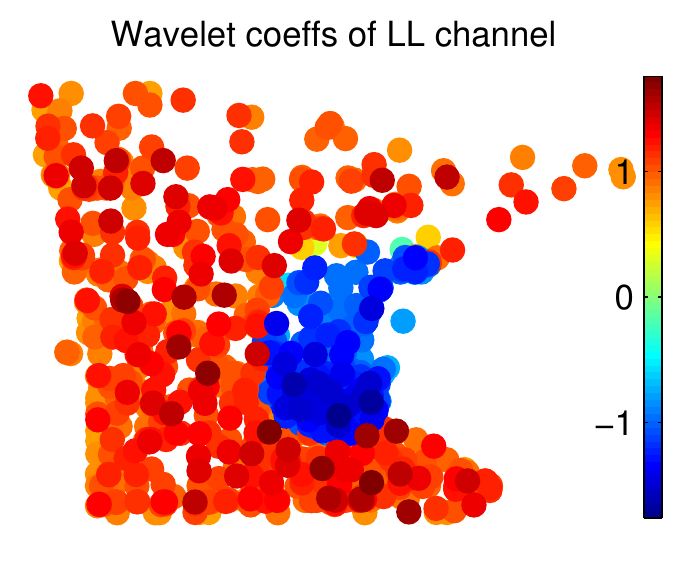}
   \label{fig:LL_sym}
 }
\subfigure[]{
   \includegraphics[width = 2in] {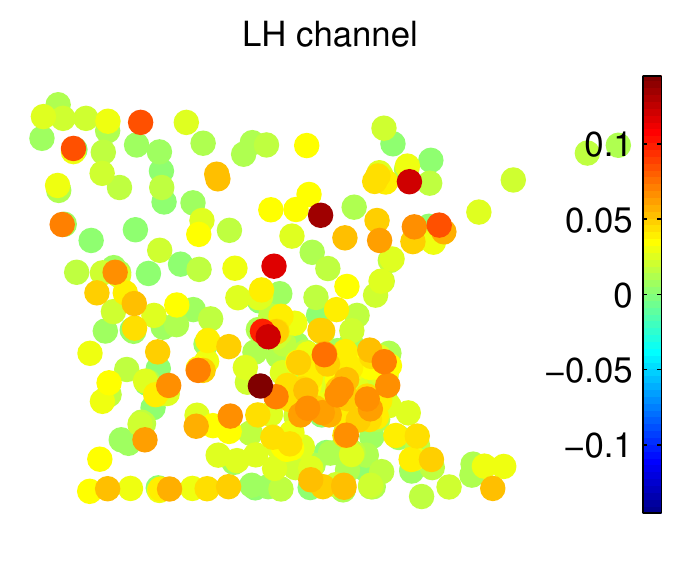}
   \label{fig:LH_sym}
 }
\subfigure[]{
   \includegraphics[width = 2in] {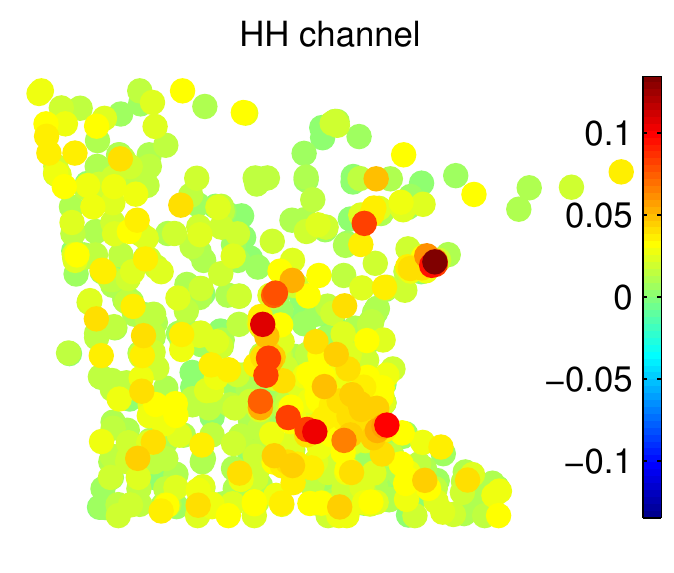}
   \label{fig:HH_sym}
 }\\
\subfigure[]{
   \includegraphics[width = 2in] {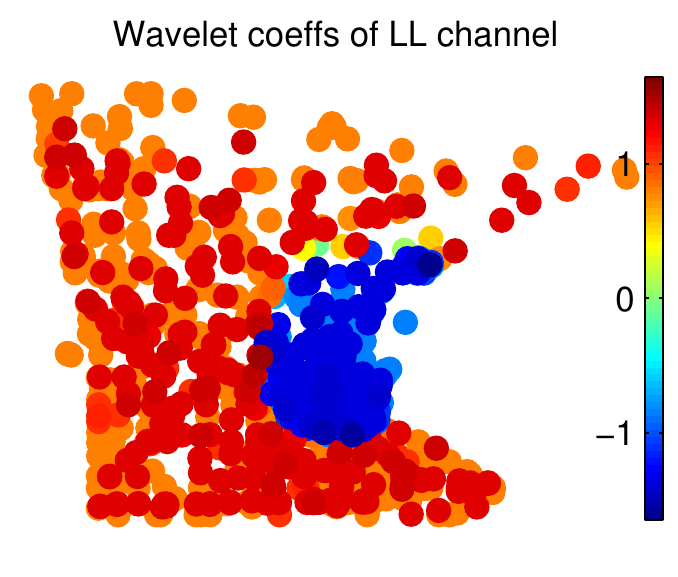}
   \label{fig:LL_asym}
 }
\subfigure[]{
   \includegraphics[width = 2in] {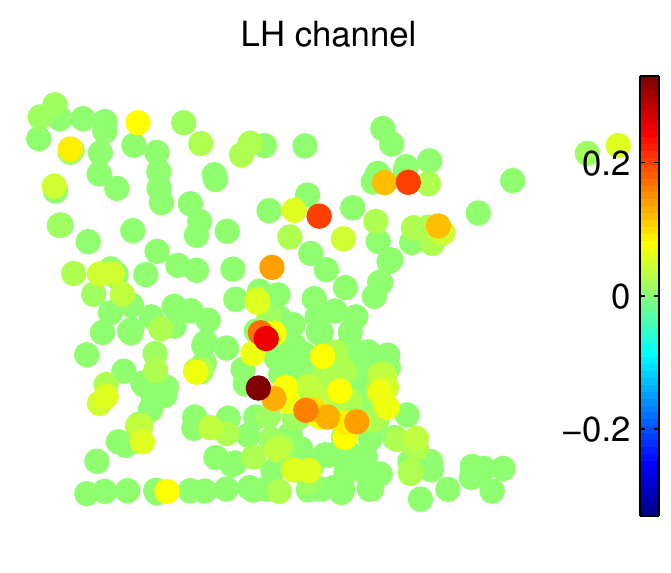}
   \label{fig:LH_asym}
 }
\subfigure[]{
   \includegraphics[width = 2in] {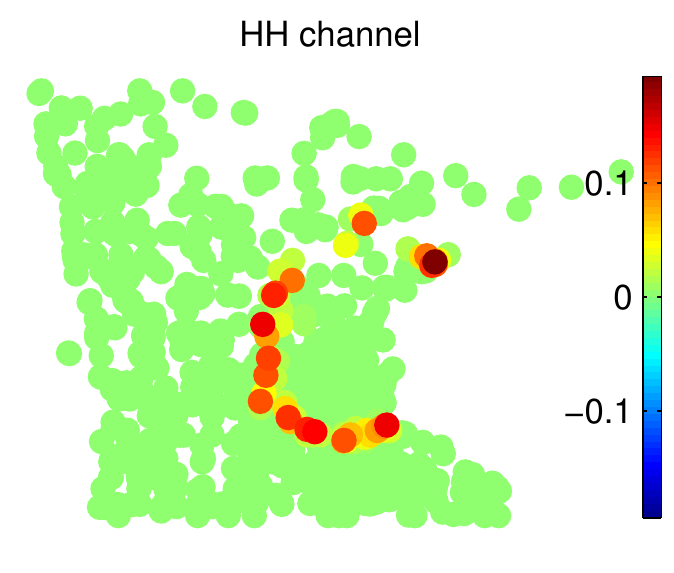}
   \label{fig:HH_asym}
 }

\caption{output coefficients 
of the graphBior filterbanks with parameter $(k_0,k_1) = (7,7)$. 
The node-color reflects the value of the coefficients at that point.  
Top-row: wavelet coefficients of nonzeroDC graphBior, 
bottom-row: wavelet coefficients of zeroDC graphBior, }
\label{fig:minnesota_wav}
\end{center}
\end{figure}
%
As a result, the non-linear approximation of the graph signal with only 
$1\%$ highpass coefficients (and all low pass coefficients) provide better SNR 
when using zeroDC filterbanks than when using nonzeroDC filterbanks as shown in 
Figure~\ref{fig:minnesota_outp}. 
\begin{figure}[htb]
\begin{center}
\subfigure[]{
   \includegraphics[width = 2.5in] {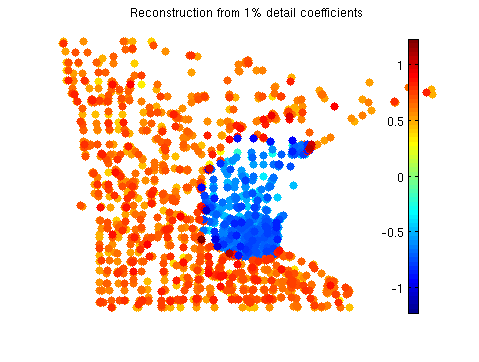}
   \label{fig:nonorm_reconst}
 }
\subfigure[]{
   \includegraphics[width = 2.5in] {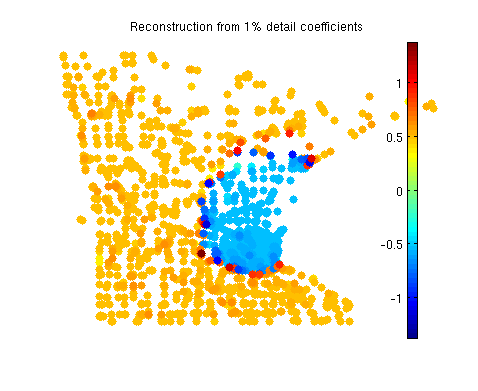}
   \label{fig:deg_norm_reconst}
 }
\caption{ Reconstructed graph-signals from  all coefficients of $LL$ channel and top $1\%$ (in magnitude) wavelet coefficients form other channels.
(a) nonzeroDC graphBior (SNR $15.50$ dB) (b)  zeroDC graphBior (SNR $36.24$ dB).}
\label{fig:minnesota_outp}
\end{center}
\end{figure}

\section{Conclusions}
\label{sec:conclusion}
In this paper we have presented novel graph-wavelet filterbanks that
provide a critically sampled representation with compactly supported
basis functions. The filterbanks come in two flavors: a) nonzeroDC filterbanks, and b) zeroDC filterbanks. 
The former filterbanks are designed as polynomials of the
normalized graph Laplacian matrix, and the latter filterbanks are 
extensions of the former to provide a
zero response by the highpass operators. Preliminary results showed that the filterbanks 
are useful not only for arbitrary graph but 
also to the standard regular signal processing domains. 
Extensions of this work will focus on the application
of these filters to different scenarios, including, 
for example,
social network analysis, sensor networks etc. 

\bibliographystyle{IEEEbib}
\bibliography{refs}

\begin{thebibliography}{10}

\bibitem{Hammond'09}
David~K. Hammond, Pierre Vandergheynst, and R{\'e}mi Gribonval,
\newblock ``Wavelets on graphs via spectral graph theory,''
\newblock {\em Applied and Computational Harmonic Analysis}, vol. 30, no. 2,
  pp. 129--150, Mar 2011.

\bibitem{SunilTSP}
S.K. Narang and Ortega A.,
\newblock ``Perfect reconstruction two-channel wavelet filter-banks for graph
  structured data,''
\newblock {\em IEEE trans. on Sig. Proc.}, vol. 60, no. 6, June 2012.

\bibitem{agaskar_icassp}
A.~Agaskar and Y.~M. Lu,
\newblock ``Uncertainty principles for signals defined on graphs: {Bounds} and
  characterizations,''
\newblock in {\em ICASSP}, Kyoto, Japan, Mar. 2012, pp. 3493--3496.

\bibitem{Crovella'03}
M.~Crovella and E.~Kolaczyk,
\newblock ``Graph wavelets for spatial traffic analysis,''
\newblock in {\em INFOCOM 2003}, Mar 2003, vol.~3, pp. 1848--1857.

\bibitem{Coifman'06}
R.~Coifman and M.~Maggioni,
\newblock ``{Diffusion wavelets},''
\newblock {\em Applied and Computational Harmonic Analysis}, vol. 21, pp.
  53--94, 2006.

\bibitem{Maggioni_biorthogonal}
M~Maggioni, J.~C. Bremer, R.~R. Coifman, and A.~D. Szlam,
\newblock ``Biorthogonal diffusion wavelets for multiscale representations on
  manifolds and graphs,''
\newblock in {\em Proc. {SPIE} Wavelet XI}, Sep. 2005, vol. 5914.

\bibitem{bremer_packets}
J.~C. Bremer, R.~R. Coifman, M.~Maggioni, and A.~D. Szlam,
\newblock ``Diffusion wavelet packets,''
\newblock {\em Appl. Comput. Harmon. Anal.}, vol. 21, no. 1, pp. 95--112, 2006.

\bibitem{szlam}
A.~D. {Szlam}, M.~{Maggioni}, R.~R. {Coifman}, and J.~C. {Bremer}, Jr.,
\newblock ``{Diffusion-driven multiscale analysis on manifolds and graphs:
  top-down and bottom-up constructions},''
\newblock in {\em Proc. {SPIE} Wavelets}, Aug. 2005, vol. 5914, pp. 445--455.

\bibitem{Ramchandran'06}
W.~Wang and K.~Ramchandran,
\newblock ``Random multiresolution representations for arbitrary sensor network
  graphs,''
\newblock in {\em ICASSP}, May 2006, vol.~4, pp. IV--IV.

\bibitem{GodwinJ}
G.~Shen and A.~Ortega,
\newblock ``Transform-based distributed data gathering,''
\newblock {\em Sig. Proc., IEEE Trans. on}, vol. 58, no. 7, pp. 3802 --3815,
  july 2010.

\bibitem{Silverman}
M.~Jansen, G.~P. Nason, and B.~W. Silverman,
\newblock ``Multiscale methods for data on graphs and irregular
  multidimensional situations,''
\newblock {\em Journal of the Royal Statistical Society}, vol. 71, no. 1, pp.
  97–125, 2009.

\bibitem{narang_lifting_graphs}
S.~K. Narang and A.~Ortega,
\newblock ``Lifting based wavelet transforms on graphs,''
\newblock {\em (APSIPA ASC' 09)}, Oct. 2009.

\bibitem{gavish}
M.~Gavish, B.~Nadler, and R.~R. Coifman,
\newblock ``Multiscale wavelets on trees, graphs and high dimensional data:
  {T}heory and applications to semi supervised learning,''
\newblock in {\em Proc. Int. Conf. Mach. Learn.}, Haifa, Israel, Jun. 2010, pp.
  367--374.

\bibitem{CDF9}
A.~Cohen, I.~Daubechies, and J.-C. Feauveau,
\newblock ``Biorthogonal bases of compactly supported wavelets,''
\newblock {\em Communications on Pure and Applied Mathematics}, vol. 45, no. 5,
  pp. 485--560, 1992.

\bibitem{eigenvaluespacings}
D.~Jakobson, S.~D. Miller, I.~Rivin, and Z.~Rudnick,
\newblock ``Eigenvalue spacings for regular graphs,''
\newblock in {\em IN IMA VOL. MATH. APPL}. 1999, pp. 317--327, Springer.

\bibitem{Vetterli_book}
M.~Vetterli and J.~Kova\v{c}evic,
\newblock {\em Wavelets and subband coding},
\newblock Prentice-Hall, Inc., NJ, USA, 1995.

\bibitem{JPEG2000Compensate}
M.~D. Adams and R.~Ward,
\newblock ``Wavelet transforms in the {JPEG}-2000 standard,''
\newblock in {\em In Proc. of IEEE PacRim}, 2001, pp. 160--163.

\bibitem{Mihail'02}
M.~Mihail and C.Papadimitriou,
\newblock ``On the eigenvalue power law,''
\newblock in {\em RANDOM 2002}, Sep 2002, pp. 254--262.

\bibitem{ICASSP12Sunil}
S.K. Narang and A.~Ortega,
\newblock ``Multi-dimensional separable critically sampled wavelet filterbanks
  on arbitrary graphs,''
\newblock in {\em in ICASSP'12}, Mar 2012.

\bibitem{ron}
D.~Ron, I.~Safro, and A.~Brandt,
\newblock ``Relaxation-based coarsening and multiscale graph organization,''
\newblock {\em Multiscale Model. Simul.}, vol. 9, no. 1, pp. 407--423, Sep.
  2011.

\bibitem{gp_archive}
C.~Walshaw,
\newblock ``The graph partitioning archive,''
\newblock http://staffweb.cms.gre.ac.uk/$\sim$wc06/partition/.

\bibitem{pesenson_paley}
I.~Pesenson,
\newblock ``Sampling in {Paley-Wiener} spaces on combinatorial graphs,''
\newblock {\em Trans. Amer. Math. Soc}, vol. 360, no. 10, pp. 5603--5627, 2008.

\bibitem{SSP'12}
S.~K. Narang, Y.~H. Chao, and A.~Ortega,
\newblock ``Graph-wavelet filterbanks for edge-aware image processing,''
\newblock {\em IEEE SSP Workshop}, pp. 141--144, Aug. 2012.

\bibitem{shen2010edge}
G.~Shen, W.S. Kim, S.K. Narang, A.~Ortega, J.~Lee, and H.C. Wey,
\newblock ``Edge-adaptive transforms for efficient depth map coding,''
\newblock in {\em Picture Coding Symposium (PCS), 2010}, Dec 2010.

\bibitem{wooshik}
W.S. Kim, S.K. Narang, and A.~Ortega,
\newblock ``Graph based transforms for depth video coding,''
\newblock in {\em in ICASSP'12}, Mar 2012.

\bibitem{SSIM}
H.~R.~Sheikh Z.~Wang, A. C.~Bovik and E.~P. Simoncelli,
\newblock ``Image quality assessment: From error visibility to structural
  similarity,''
\newblock {\em IEEE Trans. on Image Proc.}, vol. 13, no. 4, 2004.

\end{thebibliography}
\end{document}